\documentclass[11pt]{article}
\usepackage[T1]{fontenc}
\usepackage[letterpaper, margin=1in]{geometry}
\usepackage{graphicx} % Required for inserting images
\usepackage{amsmath}
\usepackage{amssymb}
\usepackage{amsthm}
\usepackage{mathabx,mathtools}
\usepackage{bbm}
\usepackage[colorlinks=true, citecolor=blue, urlcolor=blue, pagebackref=true]{hyperref}

\usepackage{xcolor}
\usepackage{stmaryrd}
\usepackage{subcaption}
\usepackage{paralist}
\usepackage{booktabs}
\usepackage{enumitem}
\usepackage{undertilde}
\usepackage{cleveref}

\newtheorem{theorem}{Theorem}[section]
\newtheorem{lemma}{Lemma}[section]

\newtheorem{corollary}{Corollary}[section]
\newtheorem{proposition}{Proposition}[section]

\theoremstyle{definition}
\newtheorem{definition}{Definition}[section]
\newtheorem{problem}{Problem}[section]
\newtheorem{remark}{Remark}[section]
\newtheorem{conjecture}{Conjecture}[section]

\newcommand{\ov}{\overline}

\DeclareMathOperator*{\bbE}{\mathbb{E}}
\newcommand{\bbF}{\mathbb{F}}

\newcommand{\bbR}{\mathbb{R}}
\newcommand{\bbZ}{\mathbb{Z}}

\newcommand{\bbN}{\mathbb{N}}

\newcommand{\calR}{\mathcal{R}}
\newcommand{\calX}{\mathcal{X}}

\newcommand{\degen}{\trianglelefteq}

\newcommand{\ang}[1]{\langle #1 \rangle}
\newcommand{\iv}[1]{\llbracket #1 \rrbracket}

\newcommand{\Sym}{\mathfrak{S}}
\newcommand{\specMM}{\Delta_{\mathrm{MM}}}
\newcommand{\RMM}{\calR_{\mathrm{MM}}}

\newcommand{\axref}[2]{\hyperref[ax:#1]{#2}}
\newcommand{\axitem}[2]{\item[\textbf{#2}]\phantomsection\label{ax:#1}}

\newcommand{\cw}{\mathrm{cw}}
\newcommand{\CW}{\mathrm{CW}}

\newcommand{\sfT}{\mathsf{T}}

\renewcommand{\setminus}{\smallsetminus}

\newcommand{\dif}{\mathop{}\!\mathrm{d}}

\DeclareMathOperator{\id}{id}

\DeclareMathOperator{\Span}{span}

\DeclareMathOperator{\Rk}{R}
\DeclareMathOperator{\BR}{\underline{R}}
\DeclareMathOperator{\AR}{\utilde{\mathrm{R}}}
\DeclareMathOperator{\AQ}{\utilde{\mathrm{Q}}}
\DeclareMathOperator{\Q}{Q}
\DeclareMathOperator{\BQ}{\underline Q}

\newcommand{\email}[1]{\href{mailto:#1}{\texttt{#1}}}

\setlist[enumerate, 1]{label = {(\arabic*)}}

\numberwithin{equation}{section}

\newcommand{\myparagraph}[1]{\paragraph{#1}}

\title{Asymptotic Rank Speedup Theorems, Revisited}%: The Small Coppersmith-Winograd Tensor and Beyond}
\author{Josh Alman\thanks{Columbia University. \email{josh@cs.columbia.edu}. Supported in part by NSF Grant CCF-2238221 and a Packard Foundation Fellowship.} \and
    Baitian Li\thanks{Columbia University. \email{bl3052@columbia.edu}. Supported in part by NSF Grant CCF-2238221, a Packard Foundation Fellowship, and a Columbia SEAS Presidential Fellowship.}}
\date{}

\begin{document}

\maketitle

\begin{abstract}
    Motivated by fast matrix multiplication and recent connections between asymptotic tensor rank and fine-grained complexity, we revisit  classical tools from the matrix multiplication literature and develop a framework for obtaining improved asymptotic rank upper bounds for tensors beyond matrix multiplication.

    In the 1980s, Coppersmith--Winograd and Strassen discovered a series of \emph{speedup theorems} for asymptotic rank: in certain regimes, one can extract additional terms from a border rank upper bound on a tensor $T$, and then use these terms to obtain an improved asymptotic rank of $T$. We establish general speedup theorems that subsume these results and enable quantitative improvements. Two representative applications are:
    \begin{enumerate}
        \item The asymptotic rank of the small Coppersmith--Winograd tensor $\cw_q$ is less than its border rank. For instance, we prove $\AR(\cw_2) < 3.931$, improving on $\BR(\cw_2)=4$. It is known that $\AR(\cw_2)=3$ would imply $\omega=2$.
        \item A general improvement over Strassen's bound: we obtain an upper bound below $d^{2\omega/3}$ on the asymptotic rank of any $d\times d\times d$ tensor.
    \end{enumerate}
    To make full use of speedups, we analyze degenerations in which both sides are nontrivial direct sums, a setting where the optimal quantitative bound one can achieve was previously unclear. We do so via an approach we call \emph{Strassen calculus}: a systematic method for converting such degeneration data into explicit asymptotic rank bounds using Strassen's theory of the asymptotic spectrum.
\end{abstract}

\section{Introduction}

The exponent of matrix multiplication, $\omega$, is defined as the smallest real number such that two $n \times n$ matrices can be multiplied using at most $n^{\omega + o(1)}$ field operations. Because of the fundamental and widespread applications of matrix multiplication algorithms, determining the value of $\omega$ has become one of the most important open problems in computer science. Strassen's famous algorithm~\cite{strassen1969gaussian} showed the first nontrivial bound $\omega \leq \log_2(7) < 2.81$. Since then, a long line of follow-up work has culminated in the current best bound $\omega < 2.371339$~\cite{ADWXXZ25MoreAsym}. The only known lower bound is the straightforward $\omega \geq 2$, since the input and output sizes are $n^2$. In fact, it is frequently conjectured that $\omega = 2$.

It has been known since Strassen's work~\cite{strassen1973vermeidung} that determining $\omega$ is equivalent to understanding the \emph{asymptotic rank} of the matrix multiplication tensor. In fact, there are three different rank notions for a (3-mode/3-dimensional) tensor $T$ which have played a prominent role in the study of matrix multiplication algorithms: the rank $\Rk(T)$, the \emph{border rank} $\BR(T)$ (roughly, the smallest $r$ such that $T$ is the limit of a sequence of tensors of rank $r$), and the asymptotic rank $\AR(T)$, where one takes the ranks of large tensor powers of $T$: 
\[ \AR(T) := \lim_{n \to \infty} (\Rk(T^{\otimes n})^{1/n}). \]

For instance, Strassen's algorithm~\cite{strassen1969gaussian} showed that the rank of the tensor for multiplying $2 \times 2$ matrices is at most 7. More generally, because of the recursive structure of matrix multiplication under tensor products, to upper bound $\omega$ it suffices to bound the asymptotic rank of matrix multiplication tensors. It also suffices to bound the border rank because of a connection due to Bini~\cite{bini1980relations} which showed that, for any tensor $T$, we have
\[ \AR(T) \leq \BR(T) \leq \Rk(T). \]

More recently, the asymptotic rank of tensors has found other algorithmic applications beyond matrix multiplication. Bj{\"o}rklund and Kaski~\cite{bjorklund2024asymptotic} and Pratt~\cite{Pratt24SCC} showed that improved asymptotic rank bounds for certain tensors would give a faster algorithm for the \emph{Set Cover} problem, which is conjectured in fine-grained complexity theory to be impossible. Follow-up work showed similar connections for other algorithmic problems including computing the chromatic number of a graph~\cite{bjorklund2025fast} and computing the permanent of a matrix~\cite{bjorklund2025kronecker}.
Many exponential and parameterized algorithms rely on a (generalized) convolution problem, the complexity of which is determined by the asymptotic rank of the corresponding tensor \cite{BCLP26Conv}.

Amazingly, Strassen conjectured that \emph{every} tensor has the minimum possible asymptotic rank given its dimensions (as long as it is ``tight'', a technical condition satisfied by most tensors of interest):

\begin{conjecture}[Asymptotic Rank Conjecture~\cite{strassen1994algebra}]
Every $d \times d \times d$ tensor $T$ which is concise and tight has $\AR(T) = d$.
\end{conjecture}
This would imply $\omega=2$ and lead to breakthrough algorithms for all the aforementioned problems.

Unfortunately, despite the importance of asymptotic rank, we don't know many techniques for bounding the asymptotic rank of a tensor $T$ beyond bounding the border rank of $T$ or a small tensor power of $T$. Even this approach, which often gives suboptimal bounds, is difficult to use, as bounding the rank or border rank of fairly small tensors has proven to be quite challenging. For example, it remains an open problem to determine the rank or border rank of $\langle 3,3,3 \rangle$, the tensor for multiplying $3 \times 3$ matrices (which is a $9 \times 9 \times 9$ tensor).

Beyond this, the two most successful techniques for bounding asymptotic rank for the purposes of bounding $\omega$ have been the laser method (a combinatorial technique for restricting tensor powers of structured tensors into direct sums of matrix multiplication tensors), and the \emph{asymptotic sum inequality} (a.k.a.~\emph{Sch\"onhage's $\tau$ theorem}) for taking advantage of such direct sums:

\begin{theorem}[Asymptotic Sum Inequality~\cite{Sch81Tau}]
    Suppose a direct sum of matrix multiplication tensors has a border rank bound
\[ \BR \left( \bigoplus_{i=1}^\ell \ang{n_i, m_i, p_i} \right) \leq r, \]
and that this is nontrivial (i.e., $n_im_ip_i > 1$ for at least one $i$). Let $\tau$ be the unique solution of the transcendental equation
\[ \sum_{i=1}^\ell (n_i m_i p_i)^\tau = r. \]
Then, $\omega \leq 3\tau$.
\end{theorem}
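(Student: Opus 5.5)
The plan is the classical argument: raise the given degeneration to a high tensor power, regroup the resulting sum by multinomial type, and then bootstrap by using the sum itself to simulate many copies of a single matrix multiplication tensor.

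\textbf{Step 1: powers and regrouping.} Write $T=\bigoplus_{i=1}^\ell \ang{n_i,m_i,p_i}$ with $\BR(T)\le r$. Border rank is submultiplicative under tensor products, so $\BR(T^{\otimes s})\le r^s$. Since $\ang{a,b,c}\otimes\ang{a',b',c'}=\ang{aa',bb',cc'}$, expanding $T^{\otimes s}$ and grouping terms by type $\mu=(\mu_1,\dots,\mu_\ell)$ with $\sum\mu_i=s$ gives
\[
T^{\otimes s}=\bigoplus_{\mu}\binom{s}{\mu}\odot\ang{N_\mu,M_\mu,P_\mu},
\qquad N_\mu=\prod_i n_i^{\mu_i},
\]
and similarly for $M_\mu,P_\mu$. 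Restricting to a single type $\mu$ yields $\BR\bigl(\binom{s}{\mu}\odot\ang{N_\mu,M_\mu,P_\mu}\bigr)\le r^s$.

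\textbf{Step 2: symmetrize.} Tensor this restricted sum with its two cyclic rotations. This gives $c\odot\ang{K,K,K}$ with $c=\binom{s}{\mu}^3$, $K=(N_\mu M_\mu P_\mu)^{\mu}$-type product equal to $\prod_i (n_im_ip_i)^{\mu_i}$, and $\BR\le r^{3s}$. To avoid tracking the approximation degree, I would work throughout with asymptotic rank, using Bini's inequality $\AR\le\BR$ and the sub-multiplicativity of $\AR$.

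\textbf{Step 3: the core lemma (main obstacle).} The key claim is: if $\AR(c\odot\ang{K,K,K})\le R$, then $c\,K^{\omega}\le R$. The route is to bound $\Rk(\ang{K^t,K^t,K^t})$ recursively by about $\lceil R^t/c\rceil\cdot(\text{cost})$. Let $r_t=\lceil R^t/c^t\rceil$ up to subexponential factors. Then $r_t\odot(c^t\odot\ang{K^t,K^t,K^t})$ has rank at most about $r_t R^t$; here I would use that $\Rk(m\odot S)\le m\cdot \Rk$ rounded suitably, together with $\Rk((c\odot\ang{K,K,K})^{\otimes t})\le R^{t+o(t)}$.

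Iterating gives $\ang{K^{tj},K^{tj},K^{tj}}$ with rank at most $(R^t/c^t)^{j}\cdot R^{t+o(t)}$, which yields $K^{\omega}\le (R/c)^{1+o(1)}\cdot\text{poly}$. Equivalently, one can invoke $\AR(\ang{K,K,K})\le \AR(c\odot\ang{K,K,K})/c$, using that $\AR(c\odot S)=c\,\AR(S)$ asymptotically via the standard "$m$ copies simulate $\ang{m',m',m'}$" trick. Getting the rounding and $o(t)$ terms right is the delicate part. Applying the lemma to Step 2 gives
\[
\binom{s}{\mu}^3\Bigl(\prod_i (n_im_ip_i)^{\mu_i}\Bigr)^{\omega}\le r^{3s},
\qquad\text{that is,}\qquad
\binom{s}{\mu}\Bigl(\prod_i (n_im_ip_i)^{\mu_i}\Bigr)^{\omega/3}\le r^{s}.
\]

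\textbf{Step 4: optimize.} Choose $\mu$ maximizing the left-hand side. Since there are only $\mathrm{poly}(s)$ types, the maximal term is at least $\mathrm{poly}(s)^{-1}\bigl(\sum_i (n_im_ip_i)^{\omega/3}\bigr)^{s}$ by the multinomial theorem. Taking $s$-th roots and letting $s\to\infty$ gives
\[
\sum_i (n_im_ip_i)^{\omega/3}\le r .
\]
Nontriviality makes $\tau\mapsto\sum_i (n_im_ip_i)^{\tau}$ strictly increasing, so $\tau$ is unique and $\omega/3\le\tau$, i.e.\ $\omega\le 3\tau$.
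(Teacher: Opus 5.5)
Your proposal is correct, but it follows the classical Sch\"onhage route rather than the paper's. The paper does no powering, regrouping or batching. It applies an arbitrary spectral point $\phi\in\calX$ with parameters $(\theta_1,\theta_2,\theta_3)$ to the degeneration, which gives $F(\theta)=\sum_i n_i^{\theta_1}m_i^{\theta_2}p_i^{\theta_3}\le r$. Mode permutations give the same bound at every permuted point, and Jensen (convexity of $F$) then moves the bound to the diagonal point $(\varpi/3,\varpi/3,\varpi/3)$. Finally it uses a spectral point with $\varpi=\theta_1+\theta_2+\theta_3=\omega$, which exists by the max formula in Strassen duality.

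Each of your steps is a primal counterpart of one of these:
\begin{itemize}
\item the type decomposition with the multinomial maximum replaces additivity of $\phi$;
\item tensoring with the two cyclic rotations, combined with the type selection, plays the role of symmetrization plus Jensen (in effect it proves the geometric-mean bound $F(\bar\theta)\le\bigl(\prod_{\sigma}F(\theta_\sigma)\bigr)^{1/3}$ over the cyclic group, so the full $\Sym_3$ is not needed);
\item your Step~3 lemma is the primal form of $\phi(\ang{c}\otimes S)=c\,\phi(S)$.
\end{itemize}
The paper's route is short and extends immediately to right-hand sides that are themselves sums of matrix multiplication tensors (the regions $\Delta$ in its appendix), but it rests on the deep, nonconstructive duality theorem. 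Yours is elementary and in principle constructive, at the cost of bookkeeping.

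That bookkeeping is where your write-up is loosest. The sketch of Step~3 never isolates the batching step that makes the lemma work. A clean version runs as follows.
\begin{itemize}
\item Suppose $\Rk(\ang{c'}\otimes\ang{K',K',K'})\le R'$ and set $a=\lceil R'/c'\rceil$. Then $\ang{R'}\le\ang{a}\otimes\ang{c'}$.
\item Feeding the $R'$ products back in $a$ batches of $c'$ gives, by induction on $j$, $\ang{c'}\otimes\ang{K'^j,K'^j,K'^j}\le\ang{a^j}\otimes\ang{c'}$. Hence $K'^{j\omega}\le a^j c'$, and taking $j$-th roots gives $K'^{\omega}\le a$.
\item Apply this with $c'=c^t$, $K'=K^t$ and $R'=\Rk\bigl((c\odot\ang{K,K,K})^{\otimes t}\bigr)\le R^{t+o(t)}$, and let $t\to\infty$. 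This removes both the ceiling and the $o(t)$ loss and gives $cK^\omega\le R$.
\end{itemize}
With that step written out, the rest of your argument goes through as stated.
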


It was then shown in follow-up work of Coppersmith and Winograd \cite{CW82Improve} that the asymptotic sum inequality is \emph{never tight}:
One always has the strict inequality $\omega < 3\tau$! Their proof is constructive, in the sense that they found a systematic way to
convert a given direct sum construction $\BR(T)\leq r$ into another one $\BR(T') \leq r'$, which yields a \emph{more effective} bound on the matrix multiplication exponent $\omega$.

This type of result is reminiscent of Blum's classical \emph{speedup} theorem for Turing machines \cite{Blum67Speedup}, and so we refer to such results as speedup theorems in this paper. That said, Blum's speedup theorem concerns non-constructible time functions, whereas the tensor speedup theorems we discuss here concern explicit problems of great interest like matrix multiplication.

\myparagraph{Our approach.}

In this paper, we revisit the approach of bounding the asymptotic ranks of tensors via speedup theorems. We establish a generalization of prior speedup theorems, and analyze them using a method we call \emph{Strassen calculus} which makes use of Strassen's theory of the asymptotic spectrum, in order to give new quantitative asymptotic rank bounds. We will discuss this in much more detail in Section~\ref{sec:techniqueoverview} below, but we first present some applications of our approach.

\myparagraph{Small Coppersmith--Winograd tensor.}

As a main application, we give the first nontrivial upper bound on the asymptotic rank of the \emph{small Coppersmith--Winograd tensor}, $\cw_q$. This is, for any integer $q \geq 2$, the $(q+1) \times (q+1) \times (q+1)$ tensor given by
\[ \cw_q = \sum_{i=1}^q x_0 y_i z_i + x_i y_0 z_i + x_i y_i z_0. \] 

The $\cw_q$ tensor was introduced in the celebrated work of
Coppersmith and Winograd~\cite{CW90Laser}. They further developed Strassen's laser method, and applied it first to $\cw_q$ as their easy (warmup) construction \cite[\S 6]{CW90Laser}, then later to the \emph{big Coppersmith--Winograd tensor}, $\CW_q$, the $(q+2) \times (q+2) \times (q+2)$ tensor which adds three ``corner terms'' to $\cw_q$,
\[ \CW_q = x_0 y_0 z_{q+1} + x_0 y_{q+1} z_0 + x_{q+1} y_0 z_0 + \cw_q. \] 

Coppersmith and Winograd showed that both $\cw_q$ and $\CW_q$ have border rank $q+2$ (and thus, asymptotic rank at most $q+2$). By virtue of the extra corner terms, they found $\CW_q$ to be more valuable in their laser method analysis, giving their bound $\omega < 2.376$. The big tensor $\CW_q$ has since been used for all record-holding bounds on $\omega$, including the current bound $\omega < 2.371339$~\cite{ADWXXZ25MoreAsym}.

However, the tensor $\CW_q$ is now known to be subject to a number of \emph{barrier} results~\cite{ambainis2015fast,AW21limits,CVZ21Barrier,alman2021limits} which show that it is \emph{not} valuable enough to prove $\omega=2$. (The exact numerical barrier depends on which techniques one allows to analyze $\CW_q$, but the most general shows that even with the theoretically optimal asymptotic restrictions/degenerations, one cannot achieve a better bound on $\omega$ than $2.16$ using $\CW_q$.) 

There is no known barrier like this to using $\cw_q$. The key difference is in the dimensions of $\cw_q$ compared to $\CW_q$. Since $\CW_q$ is a concise (roughly meaning ``nondegenerate'') tensor of dimensions $(q+2) \times (q+2) \times (q+2)$, it follows that $\AR(\CW_q) \geq q+2$, and so the bound $\AR(\CW_q) \leq q+2$ from~\cite{CW90Laser} is tight. By contrast, for $\cw_q$ there is a gap between the lower bound $\AR(\cw_q) \geq q+1$ from its dimensions, and the upper bound $\AR(\CW_q) \leq q+2$ from~\cite{CW90Laser}. Using their laser method analysis, Coppersmith and Winograd~\cite[\S 11]{CW90Laser} observed that closing this gap could have major implications:

\begin{theorem}[\cite{CW90Laser}]\label{intro:omega2fromcw2}
    If $\AR(\cw_2) = 3$ then $\omega = 2$.
\end{theorem}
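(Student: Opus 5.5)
We follow the Coppersmith--Winograd laser method argument applied to $\cw_q$, with a general rank bound in place of the border rank $q+2$. The plan is to prove the contrapositive-free quantitative statement first: if $\AR(\cw_q) \le r$, then
\[ \omega \le \log_q\!\left(\frac{4 r^3}{27}\right), \]
and then specialize to $q=2$, $r=3$.

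First, to obtain the quantitative statement, take the tensor power $\cw_q^{\otimes 3N}$ and partition the $x$, $y$, $z$ variables into blocks indexed by sequences in $\{0,1\}^{3N}$, according to whether each coordinate index is $0$ or nonzero. Each constituent $x_0y_iz_i$, $x_iy_0z_i$, $x_iy_iz_0$ is a matrix multiplication tensor ($\ang{1,1,q}$ and its rotations). Fixing the type distribution to use each of the three constituent types $N$ times, the block tensors are all isomorphic to $\ang{q^N,q^N,q^N}$. The number of $x$-blocks of the relevant type is $\binom{3N}{N}\approx (27/4)^N$.

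Second, apply the standard laser-method hashing (Salem--Spencer sets, zeroing out blocks) to restrict this power to a direct sum of about $(27/4)^{N(1-o(1))}$ disjoint copies of $\ang{q^N,q^N,q^N}$. Since this is a restriction, the degeneration is compatible with asymptotic rank. Combining with the asymptotic sum inequality in its asymptotic-rank form (taking further powers so that rank replaces border rank), we get
\[ (27/4)^{N} q^{N\omega} \le r^{3N+o(N)}. \]
Hence $\omega \le \log_q(4r^3/27)$.

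Third, plug in $q=2$ and $r=3$: $\log_2(4\cdot 27/27)=\log_2 4 = 2$. So $\omega\le 2$, and therefore $\omega=2$.

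The main obstacle is the second step: carrying out the Salem--Spencer hashing so that the surviving blocks are genuinely independent (their $x$, $y$, $z$ block sets pairwise disjoint) while losing only a subexponential factor. This is the technical core of \cite{CW90Laser}. I would quote it essentially verbatim, noting that $\cw_q$ has the trivial block structure needed, namely tightness with the $0/\text{nonzero}$ partition. I would also check that the asymptotic sum inequality holds with $\AR$ in place of $\BR$, which follows because the asymptotic rank is sub-multiplicative and monotone under restriction.
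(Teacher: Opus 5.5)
Your proposal is correct. The paper gives no proof of its own: it cites the statement to Coppersmith and Winograd's laser-method analysis of $\cw_q$. Your argument is exactly that analysis — the $3N$-th power, the uniform type distribution, and Salem--Spencer hashing — which yields $\omega \le \log_q(4r^3/27)$ whenever $\AR(\cw_q)\le r$, so $q=2$, $r=3$ gives $\omega\le 2$. Your passage from asymptotic rank to the asymptotic sum inequality is also fine. If $L_N$ is the restricted direct sum of copies of $\ang{q^N,q^N,q^N}$, then $\BR(L_N)\le \Rk(\cw_q^{\otimes 3N})\le (r+\epsilon)^{3N}$ for all large $N$, and you conclude by letting $N\to\infty$ and then $\epsilon\to 0$.
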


Prior to our work, no bound better than Coppersmith and Winograd's border rank bound $\AR(\cw_2) \leq \BR(\cw_2) = 4$ was known. But because of Theorem~\ref{intro:omega2fromcw2}, many prior works have studied and asked about improved bounds on $\AR(\cw_q)$. See, for instance,~\cite[Remark~15.44 and Exercise~15.25]{BCS13ACT}, \cite[Problem~9.8]{Blaser13FMM}, and \cite[Question~3]{ConnerLGV20ITCS}. 

Most recently, Conner, Landsberg, Gesmundo, and Ventura~\cite{ConnerLGV20ITCS} studied whether the border ranks of small Kronecker powers $\cw_q^{\otimes k}$ for $k=2, 3$ could be used to improve the upper bound on $\AR(\cw_q)$, but they achieved a negative result: $\BR(\cw_q^{\otimes 2}) = (q+2)^2$ (for all $q>2$) and $\BR(\cw_q^{\otimes 3}) = (q+2)^3$ (for all $q>4$). This suggests that one may need to take large Kronecker powers of $\cw_q$ before any border rank submultiplicativity is observed.

As an application of our approach for bounding the asymptotic ranks of tensors, we achieve the first upper bound improvement:

\begin{theorem}\label{thm:introcw2}
    $\AR(\cw_2) < 3.931$.
\end{theorem}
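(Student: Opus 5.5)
The plan is to apply a speedup theorem to the Coppersmith--Winograd border rank decomposition of $\cw_2$. That decomposition witnesses $\BR(\cw_q)\le q+2$ via a degeneration from the unit tensor $\ang{q+2}$, i.e.\ $q+2$ independent rank-one terms. The idea, in the spirit of \cite{CW82Improve}, is that this degeneration is wasteful: when the approximation $\varepsilon$-expansion is carried out, the rank-one terms produce not only $\varepsilon\cdot\cw_q$ but also error terms of lower order in $\varepsilon$ that cancel. I would perturb the decomposition slightly, e.g.\ by adding a small extra tensor with disjoint support in fresh coordinates. The goal is that the same $q+2$ rank-one terms (or $q+2$ plus a negligible amount) degenerate to a direct sum $\cw_q\oplus S$, where $S$ is a nontrivial extra piece; the natural candidate is a small unit tensor or a matrix multiplication tensor $\ang{1,1,2}$ built on new variables $x_{q+1},\dots$. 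Concretely, I would write the standard approximation
\[
\varepsilon^{-1}\Big(\textstyle\sum_{i=1}^q (x_0+\varepsilon x_i)(y_0+\varepsilon y_i)(z_0+\varepsilon z_i) - (x_0+\varepsilon^2\textstyle\sum x_i)(y_0+\varepsilon^2 \textstyle\sum y_i)(z_0+\varepsilon^2\textstyle\sum z_i) - (q-1)x_0y_0z_0\Big),
\]
and look at where its leading and subleading coefficients can be repurposed to carry an extra independent summand.

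Given a degeneration $\ang{r}\degen \cw_q\oplus S$ with $r=q+2$, the second step is to convert it into an asymptotic rank bound. Taking $n$-th powers gives $\ang{r^n}\degen\bigoplus_k \binom{n}{k}\,\cw_q^{\otimes k}\otimes S^{\otimes(n-k)}$. If $\AR(S)=s$ is known, the summands with $S$ factors can be "paid for" using the spectrum. Via Strassen calculus, every spectral point $F$ in the asymptotic spectrum satisfies $F(\cw_q)+F(S)\le r$ by additivity and monotonicity. Since $\AR(T)=\max_F F(T)$, this gives
\[
\AR(\cw_q)\le \max\{a : a+F(S)\le r\},
\]
where $F(S)$ must also be bounded below in terms of $F(\cw_q)$. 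This uses the fact that $S$ can be obtained by restriction from a power of $\cw_q$, or that $F(S)\ge$ some quantity such as $F(\ang{2})=2$ when $S$ contains a unit tensor. In the simplest case $S\trianglerighteq\ang{m}$ with $m\ge1$, we get $F(\cw_2)\le 4-m<4$ immediately, but this is likely too strong to be true. So realistically $S$ is only a degeneration of a fraction of $\cw_q$ itself, for example a restriction $\cw_q^{\otimes k}\trianglerighteq S$ that is exploited via $F(S)\ge F(\cw_q)^{\alpha}$ after symmetrizing over the three cyclic rotations. That yields the transcendental inequality $a+c\,a^{\alpha}\le 4$, and its solution is the numerical bound $3.931$.

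The main obstacle is the first step: finding an explicit degeneration of $\ang{4}$, or of $\ang{4}$ plus $o(1)$ extra rank, onto $\cw_2\oplus S$ with $S$ nonzero. It must also have the additional spectral control needed to lower-bound $F(S)$ uniformly over the asymptotic spectrum. I would first try exploiting the unused $\varepsilon^2$ and $\varepsilon^3$ cross terms of the standard decomposition, which involve $x_iy_jz_0$ with $i\ne j$, and cancel them against a rotated copy. If that fails, I would pass to a small power $\cw_2^{\otimes 2}$ and search for a direct-sum degeneration there. The final step is a routine numerical optimization over the free parameters of the resulting inequality to certify $\AR(\cw_2)<3.931$.
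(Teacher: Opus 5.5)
\textbf{There is a genuine gap: your central step cannot work.} You want the four rank-one terms of the Coppersmith--Winograd decomposition to approximate $\cw_2\oplus S$ with $S\neq 0$. In the paper's notation this is $\cw_2\oplus S\degen\ang{4}$.

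Every nonzero tensor restricts to $\ang{1}$: project each mode onto a coordinate carrying a nonzero coefficient. So $\phi(S)\ge 1$ for every spectral point. Your own inequality $\phi(\cw_2)+\phi(S)\le 4$ would then give $\AR(\cw_2)\le 3$, hence $\omega=2$ by \Cref{intro:omega2fromcw2}. The ``simplest case'' $\ang{m}\le S$ with $m\ge 1$, which you set aside as too strong, is therefore every case. The intermediate regime that was supposed to produce an inequality $a+c\,a^{\alpha}\le 4$ does not exist. Separately, a restriction $S\le\cw_2^{\otimes k}$ bounds $\phi(S)$ from above, which is the wrong direction.

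At the first power the degeneration is in fact ruled out outright. Strassen's commutator bound applied to the $x$-slices of $\cw_2$ gives $\BR(\cw_2)\ge 3+1$, because the relevant commutator has rank $2$. For $\cw_2\oplus\ang{1}$ the commutator is the same block padded by a zero, so the bound gives $\BR(\cw_2\oplus\ang{1})\ge 4+1$.

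Passing to powers does not repair the plan as stated. You would need something like $\cw_2^{\otimes k}\oplus\ang{m}\degen\ang{4^k}$ with $4^k-m<3.931^k$, and the proposal offers no construction for this.

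Your displayed approximation is also miscalibrated. With the common weight $\varepsilon^{-1}$, its limit is $\sum_i(x_iy_0z_0+x_0y_iz_0+x_0y_0z_i)$, not $\cw_q$. The correct identity weights the $q$ terms by $\varepsilon^{-2}$ and the correction term by $\varepsilon^{-3}$.

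\textbf{The missing idea is that the extra summand is paid for, not extracted for free from $\ang{4}$.} The paper pays for it with a cheap matrix multiplication term added to the right-hand side.
\begin{enumerate}
\item \Cref{lem:minrk} finds nonzero scalars $c_i'$ such that $\sum_i c_i'a_ib_i$ has rank $q$.
\item \Cref{prop:add_slice} and the free-lunch speedup (\Cref{thm:freelunch_speedup}, \Cref{prop:freelunch_oneslice}) then give $T\oplus\ang{1,t,1}\degen\ang{r}\oplus\ang{1,s,1}$ with $t=r+s-2N$, where $N$ is the side length of $T$ (\Cref{thm:app-oneslice}).
\item For $\cw_2$ itself $t=4+2-6=0$, so one must first pass to $\cw_2^{\otimes 4}$. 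There $r=256$, $s=16$, $N=81$ and $t=110>s$.
\item Strassen calculus (\Cref{prop:comp_ex}) controls the gain $t^{\theta}-s^{\theta}$. Some $\theta_i\ge 2/3$ because $\theta_1+\theta_2+\theta_3\ge 2$, which encodes $\AQ(\ang{n,n,n})=n^2$. This yields $\AR(\cw_2)^4\le 256+16^{2/3}-110^{2/3}$, i.e.\ $\AR(\cw_2)\le 3.9335$.
\item That is still not below $3.931$. The stated bound also needs \Cref{thm:iterate}. It uses that the extracted slice is full and isolated to enlarge $\ang{1,110^2,1}$ in the tensor square to $\ang{1,110^2+2\cdot 81^2,1}$. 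The resulting spectral inequality, maximized at $\theta=2/3$, gives $\AR(\cw_2)\le 3.930872$.
\end{enumerate}
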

More generally, for any $q \geq 2$, we show there is a $\delta_q > 0$ such that $\AR(\cw_q) < q + 2 - \delta_q$; a table of these bounds for $2 \leq q \leq 10$ is given in Table~\ref{tab:small_cw} in Section~\ref{subsec:smallcw} below.

We remark that, in hindsight, the existence of such a $\delta_q>0$ actually follows from Strassen's original work on the asymptotic spectrum \cite[Theorem~3.14]{Str88AsymSpec}. That said, we believe that a quantitative bound on how large $\delta_q$ is requires the new analysis we introduce here. We will discuss this more in the technique overview in the next section.

\myparagraph{Improved general upper bound on asymptotic rank.}
Beyond Theorem~\ref{thm:introcw2}, we prove additional upper bounds on the asymptotic rank of tensors, as well as new restrictions involving direct sums of tensors of interest. As one example, we obtain a new upper bound on the maximum possible asymptotic rank of a $d\times d\times d$ tensor.

For small $d$, the generic rank bounds \cite{Str83GenericR,Lic85GenericR} already yield nontrivial estimates, and recent geometric methods can slightly improve them \cite{KM25ARC,Lee26AR}; however, these bounds remain $\Omega(d^2)$. For larger $d$, the best known general upper bound instead comes from a generic reduction of an arbitrary tensor to matrix multiplication, which implies  that every $d\times d\times d$ tensor $T$ satisfies $\AR(T)\le d^{\frac23\omega}$ \cite{Str88AsymSpec}.

It is natural to wonder whether this can be improved, and recently Kaski~\cite[Open~Problem~1]{Kaski2025Talk} asked this explicitly in the context of further understanding sequences of tensors related to Strassen's asymptotic rank conjecture. We improve it as follows:

\begin{theorem} \label{thm:intro_general_oneslice}
    For any $d \geq 3$, any $d\times d\times d$ tensor $T$ satisfies
    \[ \AR(T) \leq \sqrt{d^{\frac 4 3 \omega} -
    \left\lfloor \frac d 3 \right\rfloor\cdot (2^{\frac \omega 3} - 1)d^{\frac 2 3 \omega}} < d^{\frac 2 3 \omega}.\]
\end{theorem}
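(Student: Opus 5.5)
The plan is to work in Strassen's asymptotic spectrum. Since $\AR(T)=\max_F F(T)$ over spectral points $F$, it suffices to show $F(T)^2\le d^{\frac43\omega}-\lfloor d/3\rfloor(2^{\omega/3}-1)d^{\frac23\omega}$ for every $F$. Spectral points are additive, multiplicative and monotone under degeneration, so one degeneration of a direct sum into a tensor whose asymptotic rank we control gives such an inequality.

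First I recall the mechanism behind Strassen's bound $\AR(T)\le d^{\frac23\omega}$. One cyclically symmetrizes: $T\otimes T^{\mathrm{cyc}}\otimes T^{\mathrm{cyc}^2}$ is a restriction of $\ang{d^2,d^2,d^2}$. Hence $F(T)^3\le F(\ang{d,d,d})^2\le d^{2\omega}$, using $F(\ang{n,n,n})\le n^{\omega}$.

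The natural place for slack is $T\otimes T$. This is a $d^2\times d^2\times d^2$ tensor, and I would realize it as a restriction of a matrix-multiplication-like tensor of value $d^{\frac43\omega}$ that is not fully used by $T\otimes T$. That tensor could be $\ang{d,d,d}^{\otimes 4/3}$ after symmetrization, i.e.\ $\ang{d^2,d^2,d^2}^{\otimes 2/3}$ in the spectral sense. The first concrete step is to exhibit a degeneration
\[ \ang{d^2,d^2,d^2}\otimes(\text{cyclic copies}) \;\degen\; (T\otimes T\otimes\cdots)\ \oplus\ \bigoplus_{j=1}^{\lfloor d/3\rfloor} X_j, \]
in which the unused coordinates are grouped into $\lfloor d/3\rfloor$ disjoint blocks. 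The three modes force this grouping into triples, which is why $\lfloor d/3\rfloor$ appears. Each block should carry a small matrix multiplication, essentially a $\ang{2,2,2}$ minus a $\ang{1,1,1}$, on a $d\times d\times d$ background. This is what $(2^{\omega/3}-1)d^{\frac23\omega}$ suggests, since $\ang{2,2,2}$ contributes $2^{\omega}$ split across the three cyclic factors.

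Applying $F$ then gives $F(T)^2+\lfloor d/3\rfloor\sum F(X_j)/\lfloor d/3\rfloor\le d^{\frac43\omega}$ after cancelling the symmetrization exponents.

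The main obstacle is the last step: lower bounding $F(X_j)$ by $(2^{\omega/3}-1)d^{\frac23\omega}$ uniformly over all spectral points. Spectral points are only upper bounded by matrix-multiplication values, not lower bounded, so the inequality cannot be closed naively. I would try to avoid needing the lower bound by running a speedup, in the style of Coppersmith--Winograd and the general speedup theorems developed here.

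Concretely, I would choose $F$ maximizing $F(T)$. If $F(X_j)$ is small, I would show that the same extra coordinates can be used to degenerate to $T\otimes T$ with fewer resources, which again lowers $F(T)$. If $F(X_j)$ is large, the inequality above closes directly. Balancing the two cases via Strassen calculus should yield exactly the stated square-root bound. The final strict inequality $<d^{\frac23\omega}$ is immediate, since $2^{\omega/3}>1$ and $\lfloor d/3\rfloor\ge1$ for $d\ge3$.
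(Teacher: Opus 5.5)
Your proposal has a genuine gap: the degeneration that everything rests on is never constructed. The one you sketch is a symmetrized pure matrix multiplication tensor degenerating to $T\otimes T$ plus $\lfloor d/3\rfloor$ blocks resembling ``$\ang{2,2,2}$ minus $\ang{1,1,1}$''. No construction supports it, and it is not what makes the bound work.

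The missing idea is to \emph{speed up} the plain restriction $T^{\otimes 2}\le \ang{d,d^2,d}$ (from $T\le\ang{d,d,1}$ and $T\le\ang{1,d,d}$) by padding the big side with slack:
\begin{itemize}
\item Write $T^{\otimes 2}\le \ang{d,d^2,d}\oplus p\odot\ang{1,d^2,1}$ with $p=\lfloor d/3\rfloor$ and zero $C$-component on the padding.
\item Partition the $d$ diagonal coordinates $w_{ii}$ of the third mode of $\ang{d,d^2,d}$ into groups $I_\alpha$ of size $3$.
\item Take $C'\colon w_{ii}\mapsto w'_\alpha$ for $i\in I_\alpha$, all other $w_{ki}\mapsto 0$, and $w'_\alpha\mapsto w'_\alpha$.
\item Choose the $A,B$-images of the padding so that $(A\otimes B\otimes C')S=0$. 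This is possible because each group contributes a matrix of rank at most $d^2$ in the $d^2\times d^2$ modes of $T^{\otimes 2}$.
\end{itemize}
The free-lunch theorem then extracts $\ang{1,4d^2,1}$ per group. Since $A'$ and $B'$ each have codimension at most $d^2$ per group, compression leaves $\ang{1,2d^2,1}$:
\[ T^{\otimes 2}\oplus p\odot\ang{1,2d^2,1}\degen \ang{d,d^2,d}\oplus p\odot\ang{1,d^2,1}. \]
The slack is essential. For concise $T$ the third flattening of $T^{\otimes 2}$ already has rank $d^2=\dim W$, so $C^\perp=0$ and nothing can be extracted from $\ang{d,d^2,d}$ alone.

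This also gives the real source of $\lfloor d/3\rfloor$. A group of size $d_\alpha$ yields $\ang{1,(d_\alpha-1)d^2,1}$ against a cost of $\ang{1,d^2,1}$, so $3$ is the smallest size with a net gain. It has nothing to do with the three modes, and no $\ang{2,2,2}$ occurs: the factor $(2^{\omega/3}-1)d^{2\omega/3}$ is $(2d^2)^{\theta}-(d^2)^{\theta}$ evaluated at $\theta=\omega/3$.

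With this identity, your ``main obstacle'' disappears. Every extra term is a matrix multiplication tensor, so its spectral value is known exactly: $\phi(\ang{a,b,c})=a^{\theta_1}b^{\theta_2}c^{\theta_3}$. Spectral points are not merely upper bounded on such tensors. Permuting modes gives $\phi(T)^2\le d^{\varpi+\theta_i}-(2^{\theta_i}-1)p\,d^{2\theta_i}$ for each $i$, where $\varpi=\theta_1+\theta_2+\theta_3$. The tradeoff you allude to is then settled analytically:
\begin{itemize}
\item Take $\theta=\min_i\theta_i\le\varpi/3$.
\item Show that $x\mapsto d^{\varpi+x}-(2^x-1)p\,d^{2x}$ is increasing on $[0,\varpi/3]$. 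Decreasing $\theta$ shrinks the gain, but shrinks the main term faster.
\item Show that the value at $x=\varpi/3$ is increasing in $\varpi$, and use $\varpi\le\omega$.
\end{itemize}
Your step ``choose $F$ maximizing $F(T)$; if $F(X_j)$ is small, degenerate to $T\otimes T$ with fewer resources'' is not an argument. Duality needs an inequality valid at every spectral point, derived from fixed degenerations.

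Smaller issues:
\begin{itemize}
\item $\ang{d^2,d^2,d^2}^{\otimes 2/3}$ is not a tensor. The exponent $\tfrac43\omega$ comes from the min-over-directions step, not from a fractional power.
\item Your displayed $\degen$ is reversed relative to the paper's convention.
\item $F(T\otimes T^{\mathrm{cyc}}\otimes T^{\mathrm{cyc}^2})\ne F(T)^3$ in general, since spectral points need not be invariant under permuting modes. Use $T^{\otimes 3}\le\ang{d^2,d^2,d^2}$ instead.
\end{itemize}
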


\section{Technique overview} \label{sec:techniqueoverview}

\subsection{Strassen calculus: Bounds via asymptotic spectrum} \label{subsec:strassencalculusintro}

Traditionally, techniques for bounding the matrix multiplication exponent $\omega$ proceed by constructing a chain of border-rank upper bounds. This requires carefully tracking the orders of magnitude of intermediate tensors, and typically culminates in a degeneration of the form $\ang{n,n,n}\degen \ang{r}$, which bounds $\omega$ via the basic inequality $\omega \leq \log_n \BR(\ang{n,n,n})$. A major drawback is that these arguments often rely on \emph{ad hoc} derivation chains. A prominent example is the standard proof of the asymptotic sum inequality (cf.~\cite[\S 15.5]{BCS13ACT}, \cite[\S 7]{Blaser13FMM}), which hinges on a specific and rather involved degeneration.

Furthermore, it is often unclear whether such a derivation yields the optimal
bound theoretically obtainable from a given set of inequalities. For instance, in 1981, 
Sch\"onhage \cite{Sch81Tau} obtained the (record-holding at the time) bound $\omega < 2.521801$ by proving and analyzing the direct
sum degeneration (specifically, setting $k=5$ and $n=10$):
\[ \ang{1,k,2n} \oplus \ang{n,2,k} \oplus \ang{2k,n,1} \degen \ang{2(n+1)(k+1)}
\oplus k \odot \ang{1,1,2}. \]
Sch\"onhage's analysis, however, was not optimal. Romani \cite{Rom82DSum} later
improved this estimate to $\omega < 2.516649$ by exploiting the same identity more
effectively. Even this is not tight: the best bound obtainable from the identity is
$\omega \leq 3\tau < 2.516094$, where $\tau$ is the solution of the following equation (which looks
like a natural generalization of asymptotic sum inequality):
\[ 3\cdot (2kn)^\tau \leq 2(n+1)(k+1) + k \cdot 2^\tau,
\quad k=5, n=10. \]
This was simultaneously noted
by Pan \cite[Appendix A]{Pan81} and Sch\"onhage
(and as claimed in Pan's paper, a full proof is included in a
unpublished manuscript of Sch\"onhage). As far as we are aware,
there is still no accessible proof of this analysis, likely because the community's attention shifted to other methods to design matrix multiplication algorithms. 
We provide a proof in \Cref{sec:omega} below that 
$\omega \leq 2.516094$ can be achieved by analyzing this direct sum identity, and is the \emph{optimal} bound one can achieve from it.

In \Cref{sec:str_calc}, we introduce \emph{Strassen calculus}---a systematic framework for deriving bounds from a fixed collection of degenerations between direct sums of tensors---via Strassen's theory of the \emph{asymptotic spectrum} \cite{Str88AsymSpec,Str91SuppFun,Str05Kom}.

Informally, the asymptotic spectrum provides a family of numerical ``measures'' for tensors. Concretely, an element of the asymptotic spectrum $\calX$ is a function
\[
    \phi:\{\text{$\bbF$-tensors}\}\to \bbR_{\ge 0}
\]
that behaves well with respect to the tensor operations. % it is additive under direct sum, multiplicative under tensor product, normalized on diagonal tensors, and monotone under restriction or degeneration.
We refer to such a function $\phi$ as a \emph{spectral point}; see \Cref{sec:str_calc} for the precise definitions.

The key structural statement underlying the theory of the asymptotic spectrum is Strassen duality (\Cref{thm:duality}). Roughly speaking, it says that asymptotic degeneration relations are exactly the relations that hold after applying \emph{all} spectral points: if $T$ asymptotically degenerates to $S$, then $\phi(T)\le \phi(S)$ for every $\phi$, and conversely, if this inequality holds for all $\phi$, then this implies the existence of the asymptotic degeneration. In particular, this characterizes asymptotic rank via a max formula,
\[
    \AR(T)=\max_{\phi\in\calX}\phi(T).
\]
Conceptually, this turns questions about asymptotic rank into optimization problems over the asymptotic spectrum.

Of course, the space $\calX$ of all spectral points is defined \emph{nonconstructively} and may be \emph{infinite-dimensional} \emph{a priori}, so constructing explicit spectral points is generally difficult, although there has been significant progress in the past decade \cite{CVZ23QuantumFun,CLZ23WSR,SDW26Fun}. Our approach is therefore not to ``parametrize'' $\calX$ directly. Instead, we start from a fixed collection of tensor degenerations, translate them into numerical inequalities that must be satisfied by every spectral point, and then use Strassen duality to turn the resulting constraints into explicit upper bounds on asymptotic rank.
In favorable regimes one can solve the corresponding optimization problems analytically; more generally, one can obtain sharp bounds via computer-aided evaluation.
Moreover, Strassen calculus is \emph{complete}: for a fixed set of degenerations, optimizing over these constraints yields the best bound obtainable from those relations.
In \Cref{sec:omega}, we provide a rigorous treatment of the notion of ``optimal usage'',
and show the completeness under this formalism.

We emphasize that although the theory of the asymptotic spectrum was originally motivated by the study of tensor rank, its subsequent development has followed a curious trajectory. While the framework has flourished in other areas such as (quantum) information theory, its application to the \emph{asymptotic rank} of tensors has remained largely conceptual. Prior to this work, the theory occupied a somewhat dormant role in its own founding domain, as to the best of our knowledge, \emph{no} practical instances were identified to yield concrete, improved upper bounds.

The following examples illustrate the utility of the framework in other areas, and highlight the gap between its success elsewhere and its (so far) limited impact on tensor rank:
\begin{itemize}
    \item \emph{Strassen's Foundation.} In his seminal paper \cite{Str88AsymSpec}, Strassen introduced the asymptotic spectrum to provide a new proof of the asymptotic sum inequality, setting the stage for a theory that, while foundational, remained difficult to apply to specific new bounds.
    \item \emph{Barriers to Fast Matrix Multiplication.} Improvements to the exponent $\omega$ over the past thirty years have largely relied on analyzing the big Coppersmith--Winograd tensor $\CW_q$. However, Christandl, Vrana, and Zuiddam \cite{CVZ21Barrier} utilized spectral points to show that the laser method---even in a highly generalized form---applied on big CW tensors, cannot yield an upper bound better than $2.16$. Subsequent work \cite{CLLZ25Barrier} has established analogous barriers for the rectangular case.
    \item \emph{Graph Theory.} Zuiddam \cite{Zui19Graph} applied the general theory of asymptotic spectrum (not confined to tensors) to study the Shannon capacity $\Theta(G)$ of graphs. Wigderson and Zuiddam \cite[Theorem 4.1]{WZ25Spectra} utilized the asymptotic spectrum to show a striking equivalence:
    for graphs $G$ and $H$, where $G\boxtimes H$ and $G\sqcup H$ denote the strong product and disjoint union respectively,
    $\Theta(G \boxtimes H) > \Theta(G) \Theta(H)$ if and only if $\Theta(G \sqcup H) > \Theta(G) + \Theta(H)$.\footnote{To appreciate this result, note that examples of graphs $G, H$ satisfying the first strict inequality were found in \cite{Hae79Shannon}, yet it took decades before examples satisfying the second were discovered \cite{Alon98Shannon}.}
    \item \emph{Circuit Complexity.} Robere and Zuiddam \cite{RZ21CatB} developed an analogous theory to study
    and generalize some results in the complexity of catalytic branching programs.
    Alman and Li \cite{AL25Depth2} applied asymptotic spectrum theory to derive a duality theorem for depth-2 circuits, resolving the problem of optimally combining fixed constructions in that context.
\end{itemize}

These developments demonstrate that Strassen duality is a mature and productive framework in other contexts. Our work aims to \emph{re-apply} this machinery to its original setting by identifying concrete tensor degenerations for which it provably yields new, improved upper bounds.

In \Cref{sec:str_calc} we provide two self-contained examples of Strassen calculus. They focus on how one can turn degenerations of the form
\[
    T \oplus \ang{1,t,1} \degen \ang{r} \oplus \ang{1,s,1}
\]
into explicit upper bounds on asymptotic rank. On the one hand, from such identities---with rectangular matrix multiplication tensors on the right-hand side---it was not previously understood what the optimal resulting asymptotic rank bound is. On the other hand, relations of this type (and more complicated variants) arise naturally in our generalized speedup theorems. For this reason, it is important to use the Strassen calculus in conjunction with our speedups.

In particular, in \Cref{subsec:toy}, our first example is a \emph{toy example} illustrating the basic principles of the framework, together with a detailed derivation. Despite its simplicity, it already yields a novel and interesting consequence: every $3\times 3\times 3$ tensor $T$ satisfies $\AR(T)\leq 3+2^{\omega/3}$ (\Cref{cor:dim_three_ar}), giving a first glimpse of the calculus in action. (Previously, to the best of our knowledge, the upper bound on $\AR(T)$ for a generic $3 \times 3 \times 3$ tensor $T$ was $5$, which is the generic rank. This result improves it to $3 + 2^{\omega/3} < 4.7297$.)

\subsection{Speedup theorems}

In \Cref{sec:speedup}, we prove a general speedup theorem that subsumes the results of Coppersmith--Winograd \cite{CW82Improve} and Strassen \cite[Theorem 3.14]{Str88AsymSpec}. The Coppersmith--Winograd theorem fits the following paradigm: starting from a border rank bound $T \degen \ang{r}$, one can (in certain regimes) ``speed up'' the degeneration by taking a direct sum with a one-slice matrix multiplication tensor on both sides, obtaining
\[
    T \oplus \ang{1,t,1} \degen \ang{r} \oplus \ang{1,s,1}.
\]

Strassen later generalized this by showing that the right-hand side does not necessarily need to be a diagonal tensor (i.e., a border rank bound). Specifically, a degeneration $T \degen S$ can also be accelerated to $T \oplus \ang{1,t,1} \degen S \oplus \ang{1,s,1}$.
Our main theorem further generalizes this idea: under suitable hypotheses, it allows one to add \emph{arbitrary} tensors on both sides of a degeneration, producing relations of the form $T \oplus T' \degen S \oplus S'$.

In \Cref{sec:instantiation}, we present three concrete instantiations of the general speedup theorem. Each has different strengths and ranges of applicability; together, they form the main toolkit we use to derive explicit upper bounds on asymptotic rank.
\begin{itemize}
    \item The first result relates a certain variant of the min-rank problem to one-slice speedups of the form $T\oplus \ang{1,t,1} \degen S\oplus \ang{1,s,1}$, thereby recovering Strassen's speedup theorem.
    \item The second result shows that, in a suitable regime, a degeneration of the form $T \oplus \ang{1,t,1} \degen S'$ can be further accelerated to $T \oplus \ang{1,t',1} \degen S'$ for some $t' > t$, while keeping the right-hand side fixed. In particular, if one takes an identity obtained from the first result and takes a tensor power of both sides, this acceleration always applies; thus the first method can be \emph{iterated}. This generalizes a result of Coppersmith and Winograd \cite[Theorem B]{CW82Improve}.
    \item The third result shows that, in a certain regime, one can add multiple one-slice tensors on both sides, obtaining degenerations of the form $T\oplus \bigoplus_i \ang{1,t_i,1}\degen S \oplus \bigoplus_j \ang{1,s_j,1}$.
\end{itemize}
As a motivating illustration, we use these methods to show that any non-minimal border rank bound $\BR(T) \leq r$ can be strictly improved (a \emph{qualitative} fact first established by Strassen \cite[Theorem~3.14]{Str88AsymSpec}). The three instantiations above are formalized as \Cref{thm:app-oneslice}, \Cref{thm:iterate}, and \Cref{thm:speedup_via_compression}, and we give a \emph{quantitative} comparison of the improvement achieved by each method using Strassen calculus.

In \Cref{sec:appl}, we apply these methods to obtain explicit bounds for concrete tensors of interest.
\begin{itemize}
    \item In \Cref{subsec:smallcw}, we prove Theorem~\ref{thm:introcw2}, improving the asymptotic rank of the small Coppersmith--Winograd tensors $\cw_q$; for instance, we improve the best known bound for $\AR(\cw_2)$ from $4$ (coming from border rank) to $3.931$ (see \Cref{tab:small_cw}). %While this improves the bound on $\omega$ that one can achieve \emph{using $\cw_q$}, it is still worse than the record-holding bound which uses $\CW_q$; see \Cref{rmk:cw_omega} for more discussion.
    \item In \Cref{subsec:exponent}, we prove Theorem~\ref{thm:intro_general_oneslice}, improving general upper bounds on asymptotic rank. %This method is still not effective enough
    %for the purpose of refuting the set cover conjecture and improving $\omega$, see discussion in \Cref{rmk:reduce_mm}.
    \item In \Cref{subsec:directsum}, we use the speedup theorem to construct new degenerations between direct sums of matrix multiplication tensors in which the right-hand side is itself a direct sum of nontrivial matrix multiplication tensors. While this does not (at present) improve the matrix multiplication exponent, the resulting identities do not appear to follow easily from previously known ones, and may provide insights for future developments.
\end{itemize}

Finally, it is natural to wonder whether this method yields faster algorithms for problems such as matrix multiplication or set cover. We find that our techniques are still not effective enough, although we are optimistic that further improvements along these lines may succeed. See \Cref{rmk:cw_omega} and \Cref{rmk:reduce_mm} for further discussion.

 %{\color{red} Let's write a bit more about these results, at least the general upper bound on asymptotic rank. We should actually state a theorem statement at least and write a couple of sentences about why this is an interesting quantity/question.}

%{\color{red} Two more things I think we should quickly note: 1. This doesn't improve the current $\omega$ since that bound comes from a sequence of bounds on bigger and bigger matrix mults already. 2. The same for the tensors for set cover and other fine grained stuff.}

\section{Preliminaries}

For a positive integer $n$, let $[n] := \{1,\dots,n\}$. We use Iverson bracket
notation: for a proposition $P$, let $\iv{P}=1$ if $P$ is true and $\iv{P}=0$
otherwise.

\subsection{Tensors}

Our notation largely follows \cite{DWZ23AsymHash,WXXZ24MatMul,ADWXXZ25MoreAsym}.

Fix a base field $\bbF$.
Let $U, V, W$ be finite-dimensional vector spaces with bases
$\{x_1,\dots,x_n\}$ of $U$, $\{y_1,\dots,y_m\}$ of $V$ and $\{z_1,\dots,z_p\}$ of $W$.
A tensor is an element $T\in U\otimes V\otimes W$. With respect to the chosen bases,
we write $T$ via its coefficients $a_{ijk} \in \bbF$ as
\[ T = \sum_{i=1}^n \sum_{j=1}^m \sum_{k=1}^p a_{ijk} x_i y_j z_k. \]
We say $U,V,W$ are the \emph{modes} of the tensor.

For two tensors $S\in U\otimes V\otimes W$ and $T\in U'\otimes V'\otimes W'$,
we write their direct sum $S\oplus T$ as the image of their sum inside the common space
$(U\oplus U') \otimes (V\oplus V') \otimes (W \oplus W')$,
and their tensor product as the natural image inside the tensor product space
$(U\otimes U') \otimes (V\otimes V') \otimes (W\otimes W')$.

A \emph{restriction} from $S\in U\otimes V\otimes W$ to $T \in U'\otimes V'\otimes W'$
is specified by linear maps $A\colon U \to U'$, $B\colon V \to V'$ and $C\colon W \to W'$
such that $T = (A\otimes B\otimes C)S$. Concretely,
\[ T = \sum_{i,j,k} a_{ijk} A(x_i) B(y_j) C(z_k). \]
In this case we say $S$ \emph{restricts to} $T$, and write $T\leq S$.

A \emph{degeneration} is a one-parameter family of linear maps
$A_\lambda, B_\lambda, C_\lambda$ (with entries rational in $\lambda$) such that
$(A_\lambda \otimes B_\lambda \otimes C_\lambda) S = T + O(\lambda^d)$ for some $d$. In other words, write each entry of $A_\lambda, B_\lambda, C_\lambda$ as a rational function in $\lambda$ and then expand the resulting expression as a Laurent series in $\lambda$:
\[ A_\lambda = \sum_{i\geq i_0} A_i \lambda^i, \]
and similarly for $B_\lambda, C_\lambda$, we have the $\lambda^\ell$-coefficient of $(A_\lambda \otimes B_\lambda \otimes C_\lambda) S$ is
\[ \sum_{i+j+k=\ell} (A_i \otimes B_j \otimes C_k) S. \]
We require that the $\lambda^0$-coefficient is exactly $T$, and that the $\lambda^\ell$-coefficient is zero for all $\ell < 0$. In this case we say $T$ is a degeneration of $S$, write $S$ \emph{degenerates to} $T$, and denote this by $T\degen S$.

Two central families of tensors are diagonal tensors and matrix multiplication tensors.
For any $n\geq 1$, the diagonal tensor is
\[ \ang{n} = \sum_{i=1}^n x_i y_i z_i, \]
and for any $n,m,p\geq 1$, the matrix multiplication tensor is
\[ \ang{n,m,p} = \sum_{i=1}^n \sum_{j=1}^m \sum_{k=1}^p x_{ij} y_{jk} z_{ki}. \]

The rank $\Rk(T)$ of a tensor $T$ is the minimum $r$ such that $T \leq \ang{r}$.
Similarly, the subrank $\Q(T)$ is the maximum $q$ such that $\ang{q} \leq T$ (equivalently, $T\geq \ang{q}$).
The \emph{border rank} $\BR(T)$ and \emph{border subrank} $\BQ(T)$ are defined analogously by replacing $\leq$ with $\degen$.

We write the $n$-fold direct sum of $T$ as $n \odot T$; note that $n \odot T$ is equivalent to $\ang{n} \otimes T$.

The \emph{asymptotic rank} and \emph{subrank} are defined as
\begin{align*}
    \AR(T) &= \lim_{n\to\infty} \Rk(T^{\otimes n})^{1/n},\\
    \AQ(T) &= \lim_{n\to\infty} \Q(T^{\otimes n})^{1/n}.
\end{align*}
By Fekete's lemma, for $\AR$ ($\AQ$ resp.), the $\lim_{n\to \infty}$ can also be replaced by $\inf_{n\geq 1}$ ($\sup_{n\geq 1}$ resp.).
Therefore, the matrix multiplication exponent $\omega$ is characterized by $\AR(\ang{n,n,n}) = n^\omega$.

\section{Strassen calculus} \label{sec:str_calc}

We first briefly recall the notion of the asymptotic spectrum. The asymptotic spectrum of $\bbF$-tensors
is the set $\calX = \calX_\bbF$ of functions $\phi$ from tensors to nonnegative real numbers $\bbR_{\geq 0}$ satisfying:
\begin{itemize}
    \item Additive under direct sum: $\phi(T \oplus S) = \phi(T) + \phi(S)$.
    \item Multiplicative under tensor product: $\phi(T\otimes S) = \phi(T) \phi(S)$.
    \item Normalized on diagonal tensors: $\phi(\ang{n}) = n$.
    \item Monotone under restriction (degeneration\footnote{It turns out that
    \emph{restriction} and \emph{degeneration} give equivalent definitions,
    i.e., they yield the same collection of functions; cf.~\cite[Theorem~3.1]{Str88AsymSpec}.}): $\phi(T) \leq \phi(S)$ when $T\leq S$ ($T\degen S$).
\end{itemize}

From this definition, it is immediate that $\calX$ only ``remembers'' asymptotic relations:
if $T^{\otimes n} \leq S^{\otimes n+o(n)}$, then $\phi(T) \leq \phi(S)$ for all $\phi\in\calX$.
Surprisingly, the converse holds as well; this is the \emph{Strassen duality theorem}
(cf.~\cite[Theorem~3.2]{Str88AsymSpec}, \cite[\S 3]{WZ25Spectra}).

\begin{theorem}[Strassen duality] \label{thm:duality}
    For two tensors $T, S$, the following are equivalent:
    \begin{enumerate}
        \item $T^{\otimes n} \leq S^{\otimes n + o(n)}$,
        \item $T^{\otimes n} \degen S^{\otimes n + o(n)}$,
        \item For any $\phi\in\calX$, have $\phi(T) \leq \phi(S)$.
    \end{enumerate}
    Moreover,
    \[ \AR(T) = \lim_{n\to\infty} \Rk(T^{\otimes n})^{1/n}
    = \lim_{n\to\infty} \BR(T^{\otimes n})^{1/n} = \max_{\phi\in\calX} \phi(T), \]
    and
    \[ \AQ(T) = \lim_{n\to\infty} \Q(T^{\otimes n})^{1/n} = 
    \lim_{n\to\infty} \BQ(T^{\otimes n})^{1/n} = \min_{\phi\in\calX} \phi(T). \]
\end{theorem}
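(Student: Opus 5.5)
This is Strassen's duality theorem. I would prove it through the Kadison--Dubois / Positivstellensatz route, i.e.\ Strassen's abstract representation theorem for preordered semirings. The key move is to turn the tensor-level statements into statements about a suitable preordered commutative semiring and then apply that abstract duality.

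\textbf{Setting up the semiring.} Let $\mathcal{S}$ be the set of isomorphism classes of tensors, with $\oplus$ as addition, $\otimes$ as multiplication, $\ang{0}$ and $\ang{1}$ as the units, and restriction $\leq$ (or degeneration $\degen$) as the preorder. I will check the hypotheses of the abstract theorem:
\begin{itemize}
\item $\mathcal{S}$ is a commutative semiring, and the preorder is compatible with $\oplus$ and $\otimes$.
\item The map $\bbN\to\mathcal{S}$, $n\mapsto\ang{n}$, is an order embedding.
\item The Strassen-preorder condition holds: for every nonzero $T$ there are $r,q$ with $\ang{1}\le \ang{r}\otimes T$ and $T\le \ang{q}$. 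Concretely, $T\le\ang{\Rk(T)}$, and $\ang{1}\le T$ whenever $T\ne0$.
\end{itemize}
The abstract theorem then says that $T\lesssim S$ holds asymptotically, in the sense that $T^{\otimes n}\le S^{\otimes n}\otimes\ang{2^{o(n)}}$, if and only if $\phi(T)\le\phi(S)$ for all monotone semiring homomorphisms $\phi\colon\mathcal{S}\to\bbR_{\ge0}$. These homomorphisms are exactly the elements of $\calX$.

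\textbf{Proving the abstract theorem.} I would argue as follows.
\begin{enumerate}
\item Pass to the Grothendieck ring.
\item Enlarge the preorder by Zorn's lemma to a maximal compatible preorder that still omits the pair $(T,S)$ assumed to fail asymptotically.
\item Show that this maximal preorder is total and Archimedean, using the rank and subrank bounds.
\item Embed it into $\bbR$ via Dedekind cuts against $\bbN$.
\end{enumerate}
This yields a $\phi\in\calX$ with $\phi(T)>\phi(S)$. The main obstacle is step~3: showing that the maximal extension is total while keeping $\ang{1}\not\le\ang{0}$ uses the Archimedean condition in a delicate way. I would follow Strassen's argument, or Fritz's generalization, closely here.

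\textbf{Removing the $\ang{2^{o(n)}}$ factor.} To get (1) exactly, I use that $\ang{2^{o(n)}}\le S^{\otimes o(n)}$ whenever $\AQ(S)>1$. This follows from supermultiplicativity of $\Q$. The degenerate cases where $S$ has subrank at most $1$ are handled directly.

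\textbf{The remaining equivalences and formulas.}
\begin{itemize}
\item (1)$\Rightarrow$(2) is trivial.
\item (2)$\Rightarrow$(3) holds because $\phi$ is monotone under degeneration.
\item (3)$\Rightarrow$(1) is the duality above. The needed fact that degeneration-monotonicity coincides with restriction-monotonicity is Strassen's \cite[Theorem~3.1]{Str88AsymSpec}. It follows from the fact that a degeneration of approximation degree $e$ yields a restriction after tensoring with $\ang{O(e^2)}$, and this factor is subexponential under powers.
\end{itemize}
For the rank formula, $\phi(T)\le\AR(T)$ for every $\phi\in\calX$ by monotonicity and multiplicativity. Conversely, for real $r>\max_\phi\phi(T)$ I apply the duality to $T^{\otimes k}$ versus $\ang{\lceil r^k\rceil}$. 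This gives $\Rk(T^{\otimes kn})\le \lceil r^k\rceil^{n+o(n)}$, hence $\AR(T)\le r$; the maximum is attained by compactness of $\calX$. The same argument with the inequalities reversed gives the $\AQ$ formula. The border-rank limits are sandwiched between these, again using the polynomial overhead for converting degenerations into restrictions.
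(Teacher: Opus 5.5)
The paper gives no proof of this theorem: it quotes it from Strassen (Theorem~3.2 of the 1988 paper) and the Wigderson--Zuiddam survey. Your outline is the argument given there: tensors as a Strassen-preordered semiring, the abstract Positivstellensatz (Zorn's lemma, totality, Archimedean embedding into $\bbR$), absorption of the subexponential factor, and Strassen's Theorem~3.1 to pass between restriction and degeneration. Two steps in your sketch carry more content than you indicate.

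\textbf{The degenerate case.} When $\AQ(S)=1$ the factor $\ang{2^{o(n)}}$ cannot be absorbed into powers of $S$. ``Handled directly'' then needs an actual lemma: a nonzero tensor all of whose powers have subrank $1$ has a flattening of rank $1$. Roughly, if all three flattening ranks are at least $2$, one compresses to a concise $2\times 2\times 2$ tensor, and some small power of such a tensor restricts to $\ang{2}$. With this lemma, the flattening ranks (which are points of $\calX$) finish the job. If, say, $S=x\otimes M$, then (3) forces $T=x'\otimes M'$ with $\Rk(M')\le \Rk(M)$, so $T\le S$ outright.

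The same lemma is silently needed in your subrank formula. Reversing the inequalities means applying the abstract theorem to $\ang{m}$ versus $T^{\otimes k}$, which only gives $\ang{m}^{\otimes n}\le \ang{2^{o(n)}}\otimes T^{\otimes kn}$. To turn this into a lower bound on $\Q(T^{\otimes N})$ with $N=kn+o(n)$, you must absorb $\ang{2^{o(n)}}$ into powers of $T$ itself. That requires $\AQ(T)>1$, which is exactly the quantity you are bounding; a relation $\ang{M}\le\ang{x}\otimes T'$ is not by itself a statement about $\Q(T')$. So for $\AQ(T)=1$ you must instead exhibit a spectral point with $\phi(T)=1$, and this is again a flattening rank supplied by the lemma. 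Your sandwich argument for the border-rank and border-subrank limits is fine once these formulas are in place.

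\textbf{The Zorn step.} As written, taking a maximal compatible preorder that merely omits the relation $T\le S$ does not guarantee that faithfulness on $\bbN$ survives. Nor does it guarantee that the Dedekind-cut homomorphism gives $\phi(T)>\phi(S)$ rather than $\phi(T)=\phi(S)$. The family must be chosen more robustly, for instance Strassen preorders extending restriction in which $T\lesssim S$ still fails, as in the sources you say you would follow.
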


For our purposes, it is enough to know Strassen duality; no further background on the asymptotic spectrum is required.
For a complete treatment---including further properties and applications---we refer to Strassen's founding paper \cite{Str88AsymSpec} and the survey of Wigderson and Zuiddam \cite{WZ25Spectra}.

\subsection{Matrix multiplication}

Before proceeding to the toy example, we first introduce a fundamental concept
to facilitate our use of Strassen calculus. 

\begin{proposition}[{\cite[Proposition~4.3]{Str88AsymSpec}}] \label{prop:param-specmm}
    For any $\phi\in\calX$, there exist unique real numbers $\theta_1,\theta_2,\theta_3$
    such that
    \[ \phi(\ang{n,m,p}) = n^{\theta_1} m^{\theta_2} p^{\theta_3} \]
    for any $n,m,p$.
\end{proposition}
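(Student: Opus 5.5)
The plan is to reduce the three-parameter function to three one-parameter functions, each of which is multiplicative and monotone on the positive integers, and hence a power.

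\textbf{Step 1: factorization.} Since $\ang{n,m,p} \cong \ang{n,1,1}\otimes\ang{1,m,1}\otimes\ang{1,1,p}$ (after reordering coordinates), multiplicativity of $\phi$ under tensor product gives
\[ \phi(\ang{n,m,p}) = f_1(n) f_2(m) f_3(p), \]
where $f_1(n)=\phi(\ang{n,1,1})$, $f_2(m)=\phi(\ang{1,m,1})$ and $f_3(p)=\phi(\ang{1,1,p})$. It therefore suffices to show that each $f_i(n)=n^{\theta_i}$ for some real $\theta_i$.

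\textbf{Step 2: properties of each $f_i$.} Take $f=f_1$; the others are symmetric.
\begin{itemize}
\item \emph{Multiplicative:} $\ang{nn',1,1}\cong\ang{n,1,1}\otimes\ang{n',1,1}$, so $f(nn')=f(n)f(n')$.
\item \emph{Normalized:} $\ang{1,1,1}=\ang{1}$, so $f(1)=1$.
\item \emph{Monotone:} $\ang{n,1,1}$ is a restriction of $\ang{n+1,1,1}$, obtained by projecting away one index, so $f(n)\le f(n+1)$.
\item \emph{Positive:} $f(n)\ge f(1)=1>0$.
\end{itemize}

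\textbf{Step 3: a monotone, completely multiplicative function is a power.} This is the classical Erd\H{o}s-type argument, and it is the only step with real content. Fix integers $a,b\ge2$. For each $k$, choose $j$ with $b^j\le a^k<b^{j+1}$. Monotonicity and multiplicativity give
\[ f(b)^j \le f(a)^k \le f(b)^{j+1}. \]
Taking logarithms (all values are positive) and dividing by $k\log a$:
\[ \frac{j}{k}\cdot\frac{\log f(b)}{\log a} \le \frac{\log f(a)}{\log a} \le \frac{j+1}{k}\cdot\frac{\log f(b)}{\log a}. \]
As $k\to\infty$ we have $j/k\to\log a/\log b$, so
\[ \frac{\log f(a)}{\log a}=\frac{\log f(b)}{\log b}=:\theta. \]
Hence $f(n)=n^\theta$ for all $n\ge2$, and also for $n=1$ since $f(1)=1$. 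Moreover $\theta\ge0$, because $f(2)\ge f(1)=1$.

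\textbf{Step 4: uniqueness.} Setting $m=p=1$ and $n=2$ gives $2^{\theta_1}=\phi(\ang{2,1,1})$, which determines $\theta_1$; likewise for $\theta_2$ and $\theta_3$.
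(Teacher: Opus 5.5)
Your proposal is correct and follows essentially the same route as the paper: factor $\ang{n,m,p}$ as $\ang{n,1,1}\otimes\ang{1,m,1}\otimes\ang{1,1,p}$, then sandwich $a^k$ between consecutive powers of a fixed base and use monotonicity plus multiplicativity to force a power law. The only differences are cosmetic: the paper fixes the base $2$ and sets $\theta_1=\log_2\phi(\ang{2,1,1})$, whereas you compare two arbitrary bases $a,b$. You also make explicit the facts $f\ge 1$ and $\theta\ge 0$, which the paper uses only implicitly.
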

\begin{proof}
    Let $\theta_1 = \log_2 \phi(\ang{2,1,1})$. We will show that $\phi(\ang{n,1,1})=n^{\theta_1}$ for all $n$.
    For any $k$, let $t = \lfloor \log_2 (n^k) \rfloor$,
    thus $2^t\leq n^k < 2^{t+1}$ and we have the restrictions $\ang{2^t,1,1} \leq \ang{n^k,1,1} \leq \ang{2^{t+1},1,1}$.
    Since $\phi$ is monotone, we have
    \[ \phi(\ang{2^t,1,1}) \leq \phi(\ang{n^k,1,1}) \leq \phi(\ang{2^{t+1},1,1}). \]
    Since $\ang{n^k,1,1} = \ang{n,1,1}^{\otimes k}$ and $\phi$ is multiplicative, we can pull out the exponent:
    \[ \phi(\ang{2, 1,1})^t \leq \phi(\ang{n,1,1})^k \leq \phi(\ang{2,1,1})^{t+1}. \]
    Thus
    \[ 2^{-\theta_1} n^{k \theta_1} \leq 2^{\theta_1 t} \leq \phi(\ang{n,1,1})^k \leq 2^{\theta_1 (t+1)}
    \leq 2^{\theta_1} n^{k\theta_1}. \]
    Taking $k$th roots and letting $k\to \infty$, we conclude $\phi(\ang{n,1,1}) = n^{\theta_1}$.
    Similarly, we can obtain $\theta_2, \theta_3$ such that $\phi(\ang{1,m,1})=m^{\theta_2}$
    and $\phi(\ang{1,1,p})=p^{\theta_3}$.

    Finally, since $\ang{n,m,p}=\ang{n,1,1}\otimes \ang{1,m,1}\otimes\ang{1,1,p}$
    and $\phi$ is multiplicative, we have
    \begin{align*}
        \phi(\ang{n,m,p}) &= \phi(\ang{n,1,1}\otimes \ang{1,m,1}\otimes\ang{1,1,p})\\
        &= \phi(\ang{n,1,1}) \cdot \phi(\ang{1,m,1}) \cdot \phi(\ang{1,1,p}) \\
        &= n^{\theta_1} m^{\theta_2} p^{\theta_3}. \qedhere
    \end{align*}
\end{proof}

Therefore, the behavior of the asymptotic spectrum on matrix multiplication tensors can be
parametrized just by $\theta_1,\theta_2$ and $\theta_3$. This leads to the notion of the 
\emph{spectrum of matrix multiplication}.
\begin{definition}[{\cite[\S 4]{Str88AsymSpec}}]
    The (logarithmic) spectrum $\specMM \subset \bbR^3$ of matrix multiplication
    are the points $(\theta_1,\theta_2,\theta_3)$ that satisfy \Cref{prop:param-specmm}.
\end{definition}

All the basic facts of matrix multiplication tensor transfer to statements of $\specMM$.

\begin{proposition} \label{prop:specmm-sum-theta}
    Every $\theta_i$ lies in $[0, 1]$.
    The value $\theta_1+\theta_2+\theta_3$ over $\specMM$ attains minimum
    $2$ and maximum $\omega$.
\end{proposition}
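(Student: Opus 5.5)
Each claim reduces to monotonicity plus normalization of $\phi$, combined with a standard restriction or degeneration among matrix multiplication tensors; Strassen duality (\Cref{thm:duality}) supplies the attainment statements. Fix $\phi\in\calX$ with parameters $(\theta_1,\theta_2,\theta_3)$ as in \Cref{prop:param-specmm}.

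For $\theta_i\in[0,1]$, look at the one-slice tensor $\ang{n,1,1}=\sum_{i=1}^n x_i y_1 z_i$. It restricts to $\ang{1}$, so $n^{\theta_1}\ge 1$ and hence $\theta_1\ge 0$. It also has rank $n$, i.e. $\ang{n,1,1}\le\ang{n}$, so $n^{\theta_1}\le n$ and hence $\theta_1\le 1$. The same argument handles $\theta_2$ and $\theta_3$ by symmetry.

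For the lower bound on the sum, I will use that $\ang{n,n,n}$ degenerates (in fact restricts) to $\ang{\lceil 3n^2/4\rceil}$ (Strassen). This only gives asymptotic subrank exponent $2$: $\AQ(\ang{n,n,n})\ge n^{2-o(1)}$. So I use the asymptotic version instead. Duality gives $\AQ(T)=\min_\phi\phi(T)$, and the known fact is $\AQ(\ang{n,n,n})=n^2$, which follows from the $3n^2/4$ construction after taking powers. Since $\phi\ge\AQ$, I get $n^{\theta_1+\theta_2+\theta_3}=\phi(\ang{n,n,n})\ge n^2$, so the sum is at least $2$. The upper bound is the same argument: $\phi(\ang{n,n,n})\le\AR(\ang{n,n,n})=n^\omega$.

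Attainment comes from the max and min formulas of \Cref{thm:duality} together with $\ang{2,2,2}$. There is $\phi$ with $\phi(\ang{2,2,2})=\AR(\ang{2,2,2})=2^\omega$, so its sum is $\omega$. There is $\phi$ with $\phi(\ang{2,2,2})=\AQ(\ang{2,2,2})=4$, so its sum is $2$.

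The main obstacle is citing or justifying $\AQ(\ang{n,n,n})=n^2$. Its upper bound is immediate from flattening rank. Its lower bound needs Strassen's construction: $\ang{n,n,n}$ restricts to $\ang{\lceil 3n^2/4\rceil}$, obtained by keeping the $x_{ij}y_{jk}z_{ki}$ terms with $i+j+k\equiv 0$ in a suitable window. Tensor powering then gives exponent $2$. I expect to cite this as a standard fact (\cite{Str88AsymSpec}) rather than reprove it.
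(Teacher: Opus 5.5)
Your proposal is correct and follows essentially the paper's route: duality together with $\AR(\ang{n,n,n})=n^\omega$ and Strassen's $\AQ(\ang{n,n,n})=n^2$ handles the sum, and the only differences are that you get $\theta_i\in[0,1]$ directly from $\ang{1}\le\ang{n,1,1}\le\ang{n}$ (the paper uses $\AQ(\ang{n,1,1})=1$ and $\AR(\ang{n,1,1})=n$) and that you make attainment explicit via $\ang{2,2,2}$. One small correction: Strassen's $\lceil 3n^2/4\rceil$ bound is a degeneration (border subrank) statement, because zeroing out to the window $i+j+k=c$ leaves extra terms that a monomial degeneration must kill, so drop ``(in fact restricts)''; your argument only uses the degeneration, so nothing breaks.
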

\begin{proof}
    For any positive integer $n$, since $\AQ(\ang{n,1,1}) = 1$ and $\AR(\ang{n,1,1})=n$, we have
    $\min \theta_1 = 0$ and $\max \theta_1 = 1$, and similarly for $\theta_2,\theta_3$.
    The value $\theta_1 + \theta_2 + \theta_3$, by definition, equals 
    $\log_n \phi(\ang{n,n,n})$. Thus by \Cref{thm:duality}, we have
    \[ \max_{(\theta_1,\theta_2,\theta_3)\in \specMM} (\theta_1 + \theta_2 + \theta_3)
    = \max_{\phi\in\calX} \log_n \phi(\ang{n,n,n}) = \log_n \AR(\ang{n,n,n}), \]
    and it follows from the definition of $\omega$ that $\AR(\ang{n,n,n})=n^\omega$.
    Similarly, we have
    \[ \min_{(\theta_1,\theta_2,\theta_3)\in \specMM} (\theta_1 + \theta_2 + \theta_3) = \log_n \AQ(\ang{n,n,n}), \]
    and matrix multiplication has optimal asymptotic subrank $\AQ(\ang{n,n,n}) = n^2$
    \cite[Theorem 6.6]{Str87RelBilin}.
\end{proof}

\subsection{Toy example} \label{subsec:toy}

\begin{proposition} \label{prop:toy_degen}
    Suppose $\bbF$ is an algebraically closed field.
    For any $3 \times 3 \times 3$ tensor $T$, we have $T \degen \ang{3} \oplus \ang{1,2,1}$.
\end{proposition}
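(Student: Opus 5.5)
The plan is to reduce the statement to a decomposition $T = S_1 + M\otimes c$. Here $S_1$ has border rank at most $3$, $M$ is a $3\times 3$ matrix (a tensor in the first two modes) of rank at most $2$, and $c$ is a vector in the third mode.

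This suffices for two reasons. First, $\ang{1,2,1} = x_1y_1z + x_2y_2z$ is a rank-$2$ matrix tensored with a single vector, so it restricts to any $M\otimes c$ with $\operatorname{rank} M\le 2$. Second, $S_1 \degen \ang{3}$ by the border rank bound. Taking the direct sum of these two degenerations gives $S_1\oplus (M\otimes c)\degen \ang{3}\oplus\ang{1,2,1}$. Finally, the sum $S_1 + M\otimes c$ is a restriction of $S_1\oplus (M\otimes c)$: we map each pair of summand spaces onto the common space $U\otimes V\otimes W$ by adding coordinates. Composing gives $T\degen\ang{3}\oplus\ang{1,2,1}$.

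To find the decomposition, write $T = A_1\otimes z_1 + A_2\otimes z_2 + A_3\otimes z_3$ with $3\times 3$ slices $A_k$. Choose scalars $a,b\in\bbF$ with $\det(A_3 - aA_1 - bA_2) = 0$. This is a single polynomial equation in two unknowns. Over an algebraically closed field it always has a solution: either it is a nonconstant polynomial, which has zeros, or it is identically zero, in which case any $(a,b)$ works. (If $\det$ is a nonzero constant there would be no solution, but the top-degree part of $\det(A_3-aA_1-bA_2)$ in $(a,b)$ is $-\det(aA_1+bA_2)$. If that also vanishes identically, we can first permute the roles of the slices or change basis in $W$ so that some slice is singular, and then take $a=b=0$.) This corner case is one point I would handle carefully.

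Set $M = A_3 - aA_1 - bA_2$, which has rank at most $2$. Then
\[ T = A_1\otimes(z_1 + a z_3) + A_2\otimes(z_2 + b z_3) + M\otimes z_3 . \]
So $S_1 := A_1\otimes z_1' + A_2\otimes z_2'$ is effectively a $3\times 3\times 2$ tensor, and $c = z_3$.

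The main step is showing that every $3\times3\times2$ tensor has border rank at most $3$. For a generic pencil, $A_1$ is invertible and $A_1^{-1}A_2$ has three distinct eigenvalues. Such a pencil can be simultaneously diagonalized, $A_i = P D_i Q$, so it has rank $3$. The set of tensors of border rank at most $3$ is Zariski closed, and it contains this Zariski-dense set, so it is the whole space $\bbF^3\otimes\bbF^3\otimes\bbF^2$. Here algebraic closedness is used to ensure the generic locus is nonempty and dense.

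Alternatively, one can cite the Kronecker normal form of $3\times 3$ pencils and exhibit rank-$3$ border decompositions case by case. I expect the closure argument to be the cleanest. The only delicate points are this density claim and the degenerate case in the choice of $(a,b)$ above.
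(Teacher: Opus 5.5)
Your proof is correct and is essentially the paper's argument: the paper also writes $T=\bigl(A_1\otimes(z_1+\lambda z_3)+A_2\otimes z_2\bigr)+(A_3-\lambda A_1)\otimes z_3$, with $\lambda$ an eigenvalue (your $a=\lambda$, $b=0$). It then combines simultaneous diagonalization of the first two slices with the Zariski-closure characterization of degeneration. The only difference is that you apply the closure argument to the $3\times3\times2$ pencil (border rank $\le 3$) instead of to the whole $3\times3\times3$ tensor, which forces the corner case; you handle it correctly, since $\det(aA_1+bA_2)\equiv 0$ already implies $\det A_1=0$, so $A_1$ itself can serve as $M$.
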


\begin{proof}
    It is known that degeneration can be characterized topologically (cf.~\cite[Theorem 5.8]{Str87RelBilin}, \cite[Theorem 20.24]{BCS13ACT}). Let $C \subseteq \bbF^3 \otimes \bbF^3 \otimes \bbF^3$ be the set of tensors that can be restricted from $\ang{3} \oplus \ang{1,2,1}$; then $T \degen \ang{3} \oplus \ang{1,2,1}$ if and only if $T \in \ov{C}$, where $\ov{C}$ denotes the Zariski closure of $C$. Thus, it suffices to prove $T \leq \ang{3} \oplus \ang{1,2,1}$ for a \emph{generic} $T$, and the statement follows.

    If $T$ is generic, we may first assume that the $z_1$-slice of $T$ is a nonsingular matrix. By a change of basis, we assume that the $z_1$-slice is of the form $x_1 y_1 + x_2 y_2 + x_3 y_3$. Furthermore, we can assume the $z_2$-slice is a diagonalizable matrix; we then apply the corresponding similarity transformation which preserves the form of the first slice and transforms the second slice into $\alpha_1 x_1 y_1 + \alpha_2 x_2 y_2 + \alpha_3 x_3 y_3$. Now we can write the tensor $T$ as
    \[ \sum_{i=1}^3 x_i y_i (z_1 + \alpha_i z_2) + (xMy) z_3 \]
    for some matrix $M$. Letting $\lambda$ be one of the eigenvalues of $M$, then we have
    \[ T = \sum_{i=1}^3 x_i y_i (z_1 + \alpha_i z_2 + \lambda z_3) + x(M - \lambda I)y z_3 \]
    where the first summation can be restricted from $\ang{3}$ and the second term is a matrix of rank at most $2$, which is a restriction of $\ang{1,2,1}$.
\end{proof}

\begin{proposition} \label{prop:toy_bound}
    If tensor $T$ satisfies the degeneration relations on three directions,
    \begin{align}
        T & \degen \ang{r} \oplus \ang{s,1,1}, \label{eq:toy1}\\
        T & \degen \ang{r} \oplus \ang{1,s,1}, \label{eq:toy2}\\
        T & \degen \ang{r} \oplus \ang{1,1,s}, \label{eq:toy3}
    \end{align}
    then its asymptotic rank satisfies $\AR(T) \leq r + s^{\omega/3}$.
\end{proposition}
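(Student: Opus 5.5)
By Strassen duality (\Cref{thm:duality}) we have $\AR(T)=\max_{\phi\in\calX}\phi(T)$, so it suffices to fix an arbitrary spectral point $\phi\in\calX$ and show $\phi(T)\le r+s^{\omega/3}$. By \Cref{prop:param-specmm}, $\phi$ acts on matrix multiplication tensors through a triple $(\theta_1,\theta_2,\theta_3)\in\specMM$. Applying $\phi$ to each of \eqref{eq:toy1}--\eqref{eq:toy3}, and using monotonicity, additivity and normalization, gives
\[
\phi(T)\le r+s^{\theta_i}\quad\text{for each } i=1,2,3,
\qquad\text{hence}\qquad
\phi(T)\le r+s^{\min(\theta_1,\theta_2,\theta_3)}.
\]

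The remaining step is to bound $\min_i\theta_i$ using \Cref{prop:specmm-sum-theta}. Since $\theta_1+\theta_2+\theta_3\le\omega$, the minimum of the three is at most their average, so $\min_i\theta_i\le\omega/3$. Because $s\ge1$, the map $x\mapsto s^x$ is nondecreasing, and therefore $s^{\min_i\theta_i}\le s^{\omega/3}$. (If $s$ were $0$, the bound would be trivial.)

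Combining the two displays gives $\phi(T)\le r+s^{\omega/3}$ for every $\phi\in\calX$. Taking the maximum over $\phi$ and invoking duality then yields $\AR(T)\le r+s^{\omega/3}$.

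I do not expect a real obstacle. The key idea is to use all three directional degenerations simultaneously, so that each spectral point is controlled by its \emph{smallest} exponent, together with the fact that the exponents sum to at most $\omega$. The only care needed is to apply \Cref{prop:specmm-sum-theta} to the specific $\phi$ at hand, which is valid because its $\theta$-triple lies in $\specMM$.
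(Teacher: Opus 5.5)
Your proposal is correct and follows the paper's proof essentially step for step. You apply $\phi$ to all three degenerations to get $\phi(T)\le r+s^{\min_i\theta_i}$, bound $\min_i\theta_i\le(\theta_1+\theta_2+\theta_3)/3\le\omega/3$ via \Cref{prop:specmm-sum-theta}, and conclude by Strassen duality, exactly as the paper does.
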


\begin{proof}
    In this proof, we demonstrate the general principle of \emph{Strassen calculus}.

    \emph{Step 1: Pass to asymptotic spectrum.} Let $\phi \in \calX$ be a point of the asymptotic spectrum. Recall that $\phi$ is \emph{monotone}---it preserves degenerations---so we can apply it to \eqref{eq:toy1}, resulting in
    \[ \phi(T) \leq \phi(\ang{r} \oplus \ang{s,1,1}). \]
    Using the \emph{additivity} of $\phi$, we can pull out the direct sum:
    \[ \phi(T) \leq \phi(\ang{r}) + \phi(\ang{s,1,1}). \]
    Moreover, $\phi$ is \emph{normalized}, so $\phi(\ang{r}) = r$. As shown in \Cref{prop:param-specmm}, the value of a matrix multiplication tensor can be parametrized by $(\theta_1, \theta_2, \theta_3)$ uniquely determined by $\phi$, such that $\phi(\ang{s,1,1}) = s^{\theta_1}$. We can apply the same argument to \eqref{eq:toy2} and \eqref{eq:toy3}, obtaining
    \[ \phi(T) \leq r + s^{\theta_1}, \quad \phi(T) \leq r + s^{\theta_2}, \quad \phi(T) \leq r + s^{\theta_3}. \]

    \emph{Step 2: Use known inequalities to estimate spectral values.} Each of the three degenerations gives an upper bound of $\phi(T)$. To obtain the tightest bound, we take the minimum:
    \[ \phi(T) \leq r + \min \{s^{\theta_1}, s^{\theta_2}, s^{\theta_3}\}. \]
    To translate this into an absolute value independent of the spectral point $\phi$, we utilize \Cref{prop:specmm-sum-theta}: $\theta_1 + \theta_2 + \theta_3 \leq \omega$. We remind the reader that this inequality essentially stems from the fact that $\AR(\ang{n,n,n}) = n^\omega$. Since the $\theta_i$ are all non-negative, at least one must be less than or equal to the average $\omega/3$, thus
    \[ \phi(T) \leq r + s^{(\theta_1 + \theta_2 + \theta_3) / 3} \leq r + s^{\omega / 3}. \]

    \emph{Final step: Return to primal estimate.} Now we have an absolute upper bound of $\phi(T)$ for any $\phi$. By Strassen duality theorem (\Cref{thm:duality}), the asymptotic rank $\AR(T)$ of $T$ is the maximum of $\phi(T)$ over $\calX$, so we get $\AR(T) \leq r + s^{\omega/3}$.
\end{proof}

\begin{corollary} \label{cor:dim_three_ar}
    Any $3 \times 3 \times 3$ tensor $T$ has $\AR(T) \leq 3 + 2^{\omega/3}$.
\end{corollary}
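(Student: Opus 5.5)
The plan is to combine \Cref{prop:toy_degen} with \Cref{prop:toy_bound} using $r=3$ and $s=2$. The one issue is that \Cref{prop:toy_degen} assumes an algebraically closed field, while the corollary is stated for an arbitrary $\bbF$. So I would first pass to the algebraic closure $\ov{\bbF}$ and then argue that this does not change the asymptotic rank bound. Let me treat the algebraically closed case first.

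Over algebraically closed $\bbF$, \Cref{prop:toy_degen} gives $T \degen \ang{3}\oplus\ang{1,2,1}$ for every $3\times3\times3$ tensor $T$. To get the other two degenerations \eqref{eq:toy1} and \eqref{eq:toy3}, I use the cyclic symmetry of the three modes. Let $T^{\sigma}$ be $T$ with its modes cyclically permuted. It is again a $3\times 3\times 3$ tensor, so \Cref{prop:toy_degen} gives $T^{\sigma}\degen \ang{3}\oplus\ang{1,2,1}$. Permuting the modes back sends $\ang{3}$ to itself and $\ang{1,2,1}$ to $\ang{2,1,1}$ or $\ang{1,1,2}$, and degeneration is preserved under a simultaneous mode permutation. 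Doing this for both nontrivial cyclic shifts yields $T\degen\ang{3}\oplus\ang{2,1,1}$ and $T\degen\ang{3}\oplus\ang{1,1,2}$. Alternatively, one can rerun the proof of \Cref{prop:toy_degen} slicing along the $x$- or $y$-mode instead of the $z$-mode. With all three hypotheses in place, \Cref{prop:toy_bound} with $r=3$ and $s=2$ gives $\AR(T)\le 3+2^{\omega/3}$.

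For a general field, the step I expect to be the main obstacle is going from $\ov{\bbF}$ back to $\bbF$: I need $\AR_{\bbF}(T)\le \AR_{\ov{\bbF}}(T)$ up to the final bound. My first approach is to note that the degeneration from \Cref{prop:toy_degen} for the specific $T$ uses only finitely many scalars, so it is defined over a finite extension $K/\bbF$. By a standard rank argument for field extensions, $\Rk_{\bbF}(S)\le [K:\bbF]\,\Rk_K(S)$. Applying this to $T^{\otimes n}$, taking $n$-th roots and letting $n\to\infty$, the constant factor disappears and $\AR_{\bbF}(T)\le \AR_K(T)$.

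There is a remaining subtlety: $\omega$ in the spectral argument must refer to $K$. This is fine because $\omega$ depends only on the characteristic, and $K$ has the same characteristic as $\bbF$. So the bound $\AR_K(T)\le 3+2^{\omega/3}$ from the algebraically closed case transfers to $\bbF$, which completes the proof.
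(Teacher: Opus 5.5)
Your proposal is correct and is essentially the paper's proof: reduce to an algebraically closed field, obtain all three degenerations from \Cref{prop:toy_degen} by relabeling modes, and apply \Cref{prop:toy_bound} with $r=3$, $s=2$; the only difference is that the paper simply cites invariance of asymptotic rank under field extension, while you sketch the descent through a finite extension $K$. One small correction to that sketch: the usual descent argument gives $\Rk_{\bbF}(S)\le \Rk_{\bbF}(\mu_{K/\bbF})\,\Rk_K(S)$, where $\mu_{K/\bbF}$ is the multiplication tensor of $K$ over $\bbF$ (whose rank is at least $2[K:\bbF]-1$), not the factor $[K:\bbF]$ you quote, but only independence from $n$ is needed, so your conclusion $\AR_{\bbF}(T)\le \AR_K(T)$ is unaffected.
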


\begin{proof}
    Since asymptotic rank is invariant under field extension \cite[Proposition~15.17]{BCS13ACT}, we may assume $\bbF$ is algebraically closed. Then by \Cref{prop:toy_degen}, we have $T \degen \ang{3} \oplus \ang{1,2,1}$. Applying \Cref{prop:toy_degen} on the tensor with modes relabeled yields the other two degenerations required by \Cref{prop:toy_bound}. Thus we have $\AR(T) \leq 3 + 2^{\omega/3}$.
\end{proof}

Previously, to the best of our knowledge, the upper bound on $\AR(T)$ for a generic $3 \times 3 \times 3$ tensor $T$ was $5$, which is the generic rank. We see that $3 + 2^{\omega/3} < 3 + 2 = 5$ represents an improvement. Numerically, using the current upper bound for $\omega$, we have $3 + 2^{\omega/3} < 4.7297$.

The generic rank (Strassen--Lickteig \cite{Str83GenericR,Lic85GenericR}) establishes that for any $d \times d \times d$ tensor $T$, $T \degen \ang{r_d}$, where
\[ r_d = \begin{cases}
    5, & \text{if } d = 3,\\
    \left\lceil\frac{d^3}{3d-2}\right\rceil, & \text{otherwise}.
\end{cases} \]
This bound works well for small $d$, while for large $d$ the current best asymptotic rank bound comes from $T \leq \ang{d,1,d}$ (though we will refine this in \Cref{subsec:exponent}). This suggests an alternative approach to improving the asymptotic rank of general tensors:

\begin{problem}
    Find an intermediate tensor $S_d$ that admits degenerations for all $d \times d \times d$ tensors (or significant subclasses) to obtain better asymptotic rank estimates.
\end{problem}

We leave this question for future research.

\begin{remark}
    For this example, it is not difficult to derive the same upper bound via \emph{traditional derivations}. We encourage the reader to consider such an alternative proof, which may help in understanding the interplay between classical deductions and the operations in Strassen calculus. However, as the given degenerations become more complicated, the complexity of classical deduction grows rapidly and becomes unintuitive. In contrast, the \emph{Strassen calculus} remains systematic, typically adding only a few lines of derivation. In certain regimes, such as those analyzed in \Cref{sec:omega}, the optimum of the optimization problem may be difficult to describe by an easy criterion, making it reasonable to rely on computer-aided evaluation. In such cases, a traditional derivation would need to incorporate different regimes and sensitively detect shifts in behavior as parameters change, which is a prohibitively complicated task.
\end{remark}

\subsection{Complicated example} \label{subsec:complicated}

In this subsection we present another example of Strassen calculus, based on a slightly more complicated
set of identities that will also be used in later applications.

\begin{proposition} \label{prop:comp_ex}
    If tensor $T$ satisfies the degeneration relations on three directions
    \begin{align}
        T \oplus \ang{t,1,1} & \degen \ang{r} \oplus \ang{s,1,1}, \label{eq:degen1}\\
        T \oplus \ang{1,t,1} & \degen \ang{r} \oplus \ang{1,s,1}, \label{eq:degen2}\\
        T \oplus \ang{1,1,t} & \degen \ang{r} \oplus \ang{1,1,s}, \label{eq:degen3}
    \end{align}
    for some $s,t\ge 1$, then its asymptotic rank satisfies
    \[
        \AR(T) \le
        \begin{cases}
            r + s^{2/3} - t^{2/3}, & \text{if } t>s,\\
            r + s^{\omega/3} - t^{\omega/3}, & \text{if } t\le s.
        \end{cases}
    \]
\end{proposition}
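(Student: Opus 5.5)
The plan is to run the same three-step Strassen calculus as in the proof of \Cref{prop:toy_bound}, with the one-slice tensor on the left-hand side now contributing a negative term. Fix $\phi\in\calX$ and let $(\theta_1,\theta_2,\theta_3)\in\specMM$ be its parameters from \Cref{prop:param-specmm}.

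\emph{Step 1: pass to the spectrum.} Apply $\phi$ to \eqref{eq:degen1}. Monotonicity, additivity and normalization give
\[ \phi(T) + t^{\theta_1} \le r + s^{\theta_1}. \]
The relations \eqref{eq:degen2} and \eqref{eq:degen3} give the same inequality with $\theta_2$ and $\theta_3$. Since all quantities are finite reals, we may subtract. Writing $f(\theta) := s^\theta - t^\theta$, this yields
\[ \phi(T) \le r + \min_{i} f(\theta_i). \]

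\emph{Step 2: bound $\min_i f(\theta_i)$ uniformly over $\specMM$, using \Cref{prop:specmm-sum-theta}.} The key observation is that $f$ is monotone on $[0,\infty)$, with direction determined by the comparison of $s$ and $t$. Its derivative is
\[ f'(\theta) = s^\theta \ln s - t^\theta \ln t. \]
\begin{itemize}
\item If $t\le s$, then $s^\theta\ge t^\theta\ge 1$ and $\ln s\ge \ln t\ge 0$, so $f'\ge 0$ and $f$ is nondecreasing. Hence $\min_i f(\theta_i) = f(\min_i\theta_i)$. The minimum of the $\theta_i$ is at most their average, which is at most $\omega/3$ because $\theta_1+\theta_2+\theta_3\le\omega$. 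So the bound is at most $f(\omega/3) = s^{\omega/3}-t^{\omega/3}$.
\item If $t>s$, the same comparison reversed gives $f'\le 0$, so $f$ is nonincreasing. Now $\min_i f(\theta_i) = f(\max_i\theta_i)$. The maximum of the $\theta_i$ is at least their average, which is at least $2/3$ by the lower bound $\theta_1+\theta_2+\theta_3\ge 2$ in \Cref{prop:specmm-sum-theta}. So the bound is at most $f(2/3)=s^{2/3}-t^{2/3}$.
\end{itemize}

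\emph{Step 3: return to the primal.} The resulting bound on $\phi(T)$ does not depend on $\phi$. Strassen duality (\Cref{thm:duality}) gives $\AR(T)=\max_\phi\phi(T)$, so the stated bound on $\AR(T)$ follows in each case.

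The only step requiring any care is Step 2. There one must notice that the relevant extreme of $(\theta_1,\theta_2,\theta_3)$ switches between the minimum coordinate and the maximum coordinate according to whether $t\le s$. Correspondingly, the lower bound $2$ on $\sum\theta_i$, rather than the upper bound $\omega$, becomes the operative constraint when $t>s$. The monotonicity check itself is routine.
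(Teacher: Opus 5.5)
Your proposal is correct and follows the paper's proof essentially step for step. You apply $\phi$ to the three degenerations, take the minimum of $s^{\theta_i}-t^{\theta_i}$, split on $t\le s$ versus $t>s$ using the upper bound $\omega$ or the lower bound $2$ on $\theta_1+\theta_2+\theta_3$, and conclude by Strassen duality; your derivative check of the monotonicity of $s^\theta-t^\theta$ just fills in a step the paper asserts without proof.
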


\begin{proof}
Fix an arbitrary spectral point $\phi \in \calX$.
Applying $\phi$ to the three degeneration relations yields:
\[
\renewcommand{\arraystretch}{1.4}
\begin{array}{rcll}
\phi(T)
&\le& r + s^{\theta_1} - t^{\theta_1}
& \text{(apply $\phi$ to \eqref{eq:degen1})} \\[0.3em]
\phi(T)
&\le& r + s^{\theta_2} - t^{\theta_2}
& \text{(apply $\phi$ to \eqref{eq:degen2})} \\[0.3em]
\phi(T)
&\le& r + s^{\theta_3} - t^{\theta_3}
& \text{(apply $\phi$ to \eqref{eq:degen3})}
\end{array}
\]
where $\phi(\ang{r}) = r$ by normalization, and
$\phi(\ang{t,1,1}) = t^{\theta_1}$,
$\phi(\ang{1,t,1}) = t^{\theta_2}$,
$\phi(\ang{1,1,t}) = t^{\theta_3}$
by \Cref{prop:param-specmm}.

Taking the strongest bound among the three directions yields
\[
\renewcommand{\arraystretch}{1.4}
\begin{array}{rcll}
\phi(T)
&\le& r + \min_{i\in\{1,2,3\}} \bigl(s^{\theta_i} - t^{\theta_i}\bigr)
& \text{(take the minimum over directions)} \\[0.4em]
&=& r + \min_{i\in\{1,2,3\}} \tau(\theta_i)
& \text{(set $\tau(x)=s^x-t^x$).}
\end{array}
\]

Now we split into two cases.

\emph{Case 1: $t>s$.} Then $\tau$ is decreasing on $\bbR_{\ge 0}$, so
$\min_i \tau(\theta_i)=\tau(\max_i \theta_i)$.
Since $\theta_1+\theta_2+\theta_3\ge 2$ (\Cref{prop:specmm-sum-theta}), we have $\max_i \theta_i\ge 2/3$, and thus
\[
    \phi(T)\le r+\tau(\max_i \theta_i)\le r+\tau(2/3)=r+s^{2/3}-t^{2/3}.
\]

\emph{Case 2: $t\le s$.} Then $\tau$ is increasing on $\bbR_{\ge 0}$, so
$\min_i \tau(\theta_i)=\tau(\min_i \theta_i)$.
Since $\theta_1+\theta_2+\theta_3\le \omega$ (\Cref{prop:specmm-sum-theta}), we have $\min_i \theta_i\le (\theta_1+\theta_2+\theta_3)/3 \le \omega/3$, and therefore
\[
    \phi(T)\le r+\tau(\min_i \theta_i)\le r+\tau(\omega/3)=r+s^{\omega/3}-t^{\omega/3}.
\]

Since the above inequality holds for every $\phi\in\calX$,
Strassen duality gives
\[
\AR(T) = \max_{\phi\in\calX} \phi(T)
\le
\begin{cases}
    r + s^{2/3} - t^{2/3}, & \text{if } t>s,\\
    r + s^{\omega/3} - t^{\omega/3}, & \text{if } t\le s.
\end{cases}
\qedhere
\]
\end{proof}

\begin{remark}
    The proof illustrates a typical feature of Strassen calculus: it can \emph{turn lower bounds into upper bounds}.
    In the case $t > s$ we used the nontrivial inequality $\theta_1+\theta_2+\theta_3\geq 2$, which ultimately comes from the optimal asymptotic subrank bound $\AQ(\ang{n,n,n})=n^2$.
\end{remark}

\section{Speedup theorems} \label{sec:speedup}

\subsection{Notation and blockwise restrictions}

To clarify our statements, we first introduce notation for handling tensor modes and restrictions. Let $T \in U' \otimes V' \otimes W'$ and $S \in U \otimes V \otimes W$ be tensors. If a tensor mode is identified as a direct sum, such as $U = U_1 \oplus U_2$ and $U' = U'_1 \oplus U'_2$, the linear map $A$ in a restriction $T \leq S$ can be written in blockwise form:
\[ A = \begin{pmatrix} A_{11} & A_{12} \\ A_{21} & A_{22} \end{pmatrix}, \]
where $A_{ij}$ is a linear map from $U_j$ to $U'_i$. Correspondingly, we may decompose the tensor $S$ as $S = S_1 + S_2$ where $S_j \in U_j \otimes V \otimes W$. In this context, the restriction $T = (A \otimes B \otimes C)S$ can be expanded as:
\[ T = \sum_{i,j} (A_{ij} \otimes B \otimes C) (S_j). \]

We also need to clarify the relationship between subspaces of functional spaces and the linear maps used in restrictions. Let $U^\vee$ denote the dual space of $U$, consisting of all linear functions $f \colon U \to \bbF$. Suppose $A \subseteq U^\vee$ is a subspace of dimension $d$. We can identify the entire subspace $A$ with a surjective linear map $U \to \bbF^d$: choosing a basis $\{f_1, \dots, f_d\}$ for $A$ defines the coordinates of the image in $\bbF^d$. While this specific map depends on the choice of basis, any two such choices yield the same resulting tensor up to isomorphism. Thus, we may treat a subspace of $U^\vee$ as an ``aggregate'' linear map in a restriction $(A \otimes B \otimes C) S$ without ambiguity.

\subsection{Free-lunch speedup theorem}

We begin by describing a condition that allows one to obtain a direct sum on the left-hand side \emph{for free}, without the need to add any additional terms to the right-hand side. For readers unfamiliar with these works, it is worth noting that the arguments of
Coppersmith--Winograd \cite[Theorem A]{CW82Improve} and
Strassen \cite[Lemma 3.11]{Str88AsymSpec} likewise proceed in two distinct stages. Our approach follows this established logic, beginning here with the formulation of the first step.

\begin{theorem}[Free-lunch speedup] \label{thm:freelunch_speedup}
    Let $T \in U' \otimes V' \otimes W'$ and $S \in U \otimes V \otimes W$ be two tensors, with a restriction $T \leq S$ given by the maps $(A, B, C)$. Let $C^\perp \subseteq W^\vee$ be the subspace of linear functions $f \colon W \to \bbF$ such that $(A \otimes B \otimes f)S = 0$. For any chosen subspace $C' \subseteq C^\perp$, define the subspaces $A' \subseteq U^\vee$ and $B' \subseteq V^\vee$ as follows:
    \begin{align*}
        A' &= \{ u \in U^\vee : (u \otimes B \otimes C')S = 0\}, \\
        B' &= \{ v \in V^\vee : (A \otimes v \otimes C')S = 0\}.
    \end{align*}
    Then we have the direct sum degeneration:
    \[ T \oplus T' \degen S, \]
    where $T' = (A' \otimes B' \otimes C') S$.
\end{theorem}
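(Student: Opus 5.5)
The plan is to build the degeneration explicitly. We place $T$ and $T'$ in complementary blocks of the target and use powers of $\lambda$ to separate them. Write the target spaces as $U'\oplus U''$, $V'\oplus V''$, $W'\oplus W''$, where $U'' = \bbF^{\dim A'}$ and so on. Using the identification of subspaces of dual spaces with surjective linear maps from the previous subsection, I regard $A', B', C'$ as linear maps $U\to U''$, $V\to V''$, $W\to W''$. The stacked, $\lambda$-weighted maps are
\[ A_\lambda = \begin{pmatrix} A \\ \lambda^{a} A' \end{pmatrix},\quad B_\lambda = \begin{pmatrix} B \\ \lambda^{b} B' \end{pmatrix},\quad C_\lambda = \begin{pmatrix} C \\ \lambda^{c} C' \end{pmatrix}, \]
with integer weights satisfying $a+b+c=0$. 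I will take $a=b=1$ and $c=-2$.

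Next I expand $(A_\lambda\otimes B_\lambda\otimes C_\lambda)S$ into its eight block components. The component indexed by $(X,Y,Z)$, with $X\in\{A,A'\}$, $Y\in\{B,B'\}$ and $Z\in\{C,C'\}$, is $(X\otimes Y\otimes Z)S$ multiplied by $\lambda$ raised to the total weight of the primed maps used. The hypotheses give three exact vanishings:
\begin{itemize}
\item $(A\otimes B\otimes C')S=0$, because $C'\subseteq C^\perp$;
\item $(A'\otimes B\otimes C')S=0$, by the definition of $A'$;
\item $(A\otimes B'\otimes C')S=0$, by the definition of $B'$.
\end{itemize}
These are precisely the components that would carry negative powers $\lambda^{-2}$ and $\lambda^{-1}$. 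Since they vanish identically, no negative-order terms survive.

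The remaining components behave as follows:
\begin{itemize}
\item $(A,B,C)$ gives $T$ at order $\lambda^0$.
\item $(A',B',C')$ gives $T'$ at order $\lambda^{a+b+c}=\lambda^0$.
\item $(A',B,C)$, $(A,B',C)$ and $(A',B',C)$ appear with $\lambda^{1}$, $\lambda^{1}$ and $\lambda^{2}$ respectively.
\end{itemize}
Hence the $\lambda^0$-coefficient is exactly $T\oplus T'$, sitting in the block $(U'\otimes V'\otimes W')\oplus(U''\otimes V''\otimes W'')$, and all lower coefficients are zero. By the definition of degeneration, this gives $T\oplus T'\degen S$.

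The only real work is choosing the weights so that every term that might be nonzero has strictly positive order, while both diagonal terms have order zero. This amounts to finding integers with $a,b>0$ and $a+b+c=0$, which forces $c<0$. The asymmetric roles in the definitions (only $C'$ is constrained against $A,B$, while $A'$ and $B'$ are defined relative to $C'$) are exactly what make the three negative-order components vanish. Once this bookkeeping is set up, no further obstacle is expected.
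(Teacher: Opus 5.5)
Your proposal is correct and is essentially the paper's proof. The paper also stacks $(A,A')$, $(B,B')$, $(C,C')$, uses the same three vanishings, and applies the monomial degeneration $\mathrm{diag}(1,\lambda)\otimes\mathrm{diag}(1,\lambda)\otimes\mathrm{diag}(\lambda^2,1)$ followed by rescaling by $\lambda^{-2}$; this is your weighting $(a,b,c)=(1,1,-2)$ up to an overall factor, with the only cosmetic difference being that you put the negative power directly into $C_\lambda$ (permitted by the paper's Laurent-series definition of degeneration) rather than dividing at the end.
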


\begin{proof}
    The construction of the degeneration proceeds in two steps.

    \emph{Step 1.} We first construct a restriction by augmenting the linear maps. We define the new maps as $(A,A')$, $(B,B')$, and $(C,C')$, where the block structure corresponds to the direct sum of the target spaces. Consider the resulting tensor:
    \[ R = \left[ \begin{pmatrix} A \\ A' \end{pmatrix} \otimes \begin{pmatrix} B \\ B' \end{pmatrix} \otimes \begin{pmatrix} C \\ C' \end{pmatrix} \right] S. \]
    We examine the blocks of $R$ arising from the terms involving $C'$. By the construction of our subspaces:
    \begin{itemize}
        \item $(A \otimes B \otimes C') S = 0$ by the definition of $C'$.
        \item $(A' \otimes B \otimes C') S = 0$ by the definition of $A'$.
        \item $(A \otimes B' \otimes C') S = 0$ by the definition of $B'$.
    \end{itemize}
    Consequently, the only potentially non-zero block involving $C'$ is $(A' \otimes B' \otimes C') S = T'$. In the overall block structure of $R$, the diagonal blocks are $R_{111} = T$ and $R_{222} = T'$, while the blocks $R_{112}, R_{122},$ and $R_{212}$ are guaranteed to be zero.

    \emph{Step 2.} We now eliminate the remaining junk terms (namely $R_{121}, R_{211},$ and $R_{221}$) via a \emph{monomial degeneration}. We apply the following diagonal transformation to the tensor $R$:
    \[ \begin{pmatrix} 1 & \\ & \lambda \end{pmatrix} \otimes \begin{pmatrix} 1 & \\ & \lambda \end{pmatrix} \otimes \begin{pmatrix} \lambda^2 & \\ & 1 \end{pmatrix} R = \lambda^2 (T + T') + O(\lambda^3). \]
    By scaling by $\lambda^{-2}$ and taking the limit as $\lambda \to 0$, the junk terms are eliminated, yielding the direct sum $T \oplus T'$. In conclusion, we have obtained $T \oplus T' \degen R \leq S$, which implies $T \oplus T' \degen S$.
\end{proof}

One might be concerned about why \Cref{thm:freelunch_speedup} only speaks about starting with a \emph{restriction} rather
than \emph{degeneration}. This is because the degeneration version is obtained by bootstrapping
the result: It is not using any specific property of degeneration, but just regarding it
as a restriction over the field extension $\bbF(\lambda)$.
\begin{corollary}[Free-lunch speedup, degeneration version] \label{cor:freelunch_speedup_degen}
    Let $T \degen S$ be a degeneration of $\bbF$-tensors. More specifically,
    let $T_\lambda$ be a $\bbF(\lambda)$-tensor before taking $\lambda\to 0$, 
    i.e.,
    \[  T + O(\lambda) = T_\lambda = (A_\lambda \otimes B_\lambda \otimes C_\lambda)S \]
    is the restriction of $\bbF(\lambda)$-tensors.
    Plug in $T \leq S$ with data $A_\lambda, B_\lambda$ and $C_\lambda$ into \Cref{thm:freelunch_speedup},
    for any choice of $C'$ thereby,
    let the resulting $\bbF(\lambda)$ tensor be $T'_\lambda$, $T'$ be the lowest order term of $T'_\lambda$ ($0$ if $T'_\lambda$ is
    completely zero). Then
    \[ T \oplus T' \degen S. \]
\end{corollary}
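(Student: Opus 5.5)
We work over the field $\bbK=\bbF(\lambda)$ and apply \Cref{thm:freelunch_speedup} to the $\bbK$-restriction $T_\lambda = (A_\lambda\otimes B_\lambda\otimes C_\lambda)S$, where $S$ is regarded as a $\bbK$-tensor by extension of scalars. For the chosen $C'_\lambda\subseteq C^\perp_\lambda$, the theorem produces the $\bbK$-subspaces $A'_\lambda, B'_\lambda$. Choose $\bbK$-bases of $A'_\lambda,B'_\lambda,C'_\lambda$ and clear denominators so that each basis vector has entries in $\bbF[\lambda]$. The proof of \Cref{thm:freelunch_speedup} then gives an explicit identity over $\bbK$:
\[ \left[\begin{pmatrix} \lambda A_\lambda \\ A'_\lambda\end{pmatrix}\otimes\begin{pmatrix} \lambda B_\lambda \\ B'_\lambda\end{pmatrix}\otimes\begin{pmatrix} \lambda^2 C_\lambda \\ C'_\lambda\end{pmatrix}\right] S \]
equals $\lambda^{4}T_\lambda\oplus T'_\lambda$ plus the three junk blocks $R_{121},R_{211},R_{221}$ multiplied by $\lambda^{3}$, $\lambda^{3}$ and $\lambda^{2}$ respectively. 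The blocks $R_{112},R_{122},R_{212}$ vanish identically over $\bbK$, because the defining conditions of $C',A',B'$ are exact $\bbK$-linear identities.

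The difficulty is that both $T_\lambda$ and $T'_\lambda$ now have their own $\lambda$-orders, and the junk terms may contain arbitrarily negative powers of $\lambda$ coming from $A_\lambda,B_\lambda,C_\lambda$. The plan is to introduce a \emph{separate} large parameter $\mu=\lambda^N$ for the monomial degeneration, in place of the single $\lambda$.

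First, rescale $A'_\lambda$ by a power of $\lambda$ so that $T'_\lambda = \lambda^{e}(T'+O(\lambda))$ becomes $T'+O(\lambda)$. This is legitimate because rescaling a basis of the subspace $A'_\lambda$ does not change the subspace, and the vanishing conditions are preserved. If $T'_\lambda=0$, then $T'=0$ and the original degeneration $T\degen S$ already suffices.

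Second, let $-M$ be a lower bound on the $\lambda$-orders of all entries of $A_\lambda,B_\lambda,C_\lambda,A'_\lambda,B'_\lambda,C'_\lambda$, so that every block of $R$ is $O(\lambda^{-3M})$. Apply the diagonal scaling $(1,\mu)\otimes(1,\mu)\otimes(\mu^2,1)$ with $\mu=\lambda^N$, and then multiply by $\mu^{-2}$. This yields
\[ T_\lambda \oplus T'_\lambda + \mu R_{121}+\mu R_{211} + \mu^{0}\cdot\ldots \]
Here I need to check the exponents. Under the scaling, the block $R_{ijk}$ acquires the factor $\mu^{a_i+a_j+c_k-2}$, where $a=(0,1)$ and $c=(2,0)$. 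This gives $R_{111}\mapsto \mu^0$, $R_{222}\mapsto\mu^0$, $R_{121},R_{211}\mapsto\mu^{1}$, and $R_{221}\mapsto\mu^{2}$. All junk blocks therefore carry at least one factor of $\mu=\lambda^N$. Choosing $N>3M$ makes every junk term $O(\lambda)$. The resulting tensor is then $(T+O(\lambda))\oplus(T'+O(\lambda))+O(\lambda)$, and it is a $\bbF(\lambda)$-restriction of $S$ by maps with entries rational in $\lambda$. By the definition of degeneration, this gives $T\oplus T'\degen S$.

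The main obstacle is precisely this control of the junk blocks. Their $\lambda$-orders are not related to those of $T$ or $T'$, and they can even be very negative. The fix is to choose the exponent $N$ larger than any possible negative order, and I would verify that such a finite bound $M$ always exists. It does, since there are finitely many entries, each a rational function, and the blocks are trilinear in them.
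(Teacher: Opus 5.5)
Your proposal is correct and takes essentially the same route as the paper: you apply \Cref{thm:freelunch_speedup} over $\bbF(\lambda)$, rescale so that $T'_\lambda = T' + O(\lambda)$, and run the monomial degeneration with a separate parameter set to a large power $\lambda^N$ (the paper's $\varepsilon = \lambda^k$); your bound $N > 3M$ makes the choice of that power explicit. Two cosmetic points: the first displayed scaling with $(\lambda A_\lambda; A'_\lambda)$ is not the one you actually use and should be dropped, and clearing denominators vector by vector should be omitted, since your Laurent-order bound $-M$ already handles rational entries and rescaling different basis vectors by different powers of $\lambda$ could change the lowest-order term $T'$ of the given $T'_\lambda$.
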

\begin{proof}
    Firstly, by \Cref{thm:freelunch_speedup}, we obtain a degeneration of $\bbF(\lambda)$-tensors
    \[ T_\lambda \oplus T'_\lambda \degen S, \]
    say this degeneration is parametrized by another symbol $\varepsilon$. We may scale by $\lambda$ and $\varepsilon$,
    to assume that the degeneration is given by $A_{\lambda, \varepsilon},
    B_{\lambda, \varepsilon}$ and $C_{\lambda, \varepsilon}$, and $T'_\lambda = T' + O(\lambda)$,
    such that
    \[ (T \oplus T') + O(\lambda) + O_\lambda(\varepsilon) = (A_{\lambda, \varepsilon}
    \otimes B_{\lambda, \varepsilon}
    \otimes C_{\lambda, \varepsilon})S. \]
    The notation means that $O(\lambda)$ term is purely a $\lambda$-high order term tensor over $\bbF$,
    but $O_\lambda(\varepsilon)$ term is a $\varepsilon$-high order term tensor that can also contain $\lambda$.
    One can again reduce to the regular definition of degeneration by substituting $\varepsilon$
    by a large enough power $\varepsilon = \lambda^k$, then the whole restriction again becomes
    \[ (T \oplus T') + O(\lambda) = (A_{\lambda, \lambda^k}
    \otimes B_{\lambda, \lambda^k}
    \otimes C_{\lambda, \lambda^k})S. \]
    This concludes the degeneration.
\end{proof}

\subsubsection{Isolated summand}

We now introduce the notion of an \emph{isolated} summand, a property defined by Coppersmith and Winograd \cite[Definition 3.1]{CW82Improve} that is crucial for iterating speedup theorems.

Let $T \oplus T' \degen S$ be a degeneration defined by the polynomial maps $A_\lambda, B_\lambda$, and $C_\lambda$. Let $R_\lambda = (A_\lambda \otimes B_\lambda \otimes C_\lambda)S$ denote the $\bbF(\lambda)$-tensor prior to taking the limit $\lambda \to 0$. By definition, we must have $(R_\lambda)_{111} = T + O(\lambda)$ and $(R_\lambda)_{222} = T' + O(\lambda)$, while all other blocks $(R_\lambda)_{ijk}$ (where indices are not all $1$ or all $2$) must satisfy $(R_\lambda)_{ijk} = O(\lambda)$. We say that the direct summand $T'$ is an \emph{isolated summand} in the third mode with respect to the degeneration $(A_\lambda, B_\lambda, C_\lambda)$ if $(R_\lambda)_{ij2}$ is identically zero for all $(i, j) \neq (2, 2)$.

One can verify that this definition essentially coincides with Coppersmith and Winograd's original formulation when $T' = \ang{1,t,1}$. Moreover, the following facts are readily established.

\begin{proposition} \label{prop:speedup_isolated}
    The direct summand $T'$ of $T \oplus T'\degen S$ produced by \Cref{thm:freelunch_speedup} or \Cref{cor:freelunch_speedup_degen} is isolated in the third mode.
\end{proposition}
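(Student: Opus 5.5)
The plan is to read off the block structure of the $\bbF(\lambda)$-tensor $R_\lambda$ directly from the proof of \Cref{thm:freelunch_speedup}. In that proof, the degeneration is the composition of the augmented restriction $\bigl((A,A'),(B,B'),(C,C')\bigr)$ with the monomial scaling $\mathrm{diag}(1,\lambda)\otimes\mathrm{diag}(1,\lambda)\otimes\mathrm{diag}(\lambda^2,1)$ and the overall factor $\lambda^{-2}$. So $R_\lambda$ has blocks
\[ (R_\lambda)_{ijk} = \lambda^{e_{ijk}}\,(A_i\otimes B_j\otimes C_k)S, \]
where $A_1=A$, $A_2=A'$ (and similarly for $B$ and $C$), and $e_{ijk}$ is some integer exponent. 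Multiplication by a power of $\lambda$ cannot turn a nonzero tensor into zero or a zero tensor into a nonzero one. Hence, for $k=2$, it suffices to check that $(A_i\otimes B_j\otimes C')S=0$ for every $(i,j)\neq(2,2)$.

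That check consists of exactly the three vanishing statements already listed in Step~1 of that proof:
\begin{itemize}
    \item $(A\otimes B\otimes C')S=0$, because $C'\subseteq C^\perp$;
    \item $(A'\otimes B\otimes C')S=0$, by the definition of $A'$;
    \item $(A\otimes B'\otimes C')S=0$, by the definition of $B'$.
\end{itemize}
So the blocks $(R_\lambda)_{112}$, $(R_\lambda)_{212}$ and $(R_\lambda)_{122}$ are identically zero, not merely $O(\lambda)$. This is exactly isolation of $T'$ in the third mode.

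For \Cref{cor:freelunch_speedup_degen}, I would run the same argument over $\bbF(\lambda)$. Applying \Cref{thm:freelunch_speedup} to the data $(A_\lambda,B_\lambda,C_\lambda)$ gives an $\bbF(\lambda)(\varepsilon)$-tensor whose $(i,j,2)$ blocks, for $(i,j)\neq(2,2)$, are identically zero as elements of $\bbF(\lambda,\varepsilon)$-tensors, by the same three identities now over $\bbF(\lambda)$. The corollary's proof then rescales by powers of $\lambda$ and $\varepsilon$ and substitutes $\varepsilon=\lambda^k$. I would verify that each of these operations maps an identically zero block to an identically zero block. Rescaling clearly does. The substitution $\varepsilon=\lambda^k$ also does, since a zero rational function stays zero; here one should note that substituting into the maps commutes with forming the blocks. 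The resulting one-parameter family is the $R_\lambda$ defining the degeneration, so its $(i,j,2)$ blocks vanish identically.

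The only point needing care, and the main obstacle, is bookkeeping. One has to make sure that the ``degeneration defined by $A_\lambda,B_\lambda,C_\lambda$'' referred to in the definition of isolation is precisely the composed family constructed in these proofs. In particular, the block decomposition must be taken with respect to the direct sum $U'\oplus U''$ of target spaces used in Step~1, with $C'$ occupying the second block of the third mode. Once this identification is fixed, the statement is immediate.
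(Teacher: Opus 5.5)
Your proposal is correct and is essentially the argument the paper has in mind; the paper states this fact as readily established without writing out a proof. The three vanishing identities from Step~1 of the proof of \Cref{thm:freelunch_speedup} make the blocks $(R_\lambda)_{ij2}$, $(i,j)\neq(2,2)$, identically zero, and the monomial rescaling, the normalization step in \Cref{cor:freelunch_speedup_degen}, and the substitution $\varepsilon=\lambda^k$ all preserve identically zero blocks.
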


The following proposition generalizes \cite[Proposition 4.1]{CW82Improve}:
\begin{proposition} \label{prop:prod_isolated}
    Let $T_1', T_2'$ be isolated direct summands of degenerations
    \[ T_1 \oplus T_1' \degen S_1, \quad T_2 \oplus T_2' \degen S_2. \]
    Then, in the tensor product of these degenerations
    \[ (T_1 \otimes T_2) \oplus (T_1\otimes T_2') \oplus (T_1'\otimes T_2) \oplus (T_1' \otimes T_2') \degen S_1 \otimes S_2, \]
    the term $T_1'\otimes T_2'$ remains an isolated summand in the third mode.
\end{proposition}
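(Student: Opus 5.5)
The plan is to write down the tensor-product degeneration explicitly and then check the isolation condition block by block. Let the first degeneration be given by $(A_\lambda,B_\lambda,C_\lambda)$ with $R_\lambda=(A_\lambda\otimes B_\lambda\otimes C_\lambda)S_1$, and the second by $(A'_\lambda,B'_\lambda,C'_\lambda)$ with $R'_\lambda$ defined analogously from $S_2$. The product degeneration is given by the Kronecker products $(A_\lambda\otimes A'_\lambda, B_\lambda\otimes B'_\lambda, C_\lambda\otimes C'_\lambda)$, and it produces
\[ R_\lambda\otimes R'_\lambda = (A_\lambda\otimes A'_\lambda)\otimes(B_\lambda\otimes B'_\lambda)\otimes(C_\lambda\otimes C'_\lambda)\,(S_1\otimes S_2). \]
This is a valid degeneration: each block has the form $(R_\lambda)_{ijk}\otimes(R'_\lambda)_{i'j'k'}$, and its $\lambda$-order is the sum of the two orders. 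Hence the lowest-order term is exactly the four-fold direct sum displayed in the statement, and every other block is $O(\lambda)$.

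Next I fix the block structure. Each mode of the target decomposes as the tensor product of the two-block decompositions. In mode $U$, for example, the blocks are $U_a\otimes U'_{a'}$ for $a,a'\in\{1,2\}$. The four summands $T_1\otimes T_2$, $T_1\otimes T_2'$, $T_1'\otimes T_2$ and $T_1'\otimes T_2'$ correspond to the index pairs $(1,1),(1,2),(2,1),(2,2)$ used uniformly in all three modes. I regard the summand $T_1'\otimes T_2'$ as the second block of a two-block splitting: the block $(2,2)$ is one part and the remaining three pairs are the other. Under this splitting the relevant blocks of $R_\lambda\otimes R'_\lambda$ are indexed by pairs $((i,i'),(j,j'),(k,k'))$, and each equals $(R_\lambda)_{ijk}\otimes(R'_\lambda)_{i'j'k'}$.

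It remains to verify that every block with third-mode index $(k,k')=(2,2)$ is identically zero unless $(i,i')=(2,2)$ and $(j,j')=(2,2)$. Such a block is $(R_\lambda)_{ij2}\otimes(R'_\lambda)_{i'j'2}$. If $(i,j)\neq(2,2)$, the first factor is identically zero because $T_1'$ is isolated. If $(i',j')\neq(2,2)$, the second factor vanishes because $T_2'$ is isolated. So whenever $((i,i'),(j,j'))\neq((2,2),(2,2))$, at least one factor is zero and the product block is zero.

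The only point needing care, rather than any real obstacle, is the bookkeeping. The isolation definition was stated for a two-summand left-hand side, while the product has four summands. I will therefore state explicitly that $T_1'\otimes T_2'$ is isolated relative to the complementary summand $(T_1\otimes T_2)\oplus(T_1\otimes T_2')\oplus(T_1'\otimes T_2)$. Since $\otimes$ is bilinear, identical vanishing, as opposed to vanishing merely to order $O(\lambda)$, is preserved by the tensor product.
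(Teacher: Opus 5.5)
Your proposal is correct. The paper states this proposition as ``readily established'' without giving a proof, and your blockwise check is exactly the routine verification intended: each block of the product is $(R_\lambda)_{ijk}\otimes(R'_\lambda)_{i'j'k'}$, so any block with third-mode index $(2,2)$ whose first two indices are not both $(2,2)$ has an identically vanishing factor. Grouping the other three summands into a single complementary block, so that the two-summand definition of isolation applies, is also the right way to handle the bookkeeping.
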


\subsection{Speedup by one slice}

In this subsection, we recover the classical speedup theorem of Coppersmith and Winograd, as well as Strassen's generalization, in the case where the tensors appended to both sides are one-slice matrix multiplication tensors. We also refine the notion of \emph{fullness} originally introduced by Coppersmith and Winograd \cite[Definition 3.1]{CW82Improve}.

Usually, one cannot hope to directly use the free-lunch speedup theorem, because the first step requires choosing a subspace $C' \subseteq C^\perp$. However, for a reasonable degeneration $T \degen S$, the corresponding space $C^\perp$ is typically zero. Indeed, if $C^\perp$ were non-zero, then viewing $S$ as an ``upper bound'' for $T$, the tensor $S$ would contain redundant $W$-slices; removing these slices would essentially yield a more effective upper bound for $T$.

Therefore, the standard speedup strategy proceeds as follows: First, artificially add a tensor $S'$ to the right-hand side. This new summand $S'$ contributes nothing to the degeneration $T \degen S \oplus S'$ because its component in the $C$-map is set to zero. However, its components in the $A$ and $B$ maps can be carefully chosen to satisfy the conditions of \Cref{thm:freelunch_speedup} (or \Cref{cor:freelunch_speedup_degen}), enabling the extraction of a non-trivial summand $T'$.

We first examine the implications of the free-lunch speedup theorem when $C'$ is a one-dimensional subspace.

\begin{proposition} \label{prop:freelunch_oneslice}
    Assume the setting of \Cref{thm:freelunch_speedup}. Let $C' \subseteq C^\perp$ be a one-dimensional subspace spanned by a linear function $f \colon W \to \bbF$. Let $r = \Rk((\id_U \otimes \id_V \otimes f) S)$ be the rank of the resulting matrix in $U \otimes V$. Then we have $T' \cong \ang{1, t, 1}$, where
    \[ t \geq r - n - m, \]
    with $n = \dim U'$ and $m = \dim V'$.
\end{proposition}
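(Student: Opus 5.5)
Since $C'$ is spanned by the single functional $f$, the tensor $T' = (A'\otimes B'\otimes f)S$ has a one-dimensional third mode. It is therefore the matrix $M := (\id_U\otimes\id_V\otimes f)S \in U\otimes V$, restricted by the subspaces $A'\subseteq U^\vee$ and $B'\subseteq V^\vee$. Any tensor with a one-dimensional third mode whose matrix has rank $t$ is isomorphic to $\ang{1,t,1}$: after a change of basis the matrix is $\sum_{j=1}^t x_j y_j$, which is $\ang{1,t,1}$ up to naming coordinates. So $T'\cong\ang{1,t,1}$ with $t$ the rank of $M$ restricted to $A'\times B'$, and it remains to lower-bound this rank.

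Next I would identify $A'$ and $B'$ explicitly. Write $Q\colon U\to V^\vee$ for the map induced by $M$, i.e. $\langle Q(x), y\rangle = M(x,y)$ viewing $M$ as a bilinear form on $U^\vee\times V^\vee$ (equivalently a matrix). The condition $(u\otimes B\otimes f)S=0$ says exactly that $u$, paired with $M$, is killed by $B$. In other words, $M^\sfT u$, an element of $V$, lies in $\ker B$. Since $B\colon V\to V'$ has $\dim V'=m$, $\ker B$ has codimension at most $m$. Hence $A' = (M^\sfT)^{-1}(\ker B)$, and its image $M^\sfT A'$ is $\operatorname{im}M^\sfT\cap\ker B$, of dimension at least $r-m$. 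Symmetrically, $B'=M^{-1}(\ker A)$ with $\ker A$ of codimension at most $n$.

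The key step, and the main obstacle, is bounding the rank of the bilinear form $M|_{A'\times B'}$, since the pairing mixes both restrictions. I would argue by quotienting. Let $\bar A' = A'/(A'\cap\ker M^\sfT)$, which has dimension at least $r-m$, and embed it into $\operatorname{im}M^\sfT\subseteq V$ via $u\mapsto M^\sfT u$. For $v\in B'$, the pairing is $M(u,v)=\langle M^\sfT u, v\rangle$, so the rank equals $\dim(M^\sfT A') - \dim\{\,x\in M^\sfT A' : \langle x, v\rangle=0\ \forall v\in B'\,\}$. The annihilator of $B'$ in $V$ is $B'^{\perp}=(M^{-1}(\ker A))^\perp = M^\sfT\big((\ker A)^\perp\big)$, using that preimage and annihilator dualize to image. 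Since $(\ker A)^\perp$ is the row space of $A$, of dimension at most $n$, this annihilator has dimension at most $n$. Therefore
\[ t \ge \dim(M^\sfT A') - n \ge (r-m)-n, \]
which is the claimed $t\ge r-n-m$.

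Getting the annihilator identity right, including the roles of duals and transposes, is where care is needed. If it becomes messy, a fallback is to choose bases so that $M=\sum_{i=1}^r x_iy_i$, restrict to the $r\times r$ identity block, and note that $A'$ and $B'$ are cut out by at most $m$ and $n$ linear conditions respectively. Then a $\operatorname{rank}(PQ)\ge\operatorname{rank}P+\operatorname{rank}Q-r$ type (Sylvester) inequality gives the same bound $t\ge r-m-n$.
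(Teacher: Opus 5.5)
Your proof is correct and is essentially the paper's argument. $A'$ and $B'$ are kernels of linear maps into $V'$ and $U'$, so they have codimension at most $m$ and $n$, and restricting $M=(\id_U\otimes\id_V\otimes f)S$ on each side costs at most that much rank. The annihilator identity $(B')^\perp = M^\sfT\bigl((\ker A)^\perp\bigr)$ is valid but an unnecessary detour: $\dim (B')^\perp = \operatorname{codim} B' \le n$ follows directly from $B'$ being the kernel of $v\mapsto (A\otimes v\otimes f)S\in U'$, just as you argued for $A'$.
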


\begin{proof}
    The restriction $T' = (A' \otimes B' \otimes f)S$ can be viewed as first contracting the $W$-mode: $M = (\id_U \otimes \id_V \otimes f)S$ yields a matrix of rank $r$. The subspace $A'$ is defined as the kernel of the map $u \mapsto (u \otimes B \otimes f)S$. Since this map maps to $V'$, its kernel $A'$ has codimension at most $\dim V' = m$. Thus, restricting $M$ to $A'$ reduces the rank by at most $m$. Similarly, restricting to $B'$ reduces the rank by at most $n$. Consequently, $T' = (A' \otimes B')M$ has rank at least $t \geq r - n - m$. A rank-$t$ matrix (viewed as a tensor) is isomorphic to $\ang{1,t,1}$.
\end{proof}

We now explain how to append a one-slice tensor to both sides of the degeneration.

\begin{proposition}[One-slice speedup {\cite[Lemma 3.12]{Str88AsymSpec}}] \label{prop:add_slice}
    Let $T \leq S$ be a restriction defined by maps $A, B, C$. For a linear function $f \colon W \to \bbF$, let $q = \Rk((\id_U \otimes \id_V \otimes f)S)$ be the rank of the corresponding matrix in $U \otimes V$, and let $s = \Rk((A \otimes B \otimes f)S)$ be the rank of the matrix in $U' \otimes V'$. Then there exists a restriction $T \leq S \oplus \ang{1,s,1}$ extending $A, B, C$, such that there exists a linear function $f' \colon (W \oplus \bbF) \to \bbF$ (where $W \oplus \bbF$ corresponds to the third mode of $S \oplus \ang{1,s,1}$) satisfying the conditions of \Cref{prop:freelunch_oneslice} with $r = q+s$.
\end{proposition}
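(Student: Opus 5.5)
We extend the restriction $T \leq S$ to $T \leq S \oplus \ang{1,s,1}$. The new summand contributes nothing to $T$, because the third-mode map is extended by zero on it. Its first two modes are then chosen so that the contracted slice of the big tensor has rank $q+s$ while the image slice is killed.

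\textbf{Setup.} Write $M = (\id_U \otimes \id_V \otimes f)S \in U\otimes V$, a matrix of rank $q$. Its image $N = (A\otimes B\otimes f)S = (A\otimes B)M \in U'\otimes V'$ has rank $s$. Write $N = \sum_{k=1}^s a_k \otimes b_k$ with $a_k \in U'$, $b_k\in V'$ linearly independent families. View $\ang{1,s,1} = \sum_{k=1}^s x_k y_k z$ with a single third-mode basis vector $z$.

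\textbf{Extending the maps.} Define $\tilde A \colon U \oplus \bbF^s \to U'$ by $\tilde A = (A, \hat A)$, where $\hat A(x_k) = a_k$. Define $\tilde B = (B, \hat B)$ with $\hat B(y_k) = -b_k$. Define $\tilde C = (C, 0)$, so that $\tilde C(z)=0$. Since $\tilde C$ kills the new slice, $(\tilde A\otimes \tilde B\otimes \tilde C)(S\oplus \ang{1,s,1}) = (A\otimes B\otimes C)S = T$. This is the claimed extension.

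\textbf{The function $f'$.} Set $f' = (f, 1)$ on $W\oplus \bbF$. Then
\[ (\tilde A\otimes\tilde B\otimes f')(S\oplus\ang{1,s,1}) = N + \sum_k a_k\otimes(-b_k) = 0, \]
so $f' \in C^\perp$ in the sense of \Cref{thm:freelunch_speedup}. The contracted matrix $(\id\otimes\id\otimes f')(S\oplus\ang{1,s,1}) = M \oplus \sum_k x_k y_k$ is block diagonal, so its rank is $q+s$. Thus \Cref{prop:freelunch_oneslice} applies with $r = q+s$.

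\textbf{Main obstacle.} The only delicate point is the sign (or scaling) choice that makes the image of the new slice cancel $N$ exactly. Any factorization of $N$ into $s$ rank-one terms works, e.g. a rank decomposition $N = \sum_k a_k\otimes b_k$. Everything else is direct verification.
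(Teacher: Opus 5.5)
Your proposal is correct and follows essentially the same route as the paper: extend $C$ by zero on the new slice, factor the rank-$s$ image matrix through $\ang{1,s,1}$ via the new components of the first two maps, and take $f'$ to be $f$ on $W$ plus a $\pm1$ on the new coordinate. The only difference is cosmetic: you put the sign in $\hat B(y_k) = -b_k$ with $f' = (f, 1)$, whereas the paper keeps the factorization sign-free and uses $f' = (f, -1)$.
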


\begin{proof}
    Since $M = (A \otimes B \otimes f)S$ has rank $s$, there exist linear maps $A' \colon \bbF^s \to U'$ and $B' \colon \bbF^s \to V'$ such that $M = (A' \otimes B') \ang{1,s,1}$. It is then straightforward to verify that we can define the new restriction as
    \[ T = \left[\begin{pmatrix}
        A & A'
    \end{pmatrix} \otimes \begin{pmatrix}
        B & B'
    \end{pmatrix} \otimes \begin{pmatrix}
        C & 0
    \end{pmatrix}\right] (S \oplus \ang{1,s,1}), \]
    with the new linear function $f' = \begin{pmatrix} f & -1 \end{pmatrix}$. This function lies in the appropriate kernel, since
    \[ \left[ \begin{pmatrix}
        A & A'
    \end{pmatrix} \otimes \begin{pmatrix}
        B & B'
    \end{pmatrix} \otimes \begin{pmatrix}
        f & -1
    \end{pmatrix} \right] (S \oplus \ang{1,s,1})
    = M - (A'\otimes B')\ang{1,s,1} = 0, \]
    and $({\id} \otimes {\id} \otimes f')(S \oplus \ang{1,s,1})$ is the direct sum of the matrices $({\id_U} \otimes {\id_V} \otimes f)S$ and $\ang{1,s,1}$, which has rank $r = q + s$.
\end{proof}

Finally, we generalize Coppersmith and Winograd's notion of \emph{fullness}. Given a degeneration $T \oplus \ang{1,t,1} \degen S$, let $f \colon W \to \bbF(\lambda)$ be the component of the $C$-map projecting to the slice $\ang{1,t,1}$. We define the \emph{fullness index} of $\ang{1,t,1}$ (or $f$) as $q = \Rk ((\id_U \otimes \id_V \otimes f)S)$. If $q = \dim U = \dim V$, we say the slice is \emph{full}.

Like the isolated property, the fullness index behaves well under speedup theorems and tensor products.

\begin{proposition} \label{prop:speedup_full}
    The slice $\ang{1,t,1}$ constructed in \Cref{prop:freelunch_oneslice} has fullness index $r$.
\end{proposition}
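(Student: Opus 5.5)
The plan is to read the claim directly off the construction. In \Cref{prop:freelunch_oneslice} the extra summand is $T' = (A' \otimes B' \otimes f)S$, obtained from the free-lunch degeneration of \Cref{thm:freelunch_speedup}. The only thing that depends on the degeneration is the component of the augmented third-mode map that projects onto the $W$-slice of $T'$. In Step 1 of the proof of \Cref{thm:freelunch_speedup}, the third-mode map is $\begin{pmatrix} C \\ C' \end{pmatrix}$ with $C' = \Span\{f\}$. Step 2 multiplies this by $\operatorname{diag}(\lambda^2, 1)$, which leaves the $C'$ row as $f$ (the global factor $\lambda^{-2}$ can be absorbed into the $A$ and $B$ blocks, or into the overall normalization). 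The first-mode and second-mode maps are rescaled similarly, but these rescalings do not affect the third-mode component.

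So the component of the $C$-map projecting to the summand $\ang{1,t,1}$ is exactly $f$, up to a nonzero scalar (a power of $\lambda$). Since the fullness index is defined as $\Rk((\id_U \otimes \id_V \otimes f)S)$, and rank is invariant under scaling by a nonzero element of $\bbF(\lambda)$, the index equals $\Rk((\id_U\otimes\id_V\otimes f)S) = r$. This is precisely the quantity called $r$ in \Cref{prop:freelunch_oneslice}.

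One subtlety needs care. The identification $T' \cong \ang{1,t,1}$ involves a choice of basis for $A'$ and $B'$, but the fullness index depends only on the third-mode functional and not on these bases. Also, for the degeneration version (\Cref{cor:freelunch_speedup_degen}), $f$ is an $\bbF(\lambda)$-valued functional and $r$ is the rank over $\bbF(\lambda)$. The substitution $\varepsilon = \lambda^k$ again only multiplies the relevant third-mode component by a power of $\lambda$, so the rank over $\bbF(\lambda)$ is unchanged.

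The main obstacle is purely bookkeeping: checking that no step (the monomial rescaling, the $\lambda^{-2}$ normalization, or the $\varepsilon=\lambda^k$ substitution) mixes other functionals into the $C'$ row. Each of these operations is diagonal on the block decomposition, so the $C'$ row stays a scalar multiple of $f$ throughout.
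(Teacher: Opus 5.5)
Your proposal is correct and takes essentially the same approach as the paper, which states this proposition without a separate proof and treats it as immediate from the definitions. Your bookkeeping supplies that verification: the third-mode component projecting onto the new slice stays a nonzero $\bbF(\lambda)$-multiple of $f$ through the monomial degeneration, the $\lambda^{-2}$ normalization, and the $\varepsilon=\lambda^k$ substitution, so the fullness index is $\Rk((\id_U\otimes\id_V\otimes f)S)=r$.
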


\begin{proposition} \label{prop:prod_full}
    Consider two degenerations
    \[ T_1 \oplus \ang{1,t_1, 1} \degen S_1, \quad T_2 \oplus \ang{1, t_2, 1} \degen S_2, \]
    wherein the summands $\ang{1,t_1,1}$ and $\ang{1,t_2,1}$ have fullness indices $q_1$ and $q_2$, respectively. Then the direct summand $\ang{1, t_1 t_2, 1}$ in the tensor product of these two degenerations has fullness index $q_1 q_2$. In particular, if $f_1$ and $f_2$ are full, then $f_1 \otimes f_2$ is full.
\end{proposition}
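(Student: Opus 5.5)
The proof comes down to identifying the slice functional of the product summand and using multiplicativity of matrix rank under Kronecker products. Write the two degenerations via maps $(A^{(1)}_\lambda,B^{(1)}_\lambda,C^{(1)}_\lambda)$ on $S_1\in U_1\otimes V_1\otimes W_1$ and $(A^{(2)}_\lambda,B^{(2)}_\lambda,C^{(2)}_\lambda)$ on $S_2\in U_2\otimes V_2\otimes W_2$. The tensor product degeneration $(A^{(1)}_\lambda\otimes A^{(2)}_\lambda)\otimes(B^{(1)}_\lambda\otimes B^{(2)}_\lambda)\otimes(C^{(1)}_\lambda\otimes C^{(2)}_\lambda)$ acts on $S_1\otimes S_2$. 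Its third-mode map has the blocks of the four summands $(T_1\otimes T_2)\oplus(T_1\otimes\ang{1,t_2,1})\oplus(\ang{1,t_1,1}\otimes T_2)\oplus\ang{1,t_1t_2,1}$.

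The component of the $C$-map projecting onto the $\ang{1,t_1t_2,1}$ block is the tensor product of the two slice components, $f_1\otimes f_2\colon W_1\otimes W_2\to\bbF(\lambda)$. This is just the block structure of a Kronecker product of block maps, where the (2,2) block of $C^{(1)}\otimes C^{(2)}$ is $f_1\otimes f_2$. I would also note that $\ang{1,t_1,1}\otimes\ang{1,t_2,1}=\ang{1,t_1t_2,1}$, so this summand is again one-slice.

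Next I would compute
\[
(\id_{U_1\otimes U_2}\otimes\id_{V_1\otimes V_2}\otimes(f_1\otimes f_2))(S_1\otimes S_2)=M_1\otimes M_2,
\]
where $M_i=(\id_{U_i}\otimes\id_{V_i}\otimes f_i)S_i$. This follows by expanding $S_i$ in simple tensors and using bilinearity, after reordering the modes of $S_1\otimes S_2$ as $(U_1\otimes U_2)\otimes(V_1\otimes V_2)\otimes(W_1\otimes W_2)$. Viewed as matrices, $M_1\otimes M_2$ is the Kronecker product. The rank of a Kronecker product is multiplicative, $\Rk(M_1\otimes M_2)=\Rk(M_1)\Rk(M_2)=q_1q_2$, over the field $\bbF(\lambda)$ where these live. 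This gives fullness index $q_1q_2$.

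For the ``in particular'' part: if $q_i=\dim U_i=\dim V_i$, then $q_1q_2=\dim(U_1\otimes U_2)=\dim(V_1\otimes V_2)$, so $f_1\otimes f_2$ is full.

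The only step requiring care is the bookkeeping identifying the product degeneration's slice component as exactly $f_1\otimes f_2$, rather than some rescaled version. If one normalizes the degenerations by powers of $\lambda$, the slice functional may get multiplied by a nonzero scalar, which does not affect rank, so the argument is insensitive to this.
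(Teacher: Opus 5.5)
Your proposal is correct. The paper states this proposition without proof, treating it (like the analogous facts about isolated summands) as routine, and your argument is exactly that routine verification: the slice functional of the product summand is $f_1\otimes f_2$, contracting gives $M_1\otimes M_2$, and rank is multiplicative under Kronecker products over $\bbF(\lambda)$. Your remark about $\lambda$-power rescalings is a harmless safeguard; under the paper's convention (limit $T+O(\lambda)$ with no negative-order terms), the product maps already form a valid degeneration without any rescaling.
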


\subsection{Direct sum and compression of matrix multiplication}

In the proof of \Cref{prop:freelunch_oneslice}, we used the following basic fact:
for a one-slice matrix multiplication tensor $\ang{1,r,1}$, restricting its first mode to a
codimension-$p$ subspace yields a tensor isomorphic to $\ang{1,r-p,1}$.
Although this observation looks elementary, it can be viewed as the simplest instance of a much more
general phenomenon.

Strassen proved the following result (see \cite[Proposition 6.4]{Str88AsymSpec})\footnote{Strassen's original formulation and subsequent expositions (e.g., \cite[Proposition~6.4]{Str88AsymSpec} and \cite[Part~III]{WZ25Spectra}) often present a closely related variant in terms of an \emph{additive} perturbation in the same tensor space: if $S=T+U$ where $U$ has small flattening rank in the $U$-mode, then one can find a ``valuable'' restriction $T'\le S$ that remains a direct sum of matrix multiplication tensors. The formulation we use here (starting from a direct sum and restricting one mode by a small-codimension subspace) is equivalent to that variant via the substitution method; we include this footnote to reconcile the two viewpoints.}:
if $T$ is a direct sum of matrix multiplication tensors and $A\subseteq U^\vee$ is a subspace of small
codimension, then the partially restricted tensor
$(A\otimes \id_V\otimes \id_W)T$ still contains, as a restriction, a ``valuable'' tensor $T'$ that is
again a direct sum of matrix multiplication tensors.
Informally, matrix multiplication is robust under losing a small amount of information in one mode.
This is often referred to as the \emph{compression} property of matrix multiplication tensors; see
\cite[Part~III]{WZ25Spectra} for an exposition and generalizations.
Algorithmic versions have also been developed recently \cite{HS25CorrectionA,HS25CorrectionB}.

For our purposes, we only need a compression statement in the special case where all summands are
one-slice tensors.
This case is simple enough that we include a short proof for the reader's convenience.
We also note that the concrete applications in this paper do not require the full generality of this theorem:
the tensors and linear maps we use enjoy additional structure that simplifies the analysis. Nonetheless, we expect
that the compression statement in this general form may be useful for more complicated degenerations in the future,
for instance when the relevant linear maps are found via computer search.

\begin{proposition} \label{prop:compression}
    Let $T = \bigoplus_{i=1}^k \ang{1,n_i,1}$ be a direct sum of one-slice matrix multiplication tensors, and
    let $A\subseteq U^\vee$ be a subspace of codimension $p$.
    Then there exists a tensor
    \[ T' \le (A\otimes \id_V\otimes \id_W)T \]
    that is still a direct sum of one-slice matrix multiplication tensors:
    \[ T' \cong \bigoplus_{i=1}^k \ang{1,n_i-p_i,1}, \]
    where the integers $p_i$ satisfy $0\le p_i\le n_i$ and $\sum_{i=1}^k p_i = p$.
\end{proposition}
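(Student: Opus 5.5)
Write $T=\bigoplus_{i=1}^k \ang{1,n_i,1}$ with first mode $U=\bigoplus_i U_i$, $\dim U_i=n_i$, and similarly $V=\bigoplus_i V_i$. The third mode is $W=\bigoplus_i \bbF z_i$. The $i$-th summand is $\sum_{j=1}^{n_i} x^{(i)}_j y^{(i)}_j z_i$. The approach is induction on $p$, where each step removes one dimension from a single summand. It also suffices to treat the case $p=1$ and iterate. A codimension-$p$ subspace $A\subseteq U^\vee$ is an intersection $A=A_1\cap\dots\cap A_p$ of a decreasing flag of hyperplanes. Restricting by $A$ is the same as first restricting by a hyperplane $H\supseteq A$, and then restricting the resulting first mode by the image of $A$, which has codimension $p-1$. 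The inductive step needs the output of the hyperplane step to again be a direct sum of one-slice tensors on which a codimension-$(p-1)$ restriction acts. Since a restriction of a restriction is a restriction, I will state the induction hypothesis as: for every tensor of the form $\bigoplus_i\ang{1,m_i,1}$ and every subspace of codimension $p$ of the dual of its first mode, the conclusion holds. I will also need to check that restricting the first mode commutes appropriately with the further restrictions on the other modes. It does, because $(A\otimes \id\otimes\id)$ and $(\id\otimes B\otimes C)$ commute.

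The core is the case $p=1$. A hyperplane of $U^\vee$ is the annihilator of a nonzero vector $u=(u_1,\dots,u_k)\in U$, with $u_i\in U_i$. Restricting by $A=u^\perp$ amounts to quotienting the first mode by $\bbF u$: the tensor $(A\otimes\id\otimes\id)T$ is the image of $T$ in $(U/\bbF u)\otimes V\otimes W$. Pick an index $i_0$ with $u_{i_0}\neq 0$ and set $p_{i_0}=1$ and $p_i=0$ for $i\ne i_0$.

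We then restrict the $V$-mode to kill the coupling. For each $i\neq i_0$ with $u_i\neq 0$, the quotient identifies $u_{i_0}$ with $-\sum_{i\ne i_0}u_i$. I choose a linear functional-level change: choose a basis of $U_{i_0}$ containing $u_{i_0}$ as its first vector, so that the $i_0$-th summand becomes $\sum_j x^{(i_0)}_j y^{(i_0)}_j z_{i_0}$ with $x^{(i_0)}_1=u_{i_0}$ after an adjustment of the dual basis in $V_{i_0}$; the matrix change on $V_{i_0}$ is the inverse transpose. Then I restrict the second mode by the map that kills $y^{(i_0)}_1$, i.e. project $V_{i_0}$ onto the span of the remaining $n_{i_0}-1$ basis vectors. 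After this projection, no term involving $x^{(i_0)}_1=u_{i_0}$ survives. The surviving terms in the quotient are then
\[ \sum_{i\ne i_0}\sum_j \bar x^{(i)}_j y^{(i)}_j z_i+\sum_{j\ge 2}\bar x^{(i_0)}_j y^{(i_0)}_j z_{i_0}. \]
Here the vectors $\bar x^{(i)}_j$ ($i\neq i_0$) together with $\bar x^{(i_0)}_j$ ($j\ge2$) form a basis of $U/\bbF u$, because $u_{i_0}$ together with those vectors spans $U$ modulo $u$. This restriction is therefore isomorphic to $\bigoplus_{i\neq i_0}\ang{1,n_i,1}\oplus\ang{1,n_{i_0}-1,1}$, as required.

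The only delicate point is the bookkeeping that guarantees $0\le p_i\le n_i$ throughout the induction. In each step, the chosen index has $u_{i_0}\neq0$, so $U_{i_0}\neq0$ and $n_{i_0}\ge1$ at that stage. Hence no $p_i$ can exceed $n_i$, and $\sum p_i=p$ is automatic because each step contributes exactly one. I would also remark that summands that become $\ang{1,0,1}=0$ are allowed and harmless. I expect the main thing to verify carefully to be the basis-change claim in the $p=1$ case: that the $i_0$-summand is invariant under $g\otimes g^{-\sfT}$ on $U_{i_0}\otimes V_{i_0}$, which is the standard invariance of the identity matrix. Everything else is routine.
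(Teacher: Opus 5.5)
Your proof is correct, but it is organized differently from the paper's.

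The paper argues in one step. It identifies $A$ with a surjection $U\to U'$ and observes that the images $A(x^{(i)}_j)$ of the original basis span $U'$, so some subset $D$ of size $\dim U'$ maps to a basis. Killing every $y^{(i)}_j$ whose partner $x^{(i)}_j$ is not in $D$ leaves $\sum_{x^{(i)}_j\in D}A(x^{(i)}_j)\,y^{(i)}_j z_i\cong\bigoplus_i\ang{1,n_i-p_i,1}$. Here $p_i$ counts the discarded vectors in block $i$, so $0\le p_i\le n_i$ and $\sum_i p_i=\dim U-\dim U'=p$ are immediate.

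You instead work in the kernel picture, where restricting by $A$ is quotienting by $K=A^\perp$ of dimension $p$. You peel off one vector of $K$ at a time, and in each step you change basis inside one block so that $u_{i_0}$ becomes a coordinate vector before killing its dual partner. This costs you the induction, the check that first-mode restrictions commute with the $V$-projections, and the bookkeeping $u_{i_0}\neq 0\Rightarrow m_{i_0}\ge 1$. You handle all three correctly.

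Two small tidy-ups:
\begin{itemize}
\item The cleanest justification of your basis claim is an exchange argument. Since $u-u_{i_0}$ lies in the span of the other basis vectors, swapping $u_{i_0}$ for $u$ gives a basis of $U$, and the rest then descends to a basis of $U/\bbF u$.
\item The "image of $A$" is just $A$ itself, viewed inside $H=(U/\bbF u)^\vee$.
\end{itemize}

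The block basis change is in fact unnecessary. In the $p=1$ step you can pick any original basis vector $x^{(i_0)}_{j_0}$ appearing with nonzero coefficient in $u$ and kill $y^{(i_0)}_{j_0}$. By the same exchange argument, the remaining original basis vectors form a basis of $U/\bbF u$. Iterating this choice reproduces exactly the paper's set $D$. So the paper's route is the shorter one, and yours mainly makes explicit that each dimension of $K$ costs one coordinate in one block.
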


\begin{proof}
    Let $(x^{(i)}_j)_{1\le j\le n_i}$ and $(y^{(i)}_j)_{1\le j\le n_i}$ denote the bases of the
    $U$- and $V$-modes of $T$ corresponding to the decomposition
    $T=\bigoplus_{i=1}^k \ang{1,n_i,1}$.
    Identify $A$ with a surjective linear map $A:U\to U'$ where $\dim U' = \dim U - p$.

    Consider the set of vectors $\{A(x^{(i)}_j)\} \subseteq U'$.
    Choose a subset $D\subseteq \{x^{(i)}_j\}$ of size $\dim U'$ such that
    $\{A(x): x\in D\}$ is linearly independent (hence a basis of $U'$).
    Now restrict $T$ by zeroing out every basis element $x^{(i)}_j$ (and the corresponding
    $y^{(i)}_j$) with $x^{(i)}_j\notin D$.
    This reduces the $i$th summand $\ang{1,n_i,1}$ to $\ang{1,n_i-p_i,1}$, where $p_i$ is the number of
    removed basis elements from the $i$th block.
    By construction, $\sum_i p_i = p$.

    Finally, the restriction of $A$ to $\Span D$ is an isomorphism onto $U'$, so after the above
    zeroing-out operation the resulting tensor is indeed a restriction of
    $(A\otimes \id_V\otimes \id_W)T$ and has the claimed direct-sum form.
\end{proof}

\section{Instantiations of speedup theorems} \label{sec:instantiation}

The first method is the most naive version of the speedup theorem, which generalizes \cite[Corollary~3.1]{CW82Improve}.

\begin{theorem}[One-slice speedup for nonminimal border rank] \label{thm:app-oneslice}
    Let $T$ be an $n\times n\times n$ tensor over $\bbF$, and suppose that $T\degen \ang{r}$.
    Let $\{a_i\}, \{b_i\}, \{c_i\}$ be the $\bbF(\lambda)$-vectors representing the border rank decomposition, i.e.,
    \[ \sum_{i=1}^r a_i b_i c_i = T + O(\lambda). \]
    If there exist \emph{nonzero scalars} $c_i' \in \bbF(\lambda) \setminus \{0\}$ such that the matrix
    \[ M = \sum_{i=1}^r a_i b_i c_i' \in U' \otimes V' \]
    has rank at most $s$, then there exists a degeneration
    \[ T \oplus \ang{1, r + s - 2n, 1} \degen \ang{r} \oplus \ang{1, s, 1}. \]
    When $r > 2n$, this yields the improved asymptotic rank bound
    \[ \AR(T) \le r + s^{2/3} - (r+s - 2n)^{2/3} < r. \]

    In particular, if $r\geq n$, we have
    \[ T \oplus \ang{1, r-n, 1} \degen \ang{r} \oplus \ang{1, n, 1}, \]
    and for $r > 2n$, this provides the strict improvement
    \[ \AR(T) \le r + n^{2/3} - (r - n)^{2/3} < r. \]
\end{theorem}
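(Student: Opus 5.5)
We construct the degeneration by combining \Cref{prop:add_slice}, \Cref{prop:freelunch_oneslice} and \Cref{cor:freelunch_speedup_degen}, working over $\bbF(\lambda)$ throughout. The asymptotic rank bound then follows from the calculus of \Cref{prop:comp_ex}.

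\emph{Setting up the restriction.} We view the border rank decomposition as an $\bbF(\lambda)$-restriction $T_\lambda\le\ang{r}$, where $T_\lambda = T+O(\lambda)$. The maps $A,B,C$ send the $i$th basis vectors $x_i,y_i,z_i$ of $\ang{r}$ to $a_i,b_i,c_i$. Take the linear function $f\colon W=\bbF(\lambda)^r\to\bbF(\lambda)$ given by $f(z_i)=c_i'$. Then $(\id\otimes\id\otimes f)\ang{r}=\sum_i c_i' x_iy_i$ has rank $q=r$, because every $c_i'$ is nonzero; this is exactly why the hypothesis asks for nonzero scalars. Moreover $(A\otimes B\otimes f)\ang{r}=M$ has rank at most $s$. \Cref{prop:add_slice} stays valid if we pad to rank exactly $s$: write $M=(A'\otimes B')\ang{1,s,1}$ with $A',B'$ of size $s$, allowing zero columns. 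It then gives a restriction $T_\lambda\le\ang{r}\oplus\ang{1,s,1}$ together with $f'\in C^\perp$ whose contracted matrix has rank $r+s$.

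\emph{Extracting the slice.} Apply \Cref{prop:freelunch_oneslice} with $\dim U'=\dim V'=n$. It shows that the free-lunch summand is $T'_\lambda\cong\ang{1,t,1}$ with $t\ge r+s-2n$ over $\bbF(\lambda)$. Then apply \Cref{cor:freelunch_speedup_degen}.

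The main obstacle is the passage to the limit. The lowest-order term $T'$ of $T'_\lambda$ may have smaller rank than $T'_\lambda$ itself. To handle this, we choose bases of $A'$ and $B'$ as $\bbF(\lambda)$-subspaces so that the rank is witnessed at the lowest order. Concretely, the matrix $T'_\lambda$ has rank $t$ over $\bbF(\lambda)$, so by Smith normal form over $\bbF[\lambda]$ (after clearing denominators) we can compose with invertible $\bbF(\lambda)$-linear changes of basis on the first two modes. These changes are absorbed into the restriction maps and bring $T'_\lambda$ to exactly $\ang{1,t,1}$, a constant tensor. Since this only changes the maps defining the free-lunch restriction, the lowest-order term is then $\ang{1,t,1}$ itself.

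Finally, we restrict $\ang{1,t,1}$ to $\ang{1,r+s-2n,1}$, which gives $T\oplus\ang{1,r+s-2n,1}\degen\ang{r}\oplus\ang{1,s,1}$. If $r+s-2n\le 0$ the statement is vacuous.

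\emph{Asymptotic bound.} The same construction in the other two modes gives the three relations required by \Cref{prop:comp_ex}. For this we use the symmetric choices: a function on the $U$- or $V$-mode with nonzero coefficients and small rank. We assume, as the statement implicitly does, that the hypothesis is symmetric; in the ``in particular'' case it holds automatically in every mode. Set $t=r+s-2n$. When $r>2n$ we have $t>s$, so \Cref{prop:comp_ex} gives $\AR(T)\le r+s^{2/3}-t^{2/3}<r$.

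For the special case, take $c_i'=1$. Then $M=\sum_i a_ib_i$ is an $n\times n$ matrix, so its rank is at most $s=n$. This gives $t=r-n$, and the bound $r+n^{2/3}-(r-n)^{2/3}<r$ holds when $r-n>n$, i.e.\ $r>2n$. The choice $c_i'=1$ and the rank bound $n$ are symmetric across the three modes, so this case needs no extra assumption.
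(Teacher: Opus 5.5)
Your proposal is correct and follows the paper's route: \Cref{prop:add_slice} and \Cref{prop:freelunch_oneslice}, lifted through \Cref{cor:freelunch_speedup_degen}, give the degeneration, and \Cref{prop:comp_ex} with $t=r+s-2n>s$ gives the bound. You also spell out three details the paper's two-line proof leaves implicit: padding to rank exactly $s$; normalizing $T'_\lambda$ to a constant $\ang{1,t,1}$ by a basis change over $\bbF(\lambda)$, where plain Gaussian elimination over the field suffices and Smith normal form is not needed; and noting that the general-$s$ bound needs the analogous low-rank hypothesis in all three modes, which the paper's proof also tacitly assumes.
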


\begin{proof}
    The hypothesis exactly matches the condition of \Cref{prop:add_slice}. Applying that result, all the degeneration parameters align:
    \[ T \oplus \ang{1, r + s - 2n, 1} \degen \ang{r} \oplus \ang{1, s, 1}. \]

    We refer to the setting of the three-direction speedup inequalities \Cref{eq:degen1}--\Cref{eq:degen3} with parameters $t = r+s-2n$ and the same $s$. Since $r > 2n$, we have $t > s$, and thus \Cref{eq:degen1}--\Cref{eq:degen3} imply the asymptotic rank bound
    \[ \AR(T) \le r + s^{2/3} - t^{2/3} = r + s^{2/3} - (r+s-2n)^{2/3}. \qedhere \]
\end{proof}

We can assess the efficacy of this method by the order of magnitude of the improvement. For $r \geq 3n$, the bound simplifies to
\[ \AR(T) \le r - \Omega(r^{2/3}). \]

Although this method appears to yield an improvement only when $r > 2n$, we can bootstrap this result to show that any non-minimal border rank upper bound for $T$ is not tight for its asymptotic rank.

\begin{corollary}
    Let $T$ be an $n\times n\times n$ tensor with $\BR(T) \leq r$. If $r > n$, then $\AR(T) < r$.
\end{corollary}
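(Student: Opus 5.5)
The plan is to bootstrap via tensor powers. Set $N=n^k$ and take $T^{\otimes k}$, an $N\times N\times N$ tensor with $T^{\otimes k}\degen\ang{r^k}$ (the border rank decomposition of $T$ tensored with itself). Because $r>n$, we have $r^k/n^k\to\infty$, so for $k$ large enough $r^k>2n^k$, and in fact $r^k\geq 3n^k$. Applying the ``in particular'' part of \Cref{thm:app-oneslice} to $T^{\otimes k}$ then gives
\[ \AR(T)^k=\AR(T^{\otimes k})\le r^k+N^{2/3}-(r^k-N)^{2/3}<r^k. \]
Taking $k$-th roots yields $\AR(T)<r$, using multiplicativity of asymptotic rank under tensor powers (immediate from the definition, or from $\AR=\max_\phi\phi$ in \Cref{thm:duality}).

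The step to check is the hypothesis of the ``in particular'' clause. It needs nonzero scalars $c_i'$ with $\Rk\big(\sum_i a_ib_ic_i'\big)\le N$. This is automatic, since any matrix in $U'\otimes V'$ with $\dim U'=\dim V'=N$ has rank at most $N$. So any choice works, for instance $c_i'=1$, and the bound holds with $s=N$. The tensor $T^{\otimes k}$ is $N\times N\times N$ by construction, and its decomposition comes from tensoring the $\bbF(\lambda)$-vectors $a_i,b_i,c_i$.

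The strict inequality also deserves a check. Since $r^k-N>N$, we have $(r^k-N)^{2/3}>N^{2/3}$, so $\AR(T^{\otimes k})<r^k$, and hence $\AR(T)<r$. This check is the only place where $r>n$ is really used.

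There is no real obstacle. If one wants a quantitative statement rather than mere strictness, one should pick the smallest $k$ with $r^k>2n^k$ and record the resulting bound $\AR(T)\le\big(r^k+n^{2k/3}-(r^k-n^k)^{2/3}\big)^{1/k}$.
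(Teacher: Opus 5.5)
Your proof is correct and is essentially the paper's own argument: pass to $T^{\otimes k}$ with $r^k>2n^k$, apply the ``in particular'' clause of \Cref{thm:app-oneslice} (whose rank hypothesis holds trivially with $s=n^k$), and take $k$-th roots via $\AR(T^{\otimes k})=\AR(T)^k$. One wording point: $r>n$ is used when choosing $k$, whereas the strictness check itself relies only on $r^k>2n^k$.
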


\begin{proof}
    Since $T \degen \ang{r}$, taking tensor powers yields $T^{\otimes k} \degen \ang{r^k}$. By choosing $k$ sufficiently large such that $r^{k} > 2n^k$, \Cref{thm:app-oneslice} implies $\AR(T)^k = \AR(T^{\otimes k}) < r^k$.
\end{proof}

The second method demonstrates that the identity obtained via the first method can essentially always be \emph{iterated} to yield a superior identity. This generalizes \cite[Theorem B]{CW82Improve}.

\begin{theorem} \label{thm:iterate}
    Under the setting of \Cref{thm:app-oneslice}, let $t = r+s-2n$. There exists a degeneration
    \[ T^{\otimes 2} \; \oplus \; 2 \odot T\otimes \ang{1,t,1} \; \oplus \;
    \ang{1, t^2 + 2n^2, 1}
    \; \degen \; \ang{r^2} \; \oplus \; 2r\odot \ang{1,s,1} \; \oplus \; \ang{1,s^2,1}. \]
    This effectively replaces the summand $\ang{1,t^2,1}$ in the tensor square of the result from \Cref{thm:app-oneslice} with $\ang{1,t^2+2n^2,1}$.
\end{theorem}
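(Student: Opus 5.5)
Starting from the degeneration of \Cref{thm:app-oneslice},
\[ T \oplus \ang{1,t,1} \degen \ang{r} \oplus \ang{1,s,1}, \]
we take its tensor square. This gives
\[ T^{\otimes 2} \oplus 2\odot T\otimes\ang{1,t,1} \oplus \ang{1,t^2,1} \degen \ang{r^2}\oplus 2r\odot\ang{1,s,1}\oplus\ang{1,s^2,1}, \]
using $\ang{1,s,1}^{\otimes 2}\cong\ang{1,s^2,1}$ and $\ang{r}\otimes\ang{1,s,1}\cong r\odot\ang{1,s,1}$. The task is then to enlarge the last summand $\ang{1,t^2,1}$ to $\ang{1,t^2+2n^2,1}$ while keeping the right-hand side fixed.

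The plan is to view the squared degeneration as $T_0 \oplus \ang{1,t^2,1}\degen S$, where $T_0 = T^{\otimes 2}\oplus 2\odot T\otimes\ang{1,t,1}$ and $S$ is the right-hand side. By \Cref{prop:speedup_isolated} the slice $\ang{1,t,1}$ is isolated in the third mode. By \Cref{prop:prod_isolated} the product slice $\ang{1,t^2,1}$ is then isolated too. By \Cref{prop:speedup_full} the slice $\ang{1,t,1}$ has fullness index $r+s$, which equals the full dimension of the $U$ and $V$ modes of $\ang{r}\oplus\ang{1,s,1}$; so it is full. By \Cref{prop:prod_full} the product slice $f=f_1\otimes f_1$ is full as well. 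Concretely, $(\id\otimes\id\otimes f)S$ is a nonsingular matrix of rank $(r+s)^2$.

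Next we apply \Cref{cor:freelunch_speedup_degen} to the degeneration $T_0\degen S$, taking $C'$ to be the span of $f$. Isolation gives exactly $f\in C^\perp$ with respect to the $T_0$-part of the maps: the blocks $(R_\lambda)_{ij2}$ with $(i,j)\neq(2,2)$ vanish identically, so $(A_0\otimes B_0\otimes f)S=0$, where $A_0,B_0$ are the $T_0$-components of the maps. \Cref{prop:freelunch_oneslice} then produces a slice $\ang{1,t',1}$ with
\[ t'\ge (r+s)^2-\dim U_0-\dim V_0. \]
A naive count gives $\dim U_0 = n^2+2nt$, which leads to
\[ t' \ge (r+s)^2 - 2n^2 - 4nt. \]
Since $r+s=t+2n$, we have $(r+s)^2 = t^2+4nt+4n^2$, so
\[ t' \ge t^2+4nt+4n^2-2n^2-4nt = t^2+2n^2. \]
This is exactly the claimed summand, so the count closes.

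The main obstacle is the codimension bookkeeping in \Cref{prop:freelunch_oneslice}. The codimension of $A'$ is bounded by $\dim V_0$, the dimension of the $V$-mode of $T_0$, which is $n^2+2nt$. That is what is used above, so the count already works. The delicate point is confirming that isolation and fullness survive the tensor product in the $\bbF(\lambda)$ setting, so that $f$ genuinely lies in $C^\perp$ exactly and not just to leading order. It must also be checked that the lowest-order term of $T'_\lambda$ retains rank at least $t^2+2n^2$, since the rank estimate holds over $\bbF(\lambda)$ and rank can drop when passing to the leading coefficient. To handle the latter, I would first try to rescale the extracted bases, choosing $A',B'$ as $\bbF(\lambda)$-subspaces and normalizing. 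Failing that, I would argue directly on the explicit structure, since $S$ is a direct sum of diagonal and one-slice tensors and $f$ acts as $f_1\otimes f_1$ with $f_1=(f,-1)$ as in \Cref{prop:add_slice}.
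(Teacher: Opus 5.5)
Your proposal is correct and follows the paper's proof essentially step for step. You square the one-slice degeneration, use \Cref{prop:speedup_isolated}, \Cref{prop:prod_isolated}, \Cref{prop:speedup_full} and \Cref{prop:prod_full} to see that $f_1\otimes f_1$ is isolated and full with fullness index $(r+s)^2$, then apply \Cref{prop:freelunch_oneslice} to the $T_0$-part with loss $2(n^2+2nt)$ to obtain $t^2+2n^2$. The paper does not address the leading-order rank issue you flag, and your first fix resolves it: over the field $\bbF(\lambda)$ the extracted matrix can be brought exactly to standard rank-$t'$ form by invertible changes of basis on the new $U$- and $V$-blocks (these only affect junk blocks killed by the monomial degeneration), after which one substitutes $\varepsilon=\lambda^k$ as in \Cref{cor:freelunch_speedup_degen}.
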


\begin{proof}
    Since the summand $\ang{1,t,1}$ appearing in the degeneration
    \[ T \oplus \ang{1,t,1} \degen \ang{r} \oplus \ang{1,s,1} \]
    is produced via the one-slice speedup theorem (\Cref{thm:app-oneslice}), \Cref{prop:speedup_isolated} and \Cref{prop:speedup_full} guarantee that this summand is full and isolated w.r.t.~the degeneration.
    Upon taking the tensor square, \Cref{prop:prod_isolated} and \Cref{prop:prod_full} guarantee that the summand $\ang{1,t^2,1}$ in the degeneration
    \[ T^{\otimes 2} \; \oplus \; 2 \odot T\otimes \ang{1,t,1} \; \oplus \;
    \ang{1, t^2, 1}
    \; \degen \; \ang{r^2} \; \oplus \; 2r\odot \ang{1,s,1} \; \oplus \; \ang{1,s^2,1} \]
    remains full and isolated. Let this summand be projected by a linear functional $f$ in the third mode.
    First, we zero out the variables of this summand in all three directions, so that for the remaining degeneration
    \[ T^{\otimes 2} \; \oplus \; 2 \odot T\otimes \ang{1,t,1}
    \; \degen \; \ang{r^2} \; \oplus \; 2r\odot \ang{1,s,1} \; \oplus \; \ang{1,s^2,1}, \]
    $f$ satisfies the condition of \Cref{prop:freelunch_oneslice}, as $f$ is isolated.
    The parameter $r$ corresponds to the fullness of $f$, which in our case (since $f$ is full) is the dimension of the right-hand side, i.e., $(r+s)^2$.
    Thus the replaced direct summand
    \[ T^{\otimes 2} \; \oplus \; 2 \odot T\otimes \ang{1,t,1} \; \oplus \;
    \ang{1, t', 1}
    \; \degen \; \ang{r^2} \; \oplus \; 2r\odot \ang{1,s,1} \; \oplus \; \ang{1,s^2,1} \]
    has
    \[ t' = (r+s)^2 - 2(n^2 + 2nt) = t^2 + 2n^2. \qedhere \]
\end{proof}

Naturally, this iterative process can be continued indefinitely, but its improvement on the asymptotic rank would be marginal.
Also, after some analysis, it turns out that \Cref{thm:iterate} still gives an asymptotic rank bound
of type $r - \Omega(r^{2/3})$ when $r\geq 3n$.

The third method involves appending not merely a single one-slice tensor to both sides, but a direct sum of such tensors.

\begin{theorem} \label{thm:speedup_via_compression}
    Let $T\degen \ang{r}$, and let $\{a_i\}, \{b_i\}, \{c_i\}$ be $\bbF(\lambda)$-vectors representing the degeneration. Suppose there exist nonzero scalars $c_i'$ and a partition of $[r]$ into $p$ groups $[r] = I_1 \sqcup \cdots \sqcup I_p$. For each $1\leq \alpha\leq p$, let $r_\alpha = |I_\alpha|$, let $M_\alpha = \sum_{i\in I_\alpha} a_i b_i c'_i$, and let $s_\alpha \geq \Rk(M_\alpha)$.
    If $r_\alpha + s_\alpha \geq 2n$ for all $\alpha$, then
    \[ T \oplus \bigoplus_{\alpha=1}^p \ang{1,r_\alpha + s_\alpha - 2n,1}
    \degen \ang{r} \oplus \bigoplus_{\alpha=1}^p \ang{1, s_\alpha,1}. \]
\end{theorem}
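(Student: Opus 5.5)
The plan is to run the one-slice construction of \Cref{prop:add_slice} separately for each group $I_\alpha$, and then extract all $p$ summands at once with \Cref{thm:freelunch_speedup}, using a $p$-dimensional $C'$ in place of a single functional.

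\emph{Step 1: append the slices.} Regard $T \degen \ang{r}$ as a restriction over $\bbF(\lambda)$, with $A,B,C$ sending the $i$-th basis vector of $\ang{r}$ to $a_i,b_i,c_i$. For each $\alpha$, factor $M_\alpha = (A'_\alpha \otimes B'_\alpha)\ang{1,s_\alpha,1}$ with $A'_\alpha\colon \bbF^{s_\alpha}\to U'$ and $B'_\alpha\colon \bbF^{s_\alpha} \to V'$. Extend to a restriction of $\ang{r}\oplus\bigoplus_\alpha \ang{1,s_\alpha,1}$ by using $A'_\alpha, B'_\alpha$ on the new blocks and the zero $C$-map there. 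For each $\alpha$ define the functional
\[ f_\alpha = \bigl(c'_i\,\iv{i\in I_\alpha}\bigr)_{i\in[r]} \oplus (-e_\alpha), \]
where $e_\alpha$ is the sum functional on the slice $\ang{1,s_\alpha,1}$. As in \Cref{prop:add_slice}, applying $(A\otimes B\otimes f_\alpha)$ to the new right-hand side gives $M_\alpha - M_\alpha = 0$. So $C' = \Span\{f_\alpha\}$ lies in $C^\perp$, and it is $p$-dimensional because the supports of the $f_\alpha$ are disjoint.

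\emph{Step 2: identify $T'$.} For $f = \sum_\alpha \mu_\alpha f_\alpha$, contracting the new right-hand side with $f$ gives a block-diagonal matrix. Its diagonal part on $\ang{r}$ is $\sum_i \mu_{\alpha(i)} c'_i\, x_i y_i$, and its slice parts are $-\mu_\alpha \ang{1,s_\alpha,1}$. Hence $(\id\otimes\id\otimes C')$ of the right-hand side is
\[ \bigoplus_\alpha \ang{1,\, r_\alpha+s_\alpha,\, 1}, \]
after rescaling basis vectors by the nonzero $c'_i$; this is where the nonzero-scalar hypothesis is used. By definition $A' = \{u : (u\otimes B\otimes C')S = 0\}$, and the map $u \mapsto (u\otimes B\otimes C')S$ lands in $V'\otimes \bbF^p$. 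That is only a bound of $np$ on the codimension, which is too weak.

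\emph{Step 3: the main obstacle, the codimension count.} Because the $f_\alpha$ have disjoint supports, the condition on $u$ splits block by block. The real issue is that $A'$ must be a single subspace of the whole $U$-mode, so the codimension of $A'$ in each individual block has to be controlled. I would avoid this by shrinking $A'$ and $B'$ to subspaces adapted to the blocks. Within block $\alpha$ (the $x_i$ with $i\in I_\alpha$ together with the $s_\alpha$ new coordinates), take the kernel of $u \mapsto (u\otimes B\otimes f_\alpha)$ restricted to that block. This kernel has codimension at most $m=\dim V'=n$ in the block. Set $A'$ to be the direct sum of these blockwise kernels, which is contained in the true $A'$, and do the same for $B'$.

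Restricting $T'$ further to these smaller subspaces only strengthens the vanishing conditions, so the free-lunch argument still applies. Each block is a one-slice tensor, so the computation from \Cref{prop:freelunch_oneslice} applies within it. Restricting a rank-$(r_\alpha+s_\alpha)$ block to codimension $\le n$ on each side leaves rank at least $r_\alpha+s_\alpha-2n$, which is nonnegative by hypothesis. Zeroing further gives exactly $\ang{1,r_\alpha+s_\alpha-2n,1}$ in each block. The result is
\[ T \oplus \bigoplus_\alpha \ang{1,r_\alpha+s_\alpha-2n,1} \degen \ang{r}\oplus\bigoplus_\alpha\ang{1,s_\alpha,1}. \]
\Cref{prop:compression} offers an alternative route that handles a global codimension without the blockwise shrinking, but the blockwise count gives the stated per-block bound directly.

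\emph{Step 4: pass from $\bbF(\lambda)$ back to $\bbF$.} \Cref{cor:freelunch_speedup_degen} converts the construction over $\bbF(\lambda)$ into a degeneration over $\bbF$. Taking the lowest-order term of $T'_\lambda$ does not lower the size of each block: the blocks stay direct, and the lowest-order part of a nonsingular one-slice matrix can be arranged to be nonsingular after scaling. If the ranks did degenerate in the limit, I would instead choose the subspaces $A'$, $B'$ over $\bbF(\lambda)$ and clear denominators so that the final blocks have the claimed sizes.
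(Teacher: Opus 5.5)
Your argument is correct and is essentially the paper's second proof. It uses the same $p$-dimensional $C'$ spanned by the $f_\alpha$ and the same identification $(\id\otimes\id\otimes C')S\cong\bigoplus_\alpha\ang{1,r_\alpha+s_\alpha,1}$, together with the observation that the equations defining $A'$ and $B'$ split block by block, so each block loses at most $2n$; in fact $A'$ already equals the direct sum of your blockwise kernels, so no shrinking is needed. For Step 4, do the final restriction of $T'_\lambda$ to $\bigoplus_\alpha \ang{1,r_\alpha+s_\alpha-2n,1}$ over $\bbF(\lambda)$ (a rank-$t$ matrix over any field is equivalent to $\ang{1,t,1}$), so this summand is already constant before the $\varepsilon=\lambda^k$ substitution of \Cref{cor:freelunch_speedup_degen} — this is what the paper's ``it suffices to prove the restriction case'' means, whereas rescaling by powers of $\lambda$ alone does not in general keep the leading term's rank.
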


We provide two proofs of this result. They are essentially equivalent; the first is simpler, while the second employs the concept of the compression theorem and generalizes to cases where the added terms are not direct sums of one-slice tensors (\Cref{thm:direct_sum_identity}) or where the starting upper bound $S$ is not a diagonal tensor (\Cref{subsec:exponent}).

\begin{proof}[First proof]
    In the spirit of \Cref{cor:freelunch_speedup_degen}, it suffices to prove the restriction case, i.e., working over $\bbF$ instead of $\bbF(\lambda)$.
    In that setting, we can write $T$ as a sum of tensors $T = \sum_{\alpha=1}^p T_\alpha$, where $T_\alpha = \sum_{i\in I_\alpha} a_i b_i c_i$. Then $\Rk(T_\alpha) \leq r_\alpha$. We can apply \Cref{thm:app-oneslice} to each $T_\alpha$ individually and sum the results:
    \[ \bigoplus_{\alpha=1}^p (T_\alpha \oplus \ang{1,r_\alpha + s_\alpha - 2n,1})
    \degen \ang{r} \oplus \bigoplus_{\alpha=1}^p \ang{1, s_\alpha,1}. \]
    The conclusion follows from $T \leq \bigoplus_\alpha T_\alpha$.
\end{proof}

\begin{proof}[Second proof]
    As in the first proof, we focus on the restriction case. We begin with the redundant restriction
    \[ T \leq \ang{r} \oplus \bigoplus_{\alpha=1}^p \ang{1,s_\alpha,1}. \]
    Write
    \[ S \coloneq \ang{r} \oplus \bigoplus_{\alpha=1}^p \ang{1,s_\alpha,1} = \sum_{i=1}^r u_iv_iw_i
     - \sum_{\alpha=1}^p \sum_{j=1}^{s_\alpha} u'_{\alpha j} v'_{\alpha j} w'_\alpha, \]
    where the minus sign is introduced to simplify subsequent cancellations.
    Since $M_\alpha$ has rank at most $s_\alpha$, there exist linear maps $A_\alpha, B_\alpha$ such that $(A_\alpha \otimes B_\alpha) \ang{1,s_\alpha,1} = M_\alpha$.
    The restriction is defined by the linear maps
    \begin{align*}
        A: & \qquad  a_i \mapsfrom u_i, \quad A_\alpha(u'_{\alpha j}) \mapsfrom u'_{\alpha j},\\
        B: & \qquad  b_i \mapsfrom v_i, \quad B_\alpha(v'_{\alpha j}) \mapsfrom v'_{\alpha j},\\
        C: & \qquad  c_i \mapsfrom w_i, \qquad \qquad 0 \mapsfrom w'_\alpha.
    \end{align*}
    To apply the free-lunch speedup theorem (\Cref{thm:freelunch_speedup}), define a subspace $C'\subseteq W^\vee$ via the linear map defined as follows, where $\alpha(i)$ is the index such that $i\in I_\alpha$:
    \[ C': \quad c'_i w'_{\alpha(i)} \mapsfrom w_i, \quad w'_{\alpha} \mapsfrom w'_\alpha. \]
    A direct computation shows that $(A\otimes B\otimes C') S = 0$.

    Applying \Cref{thm:freelunch_speedup} yields a direct summand $T' = (A'\otimes B'\otimes C')S$ on the left-hand side. Let us first unwind the form of $({\id} \otimes {\id} \otimes C')S$:
    \begin{align*}
        ({\id} \otimes {\id} \otimes C')S &= \sum_{i=1}^r u_iv_i \cdot c'_i w'_{\alpha(i)}
        - \sum_{\alpha=1}^p \sum_{j=1}^{s_\alpha} u'_{\alpha j} v'_{\alpha j} w'_\alpha\\
        &= \sum_{\alpha=1}^p \left( \sum_{i\in I_\alpha} c_i' u_i v_i - \sum_{j=1}^{s_\alpha}
        u'_{\alpha j} v'_{\alpha j} \right)w'_\alpha,
    \end{align*}
    which is evidently $\bigoplus_{\alpha=1}^p \ang{1, r_\alpha + s_\alpha, 1}$.
    Recall that $A' \subseteq U^\vee$, the space of linear functionals, is given by those $f\in U^\vee$ such that the following expression vanishes:
    \[ (f\otimes B\otimes C') S = \sum_{\alpha=1}^p \left( \sum_{i\in I_\alpha} c_i' f(u_i) b_i - \sum_{j=1}^{s_\alpha}
        f(u'_{\alpha j}) B_\alpha(v'_{\alpha j}) \right)w'_\alpha \]
    The image, viewed as a vector in an $np$-dimensional space, defines $np$ linear equations on $f$, meaning $A'$ is a subspace of codimension $\leq np$.
    Similarly, $B'$ restricts to a subspace of codimension $\leq np$. \Cref{prop:compression} applies, which allows us to reduce the direct sum of matrix multiplication tensors by $2np$ terms.
    Notice that each equation associated with a component in $w'_\alpha$ only concerns the values of $f$ on $\{u_i\}_{i\in I_\alpha}$ and $u'_{\alpha j}$. Thus, we do not need the full generality of \Cref{prop:compression}, but can instead perform compression on each direct summand separately. Consequently, $T'$ can be compressed to $T' \geq \bigoplus_{\alpha} \ang{1, r_\alpha + s_\alpha - 2n, 1}$, proving the claim.
\end{proof}

We now discuss the quantitative advantage of this method compared with previous ones. Using the naive asymptotic assumption $s\leq n$, one can take $p = \Omega(r / n)$ and let every $r_\alpha \geq 3n$. This gives the degeneration
\[ T \oplus p \odot\ang{1,2n,1} \degen \ang{r} \oplus p\odot\ang{1,n,1}, \]
and similar statements hold for the other two directions.
Strassen calculus then yields an upper bound $\AR(T) \leq r - (2^{2/3} - 1) p n^{2/3} = r - \Omega(r / n^{1/3})$, which is a superior bound when $r = \omega(n)$.

\begin{remark} \label{rmk:generic_ar}
    An intriguing question is how far we can improve the order of magnitude in the speedup theorem. For example, can we prove that any border rank upper bound $\BR(T) \leq r$ with $r > n^{1.0001}$ implies $\AR(T) \leq 0.9999r$? Any mild strengthening of this question would have surprising consequences in fine-grained complexity theory: the \emph{balanced partition tensor} $T_n$ proposed by Pratt \cite{Pratt24SCC} is a family of tensors for which the currently known border rank upper bound is $r_n\approx N^{1.08}$ \cite[Proposition~6.18]{FJM25SymPow}, where $N$ is the dimension of $T_n$. If one could improve the asymptotic rank of $T_n$ to $o(r_n / \log r_n)$, then Pratt showed that this would refute the set cover conjecture.
\end{remark}

\section{Applications} \label{sec:appl}

\subsection{Small CW tensors} \label{subsec:smallcw}

Recall that the small Coppersmith--Winograd (CW) tensor $\cw_q$ has $q+1$ variables in each mode.
It is commonly written as
\[ \cw_q = \sum_{i=1}^q x_0 y_i z_i + x_i y_0 z_i + x_i y_i z_0. \] 
The following identity, originally due to Coppersmith--Winograd, gives a border-rank upper bound of $q+2$.
\begin{align*}
    &\phantom{=}\sum_{i=1}^q \lambda^{-2} (x_0 + \lambda x_i)(y_0 + \lambda y_i)(z_0 + \lambda z_i)\\
    &- \lambda^{-3} \left(x_0 + \lambda^2\sum_{i=1}^q x_i\right)
    \left(y_0 + \lambda^2\sum_{i=1}^q y_i\right)
    \left(z_0 + \lambda^2\sum_{i=1}^q z_i\right)\\
    &+ (\lambda^{-3} - q \lambda^{-2}) x_0 y_0 z_0 \\
    &= \cw_q + O(\lambda).
\end{align*}

It is known that this border-rank bound is tight, i.e.\ $\BR(\cw_q) = q+2$. Moreover, Conner, Gesmundo,
Landsberg, and Ventura \cite[Theorem 1.4]{CGLV22CW} proved that the border rank does not drop for small
powers: when $q>2$ and $n=2$, one has $\BR(\cw_q^{\otimes n}) = (q+2)^n$, and likewise when $q>4$ and $n=3$.
We show that the asymptotic rank of $\cw_q$ is strictly smaller than $q+2$, and hence this phenomenon must
eventually fail for larger $n$.

We begin by finding low-rank matrices in order to apply \Cref{thm:app-oneslice}.
\begin{lemma} \label{lem:minrk}
    Let $a_i,b_i,c_i$ be the data of the border-rank upper bound $\cw_q\degen \ang{q+2}$ given above.
    Then there exist nonzero scalars $c_i'$ such that the matrix
    $M = \sum_{i=1}^{q+2} c_i' \cdot a_i b_i \in U'\otimes V'$ has rank $q$.
\end{lemma}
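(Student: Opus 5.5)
The plan is to write down the $q+2$ vectors explicitly and pick the scalars $c_i'$ (in $\bbF(\lambda)$) so that the matrix $M=\sum_i c_i' a_ib_i$ collapses. The vectors are:
\begin{itemize}
    \item for $1\le i\le q$: $a_i=x_0+\lambda x_i$, $b_i=y_0+\lambda y_i$;
    \item for the $(q+1)$st term: $a=x_0+\lambda^2\sigma_x$, $b=y_0+\lambda^2\sigma_y$, where $\sigma_x=\sum_i x_i$ and $\sigma_y=\sum_i y_i$;
    \item for the last term: $x_0y_0$.
\end{itemize}
All coefficients $\lambda^{-2},-\lambda^{-3},\lambda^{-3}-q\lambda^{-2}$ may be absorbed into the $c_i'$. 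We are free to choose any nonzero weights, so write $M=\sum_{i=1}^q \alpha_i a_ib_i+\beta\, ab+\gamma\, x_0y_0$ and compute it as a $(q+1)\times(q+1)$ matrix in the basis $x_0,\dots,x_q$ by $y_0,\dots,y_q$.

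Expanding gives
\[
M=\Bigl(\sum_i\alpha_i+\beta+\gamma\Bigr)x_0y_0+\sum_i\lambda(\alpha_i+\lambda\beta)(x_0y_i+x_iy_0)+\lambda^2\sum_i\alpha_i x_iy_i+\beta\lambda^4\sigma_x\sigma_y .
\]
The natural choice is to kill the first row and column. Set $\alpha_i=-\lambda\beta$ for all $i$ (nonzero), and set $\gamma=-\sum_i\alpha_i-\beta=(q\lambda-1)\beta$, which is nonzero in $\bbF(\lambda)$. Take $\beta=1$. Then
\[
M=-\lambda^3\sum_i x_iy_i+\lambda^4\sigma_x\sigma_y=\lambda^3\bigl(-I_q+\lambda J_q\bigr)
\]
supported on the $q\times q$ block indexed by $x_1,\dots,x_q$ and $y_1,\dots,y_q$, with $J_q$ the all-ones matrix. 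So $\Rk M\le q$.

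To confirm rank exactly $q$, note that $-I+\lambda J$ has determinant $(-1)^q(1-q\lambda)\ne0$ over $\bbF(\lambda)$, so $M$ has rank exactly $q$. This determinant is where field characteristic might seem to matter, but as a rational function it is nonzero regardless.

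The only real obstacle is bookkeeping. The statement's $c_i'$ multiply $a_ib_i$, either with or without the original coefficients, and since those coefficients are nonzero either reading works after rescaling. I would also double-check that the cancellation of the $x_0y_i$ cross terms is exact: the $(q+1)$st term contributes $\beta\lambda^2$ to $x_0y_i$ and the $i$th term contributes $\alpha_i\lambda$, so $\alpha_i=-\lambda\beta$ is forced.
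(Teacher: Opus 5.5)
Your proof is correct, but it uses a different choice of weights and a different verification than the paper.

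The paper sets $c_0'=c_1'=\cdots=c_q'=1$ and changes basis to $x_i'=x_0+\lambda x_i$, $y_i'=y_0+\lambda y_i$. In that basis $M=I+c'_{q+1}ab^{\sfT}$ is a rank-one perturbation of the $(q+1)\times(q+1)$ identity. The paper then tunes the single remaining weight using the matrix determinant lemma: $\det M=1+c'_{q+1}b^{\sfT}a$ with $b^{\sfT}a=(1-q\lambda)^2+q\lambda^2\neq 0$. So $c'_{q+1}=-1/(b^{\sfT}a)$ makes $M$ singular, and subadditivity of rank gives $\Rk(M)\ge \Rk(I)-1=q$.

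You instead tune all the weights jointly so that the whole $x_0$-row and $y_0$-column cancel. This leaves $M$ in closed form as $\lambda^3(-I_q+\lambda J_q)$ on the $\{x_1,\dots,x_q\}\times\{y_1,\dots,y_q\}$ block. You then read off the rank from $\det(-I_q+\lambda J_q)=(-1)^q(1-q\lambda)$.

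The two routes trade off as follows:
\begin{itemize}
\item The paper's argument is slightly more economical: one free parameter and a one-line determinant identity, with no need to locate the degenerate directions.
\item Yours avoids the $\lambda$-dependent change of basis and makes the degenerate directions explicit, since the $x_0$-row and $y_0$-column vanish identically.
\item Yours also has a small extra feature: the lowest-order coefficient of $M$ in $\lambda$, namely $-I_q$, already has rank $q$ over $\bbF$ itself, not only over $\bbF(\lambda)$.
\end{itemize}

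Your nonvanishing checks ($\alpha_i=-\lambda$, $\gamma=q\lambda-1$, $1-q\lambda$) hold in $\bbF(\lambda)$ in every characteristic.
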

\begin{proof}
    We first write the linear combination as
    \[ M=\sum_{i=1}^q c_i' (x_0 + \lambda x_i) (y_0 + \lambda y_i)
    + c_{q+1}' \left(x_0 + \lambda^2 \sum_{i=1}^q x_i\right)
    \left(y_0 + \lambda^2 \sum_{i=1}^q y_i\right) + c_0' x_0 y_0. \]
    For notational convenience, we apply the change of basis $x_i' \coloneq x_0 + \lambda x_i$
    for $i > 0$ (and similarly for $y$). Then
    \[ M = c'_0 x_0 y_0 + \sum_{i=1}^q c_i' x_i' y_i' + c'_{q+1}
    \left((1-q \lambda) x_0 + \lambda \sum_{i=1}^q x_i'\right)
    \left((1-q \lambda) y_0 + \lambda \sum_{i=1}^q y_i'\right). \]
    Assume $c_i' = 1$ for all $i \leq q$. We choose $c'_{q+1}$ so that $M$ becomes singular.
    Writing $M$ in the form $M = I + c'_{q+1} ab^\sfT$, we have
    \[ \det (M) = \det(I + c'_{q+1} ab^\sfT) = 1 + c'_{q+1} b^\sfT a, \]
    and a direct calculation gives $b^\sfT a = (1-q\lambda)^2 + q\lambda^2 \neq 0$.
    Taking $c'_{q+1} = -1/(b^\sfT a)$ yields $\Rk(M) < q+1$. On the other hand,
    \[
        \Rk(M) = \Rk(I + c'_{q+1} ab^\sfT) \geq \Rk(I) - \Rk(c'_{q+1} ab^\sfT) = q.
    \]
    Therefore, this choice produces a rank-$q$ matrix.
\end{proof}

\begin{proposition}
    For any $q\geq 2$ and $n\geq 1$, we have the tensor degeneration
    \[ \cw_q^{\otimes n} \oplus \ang{1, (q+2)^n - 2(q+1)^n + q^n, 1}
        \degen \ang{(q+2)^n} \oplus \ang{1,q^n,1}.  \]
\end{proposition}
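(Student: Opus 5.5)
The plan is to apply \Cref{thm:app-oneslice} to $T=\cw_q^{\otimes n}$, which is an $N\times N\times N$ tensor with $N=(q+1)^n$, using the $n$-th tensor power of the Coppersmith--Winograd border rank decomposition. Taking the $n$-fold tensor product of the identity $\sum_{i=0}^{q+1} a_ib_ic_i=\cw_q+O(\lambda)$ gives a border decomposition $\cw_q^{\otimes n}\degen\ang{(q+2)^n}$. Its rank-one terms are indexed by tuples $(i_1,\dots,i_n)$ and have factors $a_{i_1}\otimes\cdots\otimes a_{i_n}$ (and similarly for $b$ and $c$). Here $r=(q+2)^n$.

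Next we need nonzero scalars $c'_{(i_1,\dots,i_n)}$ such that the corresponding matrix has rank at most $s=q^n$. We take the product scalars $c'_{(i_1,\dots,i_n)}=c'_{i_1}\cdots c'_{i_n}$, where the $c'_i$ are those of \Cref{lem:minrk}; each is nonzero, so each product is nonzero. With this choice,
\[
\sum_{(i_1,\dots,i_n)} c'_{i_1}\cdots c'_{i_n}\,(a_{i_1}\otimes\cdots\otimes a_{i_n})(b_{i_1}\otimes\cdots\otimes b_{i_n}) = M^{\otimes n},
\]
the Kronecker product of $n$ copies of the matrix $M$ from \Cref{lem:minrk}. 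Since matrix rank is multiplicative under Kronecker products, this matrix has rank $q^n$.

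Plugging into \Cref{thm:app-oneslice} with $r=(q+2)^n$, $s=q^n$ and $N=(q+1)^n$ yields
\[
\cw_q^{\otimes n}\oplus\ang{1,r+s-2N,1}\degen\ang{(q+2)^n}\oplus\ang{1,q^n,1},
\]
and $r+s-2N=(q+2)^n-2(q+1)^n+q^n$, which is exactly the claimed degeneration.

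One point to check is that this quantity is nonnegative. It is a second difference of the convex function $x^n$, so it is $\ge 0$; in the case $n=1$ it equals $0$, where the statement is trivial (or follows with $\ang{1,0,1}=0$).

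The only subtle step is that \Cref{thm:app-oneslice} is formulated for a border decomposition over $\bbF(\lambda)$. The tensor power of a degeneration is again a degeneration whose data are the tensor products of the vectors, so that hypothesis is met, and the scalars remain nonzero elements of $\bbF(\lambda)$. I therefore expect no real obstacle beyond verifying that rank multiplicativity holds over $\bbF(\lambda)$, which is standard.
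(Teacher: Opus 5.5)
Your proposal is correct and is essentially the paper's own proof. The paper also tensors the Coppersmith--Winograd border decomposition $n$ times and takes the functional $f^{\otimes n}$ (your product scalars $c'_{i_1}\cdots c'_{i_n}$), so that $M^{\otimes n}$ has rank $q^n$; it then applies \Cref{thm:app-oneslice} with $r=(q+2)^n$, $s=q^n$, ambient dimension $(q+1)^n$, and your nonnegativity check on $(q+2)^n-2(q+1)^n+q^n$ is a harmless addition the paper leaves implicit.
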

\begin{proof}
    Let $f$ be the linear functional on the $W$-mode given by the previous lemma. Then
    $M = (A\otimes B\otimes f)\ang{q+2}$ satisfies $\Rk(M) = q$.
    Tensoring these maps gives a degeneration $\cw_q^{\otimes n} \degen \ang{(q+2)^n}$, and
    $f^{\otimes n}$ is a linear functional such that
    \[
        M^{\otimes n}
        = (A^{\otimes n}\otimes B^{\otimes n} \otimes f^{\otimes n})\ang{(q+2)^n}
    \]
    has rank $q^n$. Applying \Cref{thm:app-oneslice} yields the desired degeneration.
\end{proof}

Combining the above degeneration with \Cref{prop:comp_ex} immediately yields the following asymptotic-rank bound.

\begin{corollary} \label{cor:ar-cw}
    For any $q\geq 2$ and $n\geq 1$, define
    \[ \gamma_{q,n} \coloneq
    \left((q+2)^n + q^{2n/3} - ((q+2)^n - 2(q+1)^n + q^n)^{2/3}\right)^{1/n} \]
    and
    \[ \gamma_q^{(\ref{cor:ar-cw})} \coloneq \min_{n\geq 1} \gamma_{q,n}. \]
    Then the small CW tensor satisfies
    \[ \AR(\cw_q) \leq \gamma_q^{(\ref{cor:ar-cw})} < q+2. \]
\end{corollary}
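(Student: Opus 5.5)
Fix $q\ge 2$ and $n\ge 1$, and put $r=(q+2)^n$, $s=q^n$ and $t=(q+2)^n-2(q+1)^n+q^n$. The preceding proposition gives the degeneration $\cw_q^{\otimes n}\oplus\ang{1,t,1}\degen\ang{r}\oplus\ang{1,s,1}$. Because $\cw_q$ is symmetric under cyclic permutation of the three modes, and so is the border rank identity, we can relabel modes to obtain the companion relations $\cw_q^{\otimes n}\oplus\ang{t,1,1}\degen\ang{r}\oplus\ang{s,1,1}$ and $\cw_q^{\otimes n}\oplus\ang{1,1,t}\degen\ang{r}\oplus\ang{1,1,s}$. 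Alternatively, we can simply rerun \Cref{lem:minrk} with the roles of the modes permuted; the lemma is symmetric in $x,y,z$. These three relations are exactly the hypotheses \eqref{eq:degen1}--\eqref{eq:degen3} of \Cref{prop:comp_ex} with $T=\cw_q^{\otimes n}$.

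Next we apply \Cref{prop:comp_ex}. In the regime $t>s$ it gives $\AR(\cw_q)^n=\AR(\cw_q^{\otimes n})\le r+s^{2/3}-t^{2/3}$, and taking $n$th roots gives $\AR(\cw_q)\le\gamma_{q,n}$. In the regime $t\le s$ it gives the bound $r+s^{\omega/3}-t^{\omega/3}$ instead. We can avoid this case entirely: since $\theta_i\in[0,1]$, the function $\tau(x)=s^x-t^x$ is nonnegative on $[0,1]$ when $t\le s$, so $r+s^{2/3}-t^{2/3}\ge r$ still holds, while $\AR(\cw_q)^n\le r$ is trivial. Hence $\AR(\cw_q)\le\gamma_{q,n}$ for every $n$. (The formula for $\gamma_{q,n}$ needs $t\ge 0$, which follows from convexity of $x\mapsto x^n$, since $t$ is a second difference.) Minimizing over $n$ gives $\AR(\cw_q)\le\gamma_q$.

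It remains to show $\gamma_q<q+2$, that is, $t>s$ for some $n$; then $t^{2/3}>s^{2/3}$ and $\gamma_{q,n}^n<r$. Divide by $(q+2)^n$. We have $t/(q+2)^n = 1-2\bigl(\tfrac{q+1}{q+2}\bigr)^n+\bigl(\tfrac{q}{q+2}\bigr)^n\to 1$, while $s/(q+2)^n=\bigl(\tfrac{q}{q+2}\bigr)^n\to 0$. So $t>s$ for all sufficiently large $n$, and then $\gamma_{q,n}<q+2$.

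The main obstacle is justifying the three directional degenerations. This requires that the one-slice speedup in \Cref{thm:app-oneslice} is available in each mode, which in turn needs the symmetric version of \Cref{lem:minrk}. Since both $\cw_q$ and its border rank decomposition are invariant under cyclically permuting $x,y,z$, this follows immediately.
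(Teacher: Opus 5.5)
Your proposal is correct and follows the paper's argument: you permute the modes of the preceding degeneration using the symmetry of $\cw_q$, apply \Cref{prop:comp_ex} to $\cw_q^{\otimes n}$, take $n$th roots, and minimize over $n$. Your treatment of the regime $t\le s$ via the trivial bound $\AR(\cw_q)^n\le (q+2)^n$ (which also covers $n=1$, where $t=0$), together with your check that $t>s$ for large $n$, supplies details the paper's one-line justification leaves implicit.
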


We let $n^{(\ref{cor:ar-cw})}$ denote the $n$ that reaches the minimum in \Cref{cor:ar-cw}.

We can further apply the second method to iteratively speed up the degeneration. By \Cref{thm:iterate}, we obtain
\begin{align*}
    & \phantom{=} \cw_q^{\otimes 2n} \; \oplus \; 2\odot \cw_q^{\otimes n} \otimes \ang{1, (q+2)^n-2(q+1)^n+q^n,1}\\
    &\oplus \ang{1, [(q+2)^n-2(q+1)^n+q^n]^2 + 2 (q+1)^{2n},1}\\
    &\degen (\ang{(q+2)^n} \oplus \ang{1,q^n,1})^{\otimes 2}.
\end{align*}
Strassen calculus then gives the upper bound
\[ \AR(\cw_q) \leq \max_{\theta\in [2/3,1]} F(\theta) \]
where
\[ F(\theta) \coloneq \sqrt{(r+s^\theta)^2 - (t^2 + 2(q+1)^{2n})^\theta + t^{2\theta}} - t^\theta \]
with $r=(q+2)^n$, $s=q^n$, and $t=r+s - 2(q+1)^n$.
We denote the value obtained by taking $n = n^{(\ref{cor:ar-cw})}$ by ${\gamma_q'}^{(\ref{cor:ar-cw})}$.

The values of $\gamma_q^{(\ref{cor:ar-cw})}, n^{(\ref{cor:ar-cw})}$ and ${\gamma_q'}^{(\ref{cor:ar-cw})}$
for the first few values of $q$ are shown in \Cref{tab:small_cw} (computed via computer-aided evaluation; in fact,
the maximizing $\theta$ in the definition of ${\gamma_q'}^{(\ref{cor:ar-cw})}$ equals $\theta=2/3$).

\begin{table}[htbp]
    \centering
    \begin{tabular}{ccccc}
        \toprule
        $q$ & $\BR(\cw_q)$ & $\gamma_q^{(\ref{cor:ar-cw})}$ & $n^{(\ref{cor:ar-cw})}$
        & ${\gamma_q'}^{(\ref{cor:ar-cw})}$ \\
        \midrule
        $2$ & $4$ & $3.933484$ & $4$ & $3.930872$ \\
        $3$ & $5$ & $4.970152$ & $4$ & $4.967092$ \\
        $4$ & $6$ & $5.986817$ & $5$ & $5.986293$ \\
        $5$ & $7$ & $6.994597$ & $6$ & $6.994516$ \\
        $6$ & $8$ & $7.997665$ & $6$ & $7.997596$ \\
        $7$ & $9$ & $8.998978$ & $7$ & $8.998969$ \\
        $8$ & $10$ & $9.999595$ & $8$ & $9.999593$ \\
        $9$ & $11$ & $10.999850$ & $9$ & $10.999850$ \\
        $10$ & $12$ & $11.999929$ & $9$ & $11.999929$ \\
        \bottomrule
    \end{tabular}
    \caption{Asymptotic rank upper bound for small CW tensors} \label{tab:small_cw}
\end{table}

Note that this degeneration can be iterated indefinitely. We do not pursue further iterations here,
as the resulting numerical improvements are marginal.

We also remark that the following variant of the $\cw_2$ tensor, which first appears in \cite[\S 11]{CW90Laser},
is equally useful from the perspective of the laser method:
\[ \cw_2' \coloneq x_1y_2z_3 + x_1y_3z_2 + x_2y_1z_3 + x_2y_3z_1
+ x_3y_1z_2 + x_3y_2z_1, \]
In particular, if one could prove $\AR(\cw_2') = 3$, then it would follow that $\omega = 2$.
Currently, the best known upper bound is $\Rk(\cw_2') = 4$ (over fields of characteristic $\neq 2$), via
\begin{align*}
	    \cw_2' &= \frac 1 4 [(x_1+x_2+x_3) (y_1+y_2+y_3) (z_1+z_2+z_3)\\
	    & \phantom{\frac 1 4 [} + (-x_1+x_2-x_3) (-y_1+y_2-y_3) (-z_1+z_2-z_3)\\
	    & \phantom{\frac 1 4 [} + (x_1-x_2-x_3) (y_1-y_2-y_3) (z_1-z_2-z_3) \\
	    & \phantom{\frac 1 4 [} + (-x_1-x_2+x_3) (-y_1-y_2+y_3) (-z_1-z_2+z_3) ].
\end{align*}
Let the four rank-one summands in the decomposition above be $a_i b_i c_i$.
Then
\[ a_1b_1 + a_2b_2 - a_3b_3 - a_4b_4  = 4(x_1y_3 + x_3y_1) \]
has rank $2$, matching the same parameter regime as in \Cref{lem:minrk}. Therefore, the preceding analysis
applies to $\cw_2'$ as well.
\begin{corollary}
	    If $\bbF$ has characteristic $\neq 2$, then we also have
	    $\AR(\cw_2') \leq {\gamma_2'}^{(\ref{cor:ar-cw})}$.
\end{corollary}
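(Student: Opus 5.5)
The plan is to rerun the $\cw_q$ argument verbatim for $\cw_2'$ with $q=2$. The pipeline for $\cw_q$ used only three ingredients: a decomposition $\cw_q\degen\ang{q+2}$ with vectors $a_i,b_i,c_i$; nonzero scalars $c_i'$ making $M=\sum_i c_i' a_ib_i$ of rank $q$ (\Cref{lem:minrk}); and the tensor dimension $q+1$. For $\cw_2'$ we have an exact rank decomposition $\cw_2'=\sum_{i=1}^4 a_ib_ic_i$ with $r=4$, a $3\times3\times3$ tensor ($n_0=3=q+1$), and the computed combination with scalars $c_i'=(1,1,-1,-1)$ (all nonzero, after absorbing the factor $\frac14$ into the $c_i$). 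This gives $M=4(x_1y_3+x_3y_1)$ of rank $2=q$. So the parameters $(r,\dim,s)=(4,3,2)$ match exactly those of $\cw_2$.

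First, verify the claimed identity for $M$. The only char-$2$ sensitivity is the factor $\frac14$ and the $4$ in $M$, both invertible since $\operatorname{char}\bbF\neq2$; this also guarantees $\Rk M=2$. Next, take $n$-th tensor powers: $(\cw_2')^{\otimes n}\le\ang{4^n}$ via $A^{\otimes n},B^{\otimes n},C^{\otimes n}$, and the functional $f^{\otimes n}$ (scalars $c'_{i_1}\cdots c'_{i_n}$, still nonzero) produces $M^{\otimes n}$ of rank $2^n$. Applying \Cref{thm:app-oneslice} with $r=4^n$, $s=2^n$, tensor size $3^n$ gives
\[ (\cw_2')^{\otimes n}\oplus\ang{1,4^n-2\cdot3^n+2^n,1}\degen\ang{4^n}\oplus\ang{1,2^n,1}. \]
Because $\cw_2'$ is symmetric under permuting the three modes (with the index relabeling $x\leftrightarrow y\leftrightarrow z$), the same construction in the other two directions yields the analogues with $\ang{t,1,1}$ and $\ang{1,1,t}$, as needed in \Cref{prop:comp_ex}. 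Alternatively, permuting the roles of $a,b,c$ in the symmetric decomposition works just as well.

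Next, apply \Cref{thm:iterate}. The slice it uses is produced by the one-slice speedup, so by \Cref{prop:speedup_isolated} and \Cref{prop:speedup_full} it is full and isolated. The squared identity is then literally the one displayed for $\cw_q$ with $q=2$:
\[ r=4^n,\quad s=2^n,\quad t=r+s-2\cdot3^n,\quad t'=t^2+2\cdot 3^{2n}. \]
Finally, run the Strassen calculus exactly as for $\cw_q$: apply a spectral point $\phi$ with parameters $\theta_i$, and take the best of the three directions. The function $F(\theta)$ is then literally the same, so its maximum over the relevant range of $\theta$ bounds $\phi(\cw_2')$. By \Cref{thm:duality} this gives $\AR(\cw_2')\le{\gamma_2'}^{(\ref{cor:ar-cw})}$ with $n=n^{(\ref{cor:ar-cw})}=4$.

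The only step needing care is that the $\cw_q$ derivation of $F$ used no structure of $\cw_q$ beyond the numbers $(r,s,\dim)$ and mode symmetry. Here the starting point is a restriction rather than a degeneration, which is only easier.
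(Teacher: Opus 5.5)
Your proposal is correct and follows the paper's argument: the rank-4 decomposition with scalars $(1,1,-1,-1)$ gives the rank-2 matrix $4(x_1y_3+x_3y_1)$, so $\cw_2'$ sits in the same $(r,\dim,s)=(4,3,2)$ regime as $\cw_2$ in \Cref{lem:minrk}, and the one-slice speedup, the iteration of \Cref{thm:iterate}, and the Strassen-calculus bound via $F(\theta)$ then go through verbatim. Your explicit checks (nonzero scalars for fullness, mode symmetry of $\cw_2'$ and its decomposition for the three directions, and the role of $\operatorname{char}\bbF\neq 2$) spell out what the paper leaves implicit; one small correction is that $F$ bounds $\phi(\cw_2')^n$ rather than $\phi(\cw_2')$, so the final bound is the $n$-th root of $\max F$.
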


\begin{remark} \label{rmk:cw_omega}
    If the asymptotic rank of the small CW tensor were optimal, i.e., $\AR(\cw_q)=q+1$, then applying the laser method directly to $\cw_q$ would in fact yield a better upper bound on $\omega$ than the standard analysis based on the big CW tensors $\CW_q$.
    Since we improve the best known upper bounds on $\AR(\cw_q)$, it is natural to ask whether these improvements already translate into a better bound on $\omega$.

    Starting from the degeneration above, the most straightforward way to bound $\omega$ combines the laser method with our new asymptotic rank bounds. Interestingly, we remark that this incurs a ``composition loss'': the asymptotic rank bound transfers some matrix multiplication structure into diagonal tensors, which the laser method cannot exploit.
    A more faithful approach would start from the degeneration used in our proof and combine Strassen calculus with the laser method at that level, and this indeed achieves a slightly better bound than the straightforward approach.

    Nonetheless, our experiments indicate that, with the identities currently produced by our speedup theorems, this hybrid approach still performs worse than Coppersmith--Winograd's analysis of the laser method applied to the first power $\CW_q^{\otimes 1}$.
\end{remark}

\subsection{General tensors} \label{subsec:exponent}

A nontrivial general upper bound on the asymptotic rank of tensors, due to Strassen (implicit in \cite[Proposition~3.6]{Str88AsymSpec}), states that every $d\times d\times d$ tensor satisfies $\AR(T) \leq d^{\frac 2 3 \omega}$.
In this subsection, we show how to slightly improve this bound.

To explain our improvement, we first recall Strassen's $\frac 2 3 \omega$ bound. The following straightforward rewriting shows the restriction $T\leq \ang{d,1,d}$:
\[ T = \sum_{i,j,k=1}^d a_{ijk} x_i y_j z_k =
\sum_{i,j=1}^d x_i y_j \left(\sum_{k=1}^d a_{ijk} z_k\right)
\leq \sum_{i,j,k=1}^d x_i y_j w_{ij} = \ang{d,1,d}. \]
Applying the same argument to the other modes yields $T\leq \ang{d,d,1}$ and $T\leq \ang{1,d,d}$.
Tensoring these three restrictions gives $T^{\otimes 3} \leq \ang{d^2,d^2,d^2}$, and hence $\AR(T)^3 \leq d^{2\omega}$.

We first present a one-slice speedup of the above argument.

\begin{theorem} \label{thm:general_oneslice}
    When $d \geq 3$, any $d\times d\times d$ tensor $T$ satisfies
    \[ \AR(T) \leq \sqrt{d^{\frac 4 3 \omega} + d^{\frac{2}{3}\omega} - (d^3-d^2)^{\frac{\omega}{3}}}
    < d^{\frac 2 3 \omega}. \]
\end{theorem}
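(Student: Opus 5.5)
The plan is to run the one-slice speedup machinery (\Cref{prop:add_slice}, \Cref{prop:freelunch_oneslice}, \Cref{cor:freelunch_speedup_degen}) not on a border rank decomposition but on Strassen's restriction $T\le \ang{d,1,d}$, and then finish with Strassen calculus. I have not matched the constants exactly, and they could be off. The square root in the bound suggests working with $T^{\otimes 2}$, whose natural upper bound is a product of two of the three restrictions, for instance
\[
T^{\otimes 2}\le \ang{d,1,d}\otimes\ang{d,d,1}=\ang{d^2,d,d}.
\]
The term $d^{2\omega/3}$ then looks like the value of an appended slice $\ang{1,d^2,1}$. The term $(d^3-d^2)^{\omega/3}$ looks like the value of an extracted slice $\ang{1,d^3-d^2,1}$, in the regime $t\ge s$ where the relevant exponent is $\omega/3$, as in Case~2 of \Cref{prop:comp_ex}.

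\emph{Step 1 (construction).} Fix the restriction $T^{\otimes 2}\le S$ with $S$ a matrix multiplication tensor of format $d^2\times d\times d$ (up to permuting modes). Choose a linear functional $f$ on the third mode of $S$ so that:
\begin{itemize}
\item the contraction $(\id\otimes\id\otimes f)S$ has rank $q$ as large as possible, of order $d^3$;
\item the image $(A\otimes B\otimes f)S$ in the $d^2\times d^2$ space of $T^{\otimes 2}$ has rank $s\le d^2$.
\end{itemize}
Then \Cref{prop:add_slice} extends the restriction to $T^{\otimes 2}\le S\oplus\ang{1,s,1}$ with a functional of fullness $q+s$. Next, \Cref{prop:freelunch_oneslice} together with \Cref{thm:freelunch_speedup} extracts a slice of size $t\ge q+s-2d^2$. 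The target is
\[
T^{\otimes 2}\oplus\ang{1,d^3-d^2,1}\degen S\oplus\ang{1,d^2,1}.
\]
The same construction, applied after permuting the roles of the modes, gives the analogous degenerations with the slice in each of the three directions.

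\emph{Step 2 (Strassen calculus).} Apply an arbitrary $\phi\in\calX$ and use \Cref{prop:param-specmm}. This gives $\phi(T)^2\le \phi(S)+d^{2\theta_i}-(d^3-d^2)^{\theta_i}$ for the appropriate direction $i$, with $\phi(S)=d^{\ell(\theta)}$ for some linear form $\ell$ in $\theta_1,\theta_2,\theta_3$. I would then combine the three directional inequalities with the plain bounds $\phi(T)\le d^{\theta_a+\theta_b}$ coming from the three restrictions $T\le\ang{d,1,d}$ and its rotations. Taking a geometric mean over the cyclic choices should replace $\phi(S)$ by at most $d^{\frac23(\theta_1+\theta_2+\theta_3)}\le d^{\frac43\omega\cdot\frac12}$ per factor, i.e.\ $d^{\frac43\omega}$ for $\phi(T)^2$. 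Since $t=d^3-d^2\ge s=d^2$ when $d\ge 3$, the correction $\tau(x)=s^x-t^x$ is decreasing. Using $\min_i\theta_i\le\omega/3$ then bounds the correction by $d^{2\omega/3}-(d^3-d^2)^{\omega/3}$, and Strassen duality gives the claimed bound. The strict inequality $<d^{\frac23\omega}$ follows because $(d^3-d^2)^{\omega/3}>d^{2\omega/3}$ for $d\ge3$.

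\emph{Main obstacle.} Two steps are delicate. The first is Step~1: finding an $f$ that simultaneously has high fullness on $S$ and a low-rank image of at most $d^2$ on $T^{\otimes2}$. I would try taking $f$ as a generic combination of the $w_{ij}$-type variables, so that the image is essentially a single generic slice combination of $T^{\otimes 2}$, and check the two rank counts directly. The second is Step~2: the three spectral inequalities must be combined so that the main term and the correction are controlled by the same worst-case $\theta$. If a plain geometric mean does not suffice, I would fall back on directly maximizing $\min_i$ of the three bounds over the simplex $\theta_1+\theta_2+\theta_3\le\omega$, $\theta_i\in[0,1]$, and argue by monotonicity that the maximum is attained at $\theta_1=\theta_2=\theta_3=\omega/3$.
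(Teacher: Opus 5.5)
Your overall plan is the right one: square $T$, append a slice to a restriction $T^{\otimes 2}\le S$ into a rectangular matrix multiplication tensor, extract $\ang{1,d^3-d^2,1}$, then apply Strassen calculus. Your target degeneration is exactly the one needed. But Step~2 fails as written. Since $t=d^3-d^2>s=d^2$, the correction $\tau(x)=d^{2x}-(d^3-d^2)^x$ is \emph{decreasing}. So $\min_i\theta_i\le\omega/3$ gives $\tau(\min_i\theta_i)\ge\tau(\omega/3)$, which is the wrong direction. Case~2 of \Cref{prop:comp_ex} is the regime $t\le s$; for $t>s$ that proposition only yields exponent $2/3$. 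Nor can you bound the main term and the correction separately, because both are evaluated at the \emph{same} coordinate. With $S=\ang{d,d^2,d}$ and the slice in the third mode, one gets $\phi(T)^2\le d^{\varpi+\theta_2}+d^{2\theta_2}-(d^3-d^2)^{\theta_2}$ with $\varpi=\theta_1+\theta_2+\theta_3$, and similarly for each direction $i$. Choosing $i$ with $\theta_i\le\varpi/3$ bounds $d^{\varpi+\theta_i}$ by $d^{4\varpi/3}$, but the correction may then be essentially $0$. Choosing the largest $\theta_i$ controls the correction but not the main term. A geometric mean over directions does not pass through the subtracted term, and the plain bounds $\phi(T)\le d^{\theta_a+\theta_b}$ contribute nothing. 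The missing idea is that the \emph{combined} function $F_\varpi(\theta)=d^{\varpi+\theta}+d^{2\theta}-(d^3-d^2)^\theta$ is increasing on $[0,\varpi/3]$. The growth of $d^{\varpi+\theta}$ outweighs the decay of the correction; this is a derivative estimate using $d\ge 3$ and $\varpi\ge 2$ (\Cref{lem:varpi}). Hence $\phi(T)^2\le F_\varpi(\min_i\theta_i)\le F_\varpi(\varpi/3)$. You must then still show that $F_\varpi(\varpi/3)$ is increasing in $\varpi$ on $[2,\omega]$ (\Cref{lem:f_varpi_increasing}). Your fallback, ``by monotonicity the maximum is at $\theta_1=\theta_2=\theta_3$'', is precisely the statement that needs this argument.

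Step~1 also needs fixing. Fullness $q=d^3$ requires $f$ to live on the mode of $S$ of dimension $d^2$, whose contraction is a $d^3\times d^3$ matrix. For $S=\ang{d,d^2,d}=\ang{d,d,1}\otimes\ang{1,d,d}$ this is the third mode $z_{ki}$, $k,i\in[d]$. Taking $f(z_{ki})=\iv{k=i}$ gives $\sum_{i,j}x_{ij}y_{ji}$, of rank $d^3$. The condition $s\le d^2$ is automatic, because the image lies in the $d^2\times d^2$ space of $T^{\otimes 2}$; pad to $\ang{1,d^2,1}$ if the rank is smaller. Your concrete suggestion, a generic combination of the $w_{ij}$-type variables, puts $f$ on the third mode of $\ang{d^2,d,d}$, which has dimension $d^3$. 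The contraction is then a $d^3\times d^2$ matrix of rank at most $d^2$, so $t\ge q+s-2d^2\le 0$ and nothing is extracted. For your $S=\ang{d^2,d,d}$ the correct mode is the second one, $y_{jk}$. The slice is then $\ang{t,1,1}$ with value $t^{\theta_1}$, matching $\phi(S)=d^{\varpi+\theta_1}$. This matching of indices is exactly what the joint monotonicity argument in Step~2 exploits.
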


\begin{proof}
    We start with the restriction $T^{\otimes 2} \leq \ang{d,d^2,d}$.

    We wish to apply \Cref{prop:add_slice} to append a slice to both sides of this restriction.
    To do so, we first need a linear functional $f\colon W\to \bbF$ for $S=\ang{d,d^2,d}$ such that
    $({\id} \otimes {\id} \otimes f) S$ has large rank. Write
    \[ S = \sum_{i=1}^{d^2} \sum_{j=1}^d \sum_{k=1}^{d^2} x_{ij} y_{jk} z_{ki}. \]
    Let $f$ be defined by $f(z_{ki}) = \iv{i = k}$. Then
    \[ ({\id} \otimes {\id} \otimes f)S = \sum_{i=1}^{d^2} \sum_{j=1}^d x_{ij} y_{ji} \]
    is diagonal, hence has rank $s=d^3$. Applying \Cref{prop:add_slice}
    and \Cref{prop:freelunch_oneslice}, we obtain the degeneration
    \[ T^{\otimes 2} \oplus \ang{1,d^3-d^2,1}  \degen  \ang{d,d^2,d}\oplus \ang{1,d^2,1}. \]

    Passing to the asymptotic spectrum, this means
    \[ \phi(T)^2 \leq d^{\theta_1 + 2\theta_2 + \theta_3} + d^{2\theta_2} - (d^3-d^2)^{\theta_2}. \]

    Writing $\varpi \coloneq \theta_1 + \theta_2 + \theta_3$, by permuting modes we obtain analogous inequalities, we have
    \[ \phi(T)^2 \leq d^{\varpi + \theta_i} + d^{2\theta_i} - (d^3 - d^2)^{\theta_i}. \]
    Since $\theta_1 + \theta_2 + \theta_3 = \varpi$ and $\theta_1,\theta_2,\theta_3\geq 0$,
    there exists $i$ such that $\theta_i \leq \varpi / 3$. One can show that the function
    \[ F(\theta) = d^{\varpi + \theta} + d^{2\theta} - (d^3 - d^2)^{\theta}, \quad \theta\in [0, \varpi/3] \]
    is maximized at $\theta = \varpi / 3$ (the proof is deferred to \Cref{lem:varpi}).
    Furthermore, by \Cref{lem:f_varpi_increasing}, the quantity $F(\varpi/3)$ is increasing in
    $\varpi$ on $[2,\omega]$. Since $\varpi\le \omega$ for all $(\theta_1,\theta_2,\theta_3)\in\specMM$,
    we obtain
    \[ \phi(T)^2 \leq d^{\frac 4 3 \omega} + d^{\frac 2 3 \omega} - (d^3-d^2)^{\omega/3}. \]
    The claim then follows from Strassen duality.
\end{proof}

By appending direct sums rather than a single slice, we obtain a stronger bound.

\begin{lemma}
    Let $\{d_\alpha\}_{1\leq \alpha\leq p}$ be positive integers such that
    $\sum_\alpha d_\alpha \leq d$. Then we have degeneration
    \[ T^{\otimes 2} \; \oplus \; \bigoplus_{\alpha=1}^p \ang{1, (d_\alpha - 1)d^2, 1}
    \; \degen \; \ang{d,d^2,d} \; \oplus \; p\odot \ang{1,d^2,1}. \]
\end{lemma}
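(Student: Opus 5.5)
We follow the second proof of \Cref{thm:speedup_via_compression}, but with the diagonal tensor $\ang{r}$ replaced by $S=\ang{d,d^2,d}$, starting from the restriction $T^{\otimes 2}\le S$. We write
\[ S=\sum_{i=1}^{d^2}\sum_{j=1}^d\sum_{k=1}^{d^2} x_{ij}y_{jk}z_{ki}. \]
Our reading of the restriction is this: the second $V$-index $j$ of $x_{ij}$ is the $\ang{d,1,d}$-coordinate used for one copy of $T$, and the $d^2$-indices $i,k$ range over $[d]^2$.

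The first step is to partition the diagonal functional used in \Cref{thm:general_oneslice}. We partition the inner index range $[d]$ of $j$ (the middle $d$) into disjoint blocks $J_1,\dots,J_p$ with $|J_\alpha|=d_\alpha$; leftover indices are allowed since $\sum d_\alpha\le d$. For each $\alpha$ we would like a functional $f_\alpha$ on $W$ with
\[ ({\id}\otimes{\id}\otimes f_\alpha)S=\sum_i\sum_{j\in J_\alpha}x_{ij}y_{ji}, \]
a matrix of rank $d_\alpha d^2$.

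This is the main obstacle. A functional on $W$ only sees $z_{ki}$, so we cannot select $j$ that way. To fix this, we permute modes so that the $d$-sized index lives in the third mode, i.e.\ we use $\ang{d^2,d^2,d}$-type shapes. Alternatively, we take $f_\alpha(z_{ki})=\iv{i=k}\iv{i\in I_\alpha}$ with $I_\alpha\subseteq[d^2]$ a block of size $d_\alpha d$. The latter gives rank $d_\alpha d\cdot d=d_\alpha d^2$, which is exactly the target count. We commit to the second choice: partition $[d^2]$ into blocks $I_\alpha$ of sizes $d_\alpha d$. This is possible since $\sum_\alpha d_\alpha d\le d^2$. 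The resulting diagonal blocks are supported on disjoint sets of $x$- and $y$-variables.

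Next we add the slices. As in \Cref{prop:add_slice}, each matrix $M_\alpha=(A\otimes B\otimes f_\alpha)S$ lies in $U'\otimes V'$ with $\dim U'=\dim V'=d^2$, so it has rank at most $d^2$. We therefore append $\ang{1,d^2,1}$ for each $\alpha$, with $C$-component zero and $A,B$-components realizing $M_\alpha$. We then set $C'$ to be spanned by the functionals $f_\alpha-w'^\vee_\alpha$. Then $(A\otimes B\otimes C')(S\oplus p\odot\ang{1,d^2,1})=0$, and $({\id}\otimes{\id}\otimes C')$ applied to the sum gives
\[ \bigoplus_\alpha\ang{1,d_\alpha d^2+d^2,1}. \]

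Finally we apply \Cref{thm:freelunch_speedup}. The conditions defining $A'$ split over $\alpha$ into $d^2$ linear equations per block, one set for each $w'_\alpha$-component. Each such set involves only the variables of block $\alpha$ and the $\alpha$-th appended slice. The same holds for $B'$. So, exactly as at the end of the second proof of \Cref{thm:speedup_via_compression}, we compress each summand separately by at most $2d^2$, using \Cref{prop:compression} blockwise. This yields
\[ \ang{1,d_\alpha d^2+d^2-2d^2,1}=\ang{1,(d_\alpha-1)d^2,1}, \]
and the claimed degeneration follows.
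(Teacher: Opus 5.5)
Your argument is the paper's proof: the redundant restriction $T^{\otimes 2}\le\ang{d,d^2,d}\oplus p\odot\ang{1,d^2,1}$, then $C'$ spanned by blockwise diagonal-trace functionals minus the slice functionals, then the free-lunch speedup, then per-block compression by $2d^2$. The one thing you need to fix is the coordinate description of $S$. You took it from the proof of \Cref{thm:general_oneslice}, where the index ranges are swapped.

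Under the paper's convention $\ang{n,m,p}=\sum_{i\in[n]}\sum_{j\in[m]}\sum_{k\in[p]}x_{ij}y_{jk}z_{ki}$, your sum (with $i,k\in[d^2]$, $j\in[d]$) is $\ang{d^2,d,d^2}$, whose third mode is $d^4$-dimensional. Read literally, your argument proves the degeneration with $\ang{d^2,d,d^2}$ on the right-hand side in place of $\ang{d,d^2,d}$. That is not the statement. The correct form is $\ang{d,d^2,d}=\sum_{i\in[d]}\sum_{j\in[d^2]}\sum_{k\in[d]}x_{ij}y_{jk}z_{ki}$. Its third mode is $d^2$-dimensional, matching that of $T^{\otimes 2}$. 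It has only $d$ diagonal coordinates $z_{ii}$, and each contributes rank $d^2$ through the middle index $j\in[d^2]$.

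So you should take disjoint $I_\alpha\subseteq[d]$ with $|I_\alpha|=d_\alpha$ and $f_\alpha(z_{ki})=\iv{i=k\in I_\alpha}$. This gives $(\id\otimes\id\otimes f_\alpha)S=\sum_{i\in I_\alpha}\sum_{j\in[d^2]}x_{ij}y_{ji}$, of rank $d_\alpha d^2$. The hypothesis $\sum_\alpha d_\alpha\le d$ is exactly what lets these blocks fit. The ``obstacle'' about a functional on $W$ being unable to select the $d$-sized index never arises, because in the correct indexing that index is the outer one, which the diagonal of $W$ does see. With this correction, the rest of your proof goes through unchanged and coincides with the paper's.
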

\begin{proof}
    We start from the redundant restriction $T^{\otimes 2} \leq \ang{d,d^2,d} \oplus p\odot \ang{1,d^2,1}$.
    Write
    \[ S \coloneq \ang{d,d^2,d}\oplus p\odot\ang{1,d^2,1} =
    \sum_{i=1}^d \sum_{j=1}^{d^2} \sum_{k=1}^d u_{ij} v_{jk} w_{ki}
    - \sum_{\alpha=1}^p \sum_{\ell=1}^{d^2} u'_{\alpha \ell} v'_{\alpha \ell} w'_\alpha, \]
    where the minus sign is introduced to simplify later cancellations.
    Fix a partition $[\sum_\alpha d_\alpha] = I_1\sqcup \cdots \sqcup I_p$ where $|I_\alpha| = d_\alpha$.
    Define the linear map $C$ by augmenting the corresponding linear map of $T^{\otimes 2}\leq \ang{d,d^2,d}$
    by zero on new components, and $C'$ by
    \[ C': \quad w_{ki} \mapsto \begin{cases}
        w'_\alpha & i = k \in I_\alpha,\\
        0 & \text{otherwise},
    \end{cases} \qquad w'_\alpha \mapsto w'_\alpha, \]
    and choose linear maps $A, B$ so that
    \begin{align*}
        0 &= (A\otimes B\otimes C') S\\
        &= \sum_{\alpha=1}^p \left( \sum_{i\in I_\alpha} \sum_{j=1}^{d^2}
        A(u_{ij}) B(v_{ji}) - \sum_{\ell=1}^{d^2} A(u'_{\alpha \ell})
        B(v'_{\alpha \ell}) \right) w'_\alpha,
    \end{align*}
    where the action on the $u$- and $v$-variables is already determined (since it stems from $T^{\otimes 2} \leq \ang{d,d^2,d}$).
    Since $T^{\otimes 2}$ has dimension $d^2$ in each mode, we can choose $A(u'_{\alpha \ell})$ and $B(v'_{\alpha \ell})$ so that the terms cancel.

    Applying \Cref{thm:freelunch_speedup} yields a direct summand $T'=(A'\otimes B'\otimes C')S$
    on the left-hand side. We first expand the form of $({\id} \otimes {\id} \otimes C')S$:
    \[ ({\id} \otimes {\id} \otimes C')S = \sum_{\alpha=1}^p \underbracket{\left( \sum_{i\in I_\alpha} \sum_{j=1}^{d^2}
        u_{ij} v_{ji} - \sum_{\ell=1}^{d^2} u'_{\alpha \ell}
        v'_{\alpha \ell} \right) w'_\alpha}_{\ang{1,(d_\alpha+1)d^2,1}}. \]
    Then, as in the second proof of \Cref{thm:speedup_via_compression}, after applying the contraction
    $A'\otimes B'$, each summand $\ang{1,(d_\alpha+1)d^2,1}$ restricts to
    $\ang{1,(d_\alpha-1)d^2,1}$. This completes the proof.
\end{proof}

Compared with \Cref{thm:general_oneslice}, this yields an improved bound in which the subtractive term is asymptotically stronger.

\begin{theorem}
    When $d \geq 3$, any $d\times d\times d$ tensor $T$ satisfies
    \[ \AR(T) \leq \sqrt{d^{\frac 4 3 \omega} -
    \left\lfloor \frac d 3 \right\rfloor\cdot (2^{\frac \omega 3} - 1)d^{\frac 2 3 \omega}}. \]
\end{theorem}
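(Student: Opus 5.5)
The plan is to apply the preceding lemma with $p=\lfloor d/3\rfloor$ groups, each of size $d_\alpha=3$, so that $\sum_\alpha d_\alpha = 3p\le d$. This gives
\[ T^{\otimes 2} \oplus p\odot \ang{1,2d^2,1} \degen \ang{d,d^2,d} \oplus p\odot\ang{1,d^2,1}. \]
Permuting modes gives the analogous degenerations with the one-slice tensors oriented along each of the three directions; the matrix multiplication part becomes $\ang{d^2,d,d}$ or $\ang{d,d,d^2}$, and all of these have value $d^{\varpi+\theta_i}$. The choice $d_\alpha=3$ is natural because the gain per group scales like $(d_\alpha-1)^{\theta}d^{2\theta}-d^{2\theta}$ while the groups consume $d_\alpha$ of the budget $d$. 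At $\theta=\omega/3$, the value $(2^{\omega/3}-1)/3$ for $d_\alpha=3$ beats $(1-1)/2=0$ for $d_\alpha=2$ and $(3^{\omega/3}-1)/4$ for $d_\alpha=4$, which matches the constant in the statement.

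Next I pass to a spectral point $\phi$ with parameters $(\theta_1,\theta_2,\theta_3)\in\specMM$ and set $\varpi=\theta_1+\theta_2+\theta_3$. The three degenerations give, for each $i$,
\[ \phi(T)^2 \le d^{\varpi+\theta_i} + p\bigl(d^{2\theta_i} - (2d^2)^{\theta_i}\bigr) = d^{\varpi+\theta_i} - p(2^{\theta_i}-1)d^{2\theta_i}. \]
I then choose $i$ with $\theta_i\le\varpi/3$ and define
\[ G(\theta)=d^{\varpi+\theta}-p(2^\theta-1)d^{2\theta}. \]

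The key step, and the main obstacle, is showing that $G$ is nondecreasing on $[0,\varpi/3]$, so that $\phi(T)^2\le G(\varpi/3)$. This is the analogue of \Cref{lem:varpi}. Differentiating gives
\[ G'(\theta)=\ln d\cdot d^{\varpi+\theta} - p\,d^{2\theta}\bigl(2^\theta\ln 2+2(2^\theta-1)\ln d\bigr). \]
Using $p\le d/3$, $2^\theta\le 2^{1/3\cdot\omega}<2$, and $d^{\theta}\le d^{\varpi/3}$, the subtracted term is at most about $(d/3)\,d^{\varpi/3+\theta}(2\ln 2+2\ln d)$. The main term is $\ln d\cdot d^{\varpi+\theta}$, and since $\varpi\ge 2$ we have $d^{2\varpi/3}\ge d^{4/3}$, which dominates $(d/3)(2\ln2/\ln d+2)$ for $d\ge3$. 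If this crude estimate turns out to be too tight for small $d$, I would check $d=3,4,5$ by hand.

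Finally, I need $H(\varpi)=G_\varpi(\varpi/3)=d^{4\varpi/3}-p(2^{\varpi/3}-1)d^{2\varpi/3}$ to be increasing in $\varpi$ on $[2,\omega]$, as in \Cref{lem:f_varpi_increasing}. Substituting $x=d^{2\varpi/3}$ turns this into a comparison of derivatives with a similar dominance, since the leading term $x^2$ beats $p\cdot x\cdot(\text{bounded factor})$ once $x\ge d^{4/3}$. Then $\varpi\le\omega$ gives $\phi(T)^2\le d^{\frac43\omega}-\lfloor d/3\rfloor(2^{\omega/3}-1)d^{\frac23\omega}$ for every $\phi$, and Strassen duality (\Cref{thm:duality}) yields the bound on $\AR(T)$.
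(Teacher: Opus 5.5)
Your proposal is correct and follows the paper's proof essentially step for step: the same instantiation of the lemma with $p=\lfloor d/3\rfloor$ and $d_\alpha=3$, mode permutation, choosing $\theta_i\le\varpi/3$, the same derivative estimates for monotonicity in $\theta$ and then in $\varpi$, and finally Strassen duality. One side remark is wrong, though it plays no role in the proof. Your heuristic claim that $d_\alpha=3$ beats $d_\alpha=4$ fails: at $\theta=\omega/3\approx 0.79$ one has $(3^{\theta}-1)/4\approx 0.35>(2^{\theta}-1)/3\approx 0.24$, so the constant in the statement reflects a convenient choice rather than an optimal one.
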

\begin{proof}
    Apply the previous lemma with $p = \lfloor d/3\rfloor$ and $d_\alpha = 3$ for all $\alpha$.
    It gives the degeneration
    \[ T^{\otimes 2} \oplus p \odot \ang{1,2d^2,1} \degen \ang{d,d^2,d} \oplus p\odot \ang{1,d^2,1}. \]
    Fix an arbitrary spectral point $\phi\in\calX$, and let $(\theta_1,\theta_2,\theta_3)\in\specMM$ be its parameters.
    Passing to the asymptotic spectrum yields
    \[ \phi(T)^2 \le d^{\theta_1+2\theta_2+\theta_3} - (2^{\theta_2}-1)p\,d^{2\theta_2}. \]
    By permuting tensor modes, we obtain the analogous bounds for $i\in\{1,2,3\}$:
    \[ \phi(T)^2 \le d^{\varpi+\theta_i} - (2^{\theta_i}-1)p\,d^{2\theta_i}, \qquad \text{where }\varpi\coloneq \theta_1+\theta_2+\theta_3. \]
    Let $\theta\coloneq \min\{\theta_1,\theta_2,\theta_3\}$, so $\theta\le \varpi/3$.
    Define
    \[ h_\varpi(x) \coloneq d^{\varpi+x} - (2^x-1)p\,d^{2x}. \]
    Then $\phi(T)^2 \le h_\varpi(\theta)$. By \Cref{lem:h_increasing}, $h_\varpi$ is increasing on $[0,\varpi/3]$, hence
    \[ \phi(T)^2 \le h_\varpi(\varpi/3) = d^{4\varpi/3} - (2^{\varpi/3}-1)p\,d^{2\varpi/3}. \]
    By \Cref{lem:g_increasing}, the right-hand side is increasing in $\varpi$ on $[2,\omega]$; since $\varpi\le \omega$ for every $(\theta_1,\theta_2,\theta_3)\in\specMM$, we obtain
    \[ \phi(T)^2 \le d^{4\omega/3} - (2^{\omega/3}-1)p\,d^{2\omega/3}. \]
    Finally, taking the maximum over $\phi\in\calX$ and using Strassen duality yields
    \[ \AR(T)^2 \le d^{4\omega/3} - (2^{\omega/3}-1)p\,d^{2\omega/3}, \]
    i.e.,
    \[ \AR(T) \le \sqrt{d^{\frac 4 3 \omega} -
    \left\lfloor \frac d 3 \right\rfloor\cdot (2^{\frac \omega 3} - 1)d^{\frac 2 3 \omega}}. \qedhere \]
\end{proof}

Let $\sigma(d)$ denote the exponent of $d\times d\times d$ tensors \cite{KM25ARC}, i.e., the smallest $\sigma$ such that every $d\times d\times d$ tensor $T$ satisfies $\AR(T) \leq d^{\sigma}$.
Then the above results imply the following.

\begin{corollary}
    For any $d > 0$, $\sigma(d) < \frac 23 \omega$.
\end{corollary}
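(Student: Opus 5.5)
We need $\sigma(d) < \frac23\omega$ for every $d>0$, and we split by the size of $d$.

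\emph{Case $d\geq 3$.} The preceding theorem (or \Cref{thm:general_oneslice}) gives a uniform bound
\[ \AR(T)\le \beta_d \coloneq \sqrt{d^{\frac43\omega}-\lfloor d/3\rfloor(2^{\omega/3}-1)d^{\frac23\omega}} \]
for every $d\times d\times d$ tensor $T$. Since $\lfloor d/3\rfloor\ge 1$ and $2^{\omega/3}>1$ (because $\omega\ge 2$), the subtracted term is strictly positive, so $\beta_d<d^{\frac23\omega}$.

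The only subtlety is that $\sigma(d)$ is defined as an infimum. We would like "every $T$ satisfies $\AR(T)\le d^{\sigma}$" to hold with $\sigma=\log_d\beta_d$. This works because $\beta_d$ is a single constant independent of $T$, so the infimum is attained at or below $\log_d\beta_d<\frac23\omega$.

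\emph{Case $d=1$.} Here $\AR(T)\le 1=d^{0}$, so $\sigma(1)=0<\frac23\omega$. (For $d=1$ every power $d^\sigma$ equals $1$, so the exponent is degenerate; we adopt the convention that $\sigma(1)=0$.)

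\emph{Case $d=2$.} This is the main obstacle, since the theorems above require $d\ge3$. Every $2\times2\times2$ tensor over an algebraically closed field has border rank at most $2$, the generic rank. Since asymptotic rank is invariant under field extension, we get $\AR(T)\le 2$, hence $\sigma(2)\le 1<\frac43\le\frac23\omega$.

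If we want to avoid citing the generic-rank fact, we can fall back on a cruder bound. For any $2\times2\times2$ tensor, the slice decomposition gives $\Rk(T)\le 4$, hence $\sigma(2)\le 2$. This does not beat $\frac23\omega\ge\frac43$, so the generic border rank $2$ is genuinely needed. An alternative is to embed a $2\times 2\times 2$ tensor into a $3\times3\times3$ tensor, but that only yields bounds of the form $3^{\sigma(3)}$, not powers of $2$, so it does not help. I would therefore rely on $\BR\le 2$ (equivalently $\BR(T)\le 3$ in all cases, and $\le2$ generically via closure).

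One caveat: the bound $\BR\le 2$ is not uniform, since $W$-type tensors have border rank $2$ but rank $3$. Border rank at most $2$ still suffices for $\AR\le 2$, so the $d=2$ case closes.
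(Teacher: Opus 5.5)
Your argument is correct and is essentially the paper's: for $d\ge 3$ the corollary is read off from the final theorem of \Cref{subsec:exponent}, since $\lfloor d/3\rfloor(2^{\omega/3}-1)d^{\frac23\omega}>0$. The paper gives no separate argument for $d\le 2$, and your use of the generic border rank $\BR(T)\le 2$ for $2\times2\times2$ tensors (the Strassen--Lickteig value $r_2=2$ quoted in \Cref{subsec:toy}), together with field-extension invariance, correctly covers that case; there is one minor slip: in your parentheticals it is $\Rk(T)\le 3$ that holds in all cases, and the \emph{rank} bound $\Rk\le 2$ (not $\BR\le 2$) that fails to be uniform.
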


\begin{remark} \label{rmk:reduce_mm}
    This answers an open problem posed by Kaski \cite[Open Problem~1]{Kaski2025Talk}, which asks whether Strassen's bound $\sigma(d)\le \frac23\omega$ can be improved. It is worth emphasizing, however, that our improvement has the form $\sigma(d)\le \frac23\omega-\varepsilon_d$, where $\varepsilon_d\to 0$ rapidly as $d\to\infty$.

    In particular, to make progress on fine-grained questions such as the set cover conjecture \cite{bjorklund2024asymptotic,Pratt24SCC}, one would need to achieve $\sigma(d)\le 1.08$.
    While in the most ideal case $\omega = 2$, currently we only have an exponent bound $\sigma(d) \leq 4/3 - o(1)$.
    On the other hand, this current approach also seems ill-suited for improving $\omega$: the strategy is to reduce a generic tensor $T$ to matrix multiplication, but when $T$ is itself a matrix multiplication tensor, this reduction becomes too indirect to be effective.
\end{remark}

To make progress toward the (extended; see \cite[Conjecture~2]{KM25ARC}) asymptotic rank conjecture, we propose the following question, which seems to require ideas beyond a naive application of the techniques developed in this paper.
\begin{problem}
    Prove that there exists $\delta > 0$ such that
    $\sigma(d) \leq \frac 2 3 \omega - \delta$ for all $d$.
\end{problem}

\subsection{A direct sum identity} \label{subsec:directsum}

In this subsection, we use the speedup theorems developed above to derive an intriguing
direct sum identity that generalizes Sch\"onhage's classical construction
\cite[Lemma~6.1]{Sch81Tau}:
\[
    \ang{n,1,m} \oplus \ang{1,(n-1)(m-1),1} \degen \ang{nm+1}.
\]

\begin{theorem} \label{thm:direct_sum_identity}
    Let $p,q$ be positive integers, and let
    $\{n_\alpha\}_{1\le \alpha\le p}$ and $\{m_\beta\}_{1\le \beta\le q}$ be sequences of
    positive integers. Set
    \[
        n=\sum_{\alpha=1}^p n_\alpha,
        \qquad
        m=\sum_{\beta=1}^q m_\beta.
    \]
    Then there exists a degeneration between direct sums of matrix multiplication tensors:
    \[
        \ang{n,1,m}
        \oplus
        \bigoplus_{\substack{1\le \alpha\le p\\1\le \beta\le q}}
        \ang{1,(n_\alpha-1)(m_\beta-1),1}
        \ \degen\
        \ang{nm} \oplus \ang{p,1,q}.
    \]
\end{theorem}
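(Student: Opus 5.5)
We mirror the second proof of \Cref{thm:speedup_via_compression} and the lemma of \Cref{subsec:exponent}, but with $C'$ now of dimension $pq$ instead of a sum of one-dimensional slices, so that the appended right-hand term is $\ang{p,1,q}$ rather than a direct sum of one-slice tensors. The starting point is the trivial restriction $\ang{n,1,m}\le\ang{nm}$ (each $x_iy_jz_{ij}$ is one diagonal term), made redundant as
\[ \ang{n,1,m}\le S\coloneq \ang{nm}\oplus\ang{p,1,q}
=\sum_{i,j}u_{ij}v_{ij}w_{ij}-\sum_{\alpha,\beta}u'_{\alpha}v'_{\beta}w'_{\alpha\beta}, \]
where the sign is inserted for cancellation. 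Here $A(u_{ij})=x_i$, $B(v_{ij})=y_j$, $C(w_{ij})=z_{ij}$, and $C(w'_{\alpha\beta})=0$. Partition $[n]=\bigsqcup_\alpha P_\alpha$ with $|P_\alpha|=n_\alpha$, and $[m]=\bigsqcup_\beta Q_\beta$ with $|Q_\beta|=m_\beta$. On the new variables set $A(u'_\alpha)=\sum_{i\in P_\alpha}x_i$ and $B(v'_\beta)=\sum_{j\in Q_\beta}y_j$.

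Next take $C'\colon w_{ij}\mapsto w'_{\alpha\beta}$ whenever $i\in P_\alpha$, $j\in Q_\beta$, and $w'_{\alpha\beta}\mapsto w'_{\alpha\beta}$. The key check is $(A\otimes B\otimes C')S=0$. For each $(\alpha,\beta)$ the $w'_{\alpha\beta}$-coefficient is
\[ \sum_{i\in P_\alpha,j\in Q_\beta}x_iy_j-\Bigl(\sum_{i\in P_\alpha}x_i\Bigr)\Bigl(\sum_{j\in Q_\beta}y_j\Bigr)=0. \]
\Cref{thm:freelunch_speedup} then gives $\ang{n,1,m}\oplus T'\degen S$ with $T'=(A'\otimes B'\otimes C')S$. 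We also have
\[ (\id\otimes\id\otimes C')S=\bigoplus_{\alpha,\beta}\Bigl(\sum_{i\in P_\alpha,j\in Q_\beta}u_{ij}v_{ij}-u'_\alpha v'_\beta\Bigr)w'_{\alpha\beta}, \]
so $T'$ is obtained by restricting this block-diagonal direct sum of one-slice tensors, each of rank $n_\alpha m_\beta+1$, along $A'$ and $B'$.

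The main work is computing $A'$ and $B'$ and showing the restriction still contains $\bigoplus_{\alpha,\beta}\ang{1,(n_\alpha-1)(m_\beta-1),1}$. For $f\in U^\vee$, the condition $(f\otimes B\otimes C')S=0$ splits by $(\alpha,\beta)$ and by $j\in Q_\beta$ into the equations $f(u_{ij})$ summed over $i\in P_\alpha$ equals $f(u'_\alpha)$, one for each $j\in Q_\beta$. So within block $(\alpha,\beta)$ there are $m_\beta$ constraints, which touch only the $u$-variables of that block together with the shared $u'_\alpha$. Symmetrically, $B'$ imposes $n_\alpha$ constraints per block.

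The obstacle is that $u'_\alpha$ and $v'_\beta$ are shared across blocks, so we cannot literally compress each block independently as in \Cref{thm:speedup_via_compression}. My plan is to restrict further by imposing $f(u'_\alpha)=0$ and $g(v'_\beta)=0$, which kills the $u'_\alpha v'_\beta$ terms and decouples the blocks. Block $(\alpha,\beta)$ is then the $n_\alpha m_\beta$-rank slice $\sum u_{ij}v_{ij}$, with $f$ constrained to have zero column sums over $i\in P_\alpha$ (one per $j$) and $g$ zero row sums over $j\in Q_\beta$ (one per $i$). This is a restriction of the $n_\alpha\times m_\beta$ identity pairing to the subspaces of matrices with zero column sums and zero row sums respectively. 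The pairing of zero-column-sum and zero-row-sum matrices has rank $(n_\alpha-1)(m_\beta-1)$: both contain the matrices with all row and column sums zero, on which the pairing is nondegenerate. By \Cref{prop:compression}-style zeroing out we extract $\ang{1,(n_\alpha-1)(m_\beta-1),1}$ in each block. I expect verifying this rank count and the independence across blocks to be the only delicate point; the case $p=q=1$ recovers Schönhage's identity as a sanity check.
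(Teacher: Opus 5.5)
Your construction is the paper's. You use the same redundant restriction $\ang{n,1,m}\le\ang{nm}\oplus\ang{p,1,q}$ with the sign trick, the same maps $A,B,C,C'$, and the same description of $A'$ and $B'$. You also kill $u'_\alpha$ and $v'_\beta$ to decouple the blocks, as the paper does: your conditions $f(u'_\alpha)=0$, $g(v'_\beta)=0$ are the dual form of the paper's $\pi_U(u'_\alpha)=0$, $\pi_V(v'_\beta)=0$.

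The one step that fails as written is your justification of the rank count. Let $D$ be the space of $n_\alpha\times m_\beta$ matrices with all row and column sums zero. The Frobenius pairing restricted to $D$ is nondegenerate in characteristic $0$, but not over an arbitrary field. Over $\bbF_2$ with $n_\alpha=m_\beta=2$, $D$ is spanned by the all-ones matrix, which pairs to $4=0$ with itself. More generally, $D$ is degenerate whenever $\mathrm{char}\,\bbF$ divides $n_\alpha$ or $m_\beta$ (with $n_\alpha,m_\beta\ge 2$). Since the theorem is stated over an arbitrary base field, ``rank $\ge\dim D$'' is not justified in those cases.

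The rank $(n_\alpha-1)(m_\beta-1)$ itself is correct in every characteristic, and the repair is short. One option is to compute the left radical directly. Set $Z=\{F:\mathbf 1^{\sfT}F=0\}$ and $Z'=\{G:G\mathbf 1=0\}$. Then $(Z')^{\perp}=\{a\mathbf 1^{\sfT}\}$, so the radical is $Z\cap(Z')^{\perp}=\{a\mathbf 1^{\sfT}:\sum_i a_i=0\}$. This has dimension $n_\alpha-1$, so the rank is $(n_\alpha-1)m_\beta-(n_\alpha-1)$.

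The other option is what the paper does. Fix $i^*_\alpha\in P_\alpha$ and $j^*_\beta\in Q_\beta$, and let $E_{ij}$ be the matrix units. For $i\neq i^*_\alpha$ and $j\neq j^*_\beta$, the elements $E_{ij}-E_{i^*_\alpha j}\in Z$ and $E_{ij}-E_{ij^*_\beta}\in Z'$ have identity Gram matrix. The paper's projections $\pi_U,\pi_V$ additionally kill the distinguished columns $u_{ij^*_\beta}$ and rows $v_{i^*_\alpha j}$. They realize exactly this sub-pairing and leave the slice $\sum_{i\neq i^*_\alpha,\,j\neq j^*_\beta}u_{ij}v_{ij}$ explicitly. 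That is why the paper's argument needs no nondegeneracy input and works over any field.
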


Before proving \Cref{thm:direct_sum_identity}, we briefly compare it with Sch\"onhage's original direct
sum identity. Taking the direct sum over all $(\alpha,\beta)$ gives
\[
    \bigoplus_{\alpha,\beta} \ang{n_\alpha,1,m_\beta}
    \oplus
    \bigoplus_{\alpha,\beta} \ang{1,(n_\alpha-1)(m_\beta-1),1}
    \degen
    \ang{nm} \oplus \ang{pq}.
\]
Compared with \Cref{thm:direct_sum_identity}, the direct-sum tensor
$\bigoplus_{\alpha,\beta} \ang{n_\alpha,1,m_\beta}$ on the left-hand side is more valuable than
$\ang{n,1,m}$. However, this also comes with a bigger cost on the right-hand side: the slack term
$\ang{pq}$ replaces the one-slice tensor $\ang{p,1,q}$.

In the special case $n_\alpha=n$ and $m_\beta=m$ for all $\alpha,\beta$, our identity becomes
\[
    \ang{pn,1,qm} \oplus pq\odot \ang{1,(n-1)(m-1),1}
    \degen
    \ang{pqnm} \oplus \ang{p,1,q}.
\]
If instead one tensors Sch\"onhage's identity for $(n,m)$ with $\ang{p,1,q}$, one obtains
\[
    \ang{pn,1,qm} \oplus \ang{p,(n-1)(m-1),q}
    \degen
    nm\odot \ang{p,1,q} \oplus \ang{p,1,q}.
\]
Here the direct summand $nm \odot \ang{p,1,q}$ on the right-hand side is cheaper than $\ang{pqnm}$,
but the tensor $\ang{p,(n-1)(m-1),q}$ on the left-hand side is also less valuable than
$pq\odot \ang{1,(n-1)(m-1),1}$.

In sum, our new identity does not seem to be easily derived from Sch\"onhage's identity alone.

\begin{proof}[Proof of \Cref{thm:direct_sum_identity}]
    We start from a redundant restriction
    \[
        \ang{n,1,m} \le \ang{nm} \oplus \ang{p,1,q}.
    \]
    Write
    \[
        \ang{n,1,m}
        = \sum_{i=1}^n \sum_{j=1}^m x_i y_j z_{ij},
        \qquad
        S \coloneq \ang{nm} \oplus \ang{p,1,q}
        = \sum_{i=1}^n \sum_{j=1}^m u_{ij} v_{ij} w_{ij}
        - \sum_{\alpha=1}^p \sum_{\beta=1}^q u'_\alpha v'_\beta w'_{\alpha\beta},
    \]
    where the minus sign is introduced to simplify later cancellations.

    Fix partitions $[n]=I_1\sqcup\cdots\sqcup I_p$ and
    $[m]=J_1\sqcup\cdots\sqcup J_q$ with $|I_\alpha|=n_\alpha$ and $|J_\beta|=m_\beta$.
    Define linear maps $A,B,C$ by
    \begin{align*}
        A:&\quad x_i \mapsfrom u_{ij}, \qquad \sum_{i\in I_\alpha} x_i \mapsfrom u'_\alpha,\\
        B:&\quad y_j \mapsfrom v_{ij}, \qquad \sum_{j\in J_\beta} y_j \mapsfrom v'_\beta,\\
        C:&\quad z_{ij} \mapsfrom w_{ij}, \qquad 0 \mapsfrom w'_{\alpha\beta}.
    \end{align*}
    These maps define a restriction $\ang{n,1,m} \le S$.

    To apply the free-lunch speedup theorem (\Cref{thm:freelunch_speedup}), define a subspace
    $C'\subseteq W^\vee$ via the linear map
    \[
        C':\quad
        w'_{\alpha(i)\beta(j)} \mapsfrom w_{ij},
        \qquad
        w'_{\alpha\beta} \mapsfrom w'_{\alpha\beta},
    \]
    where $\alpha(i)$ (resp.~$\beta(j)$) is the unique index such that
    $i\in I_{\alpha(i)}$ (resp.~$j\in J_{\beta(j)}$).
    A direct computation shows that $(A\otimes B\otimes C')S=0$.

    Applying \Cref{thm:freelunch_speedup} yields a direct summand
    $T'=(A'\otimes B'\otimes C')S$ on the left-hand side.

    Let us make the resulting subspaces $A'\subseteq U^\vee$ and $B'\subseteq V^\vee$ explicit.
    By definition,
    \[
        A' = \{u\in U^\vee : (u\otimes B\otimes C')S=0\},
        \qquad
        B' = \{v\in V^\vee : (A\otimes v\otimes C')S=0\}.
    \]
    Unwinding the maps $A,B$ and the definition of $C'$, one checks that $A'$ is the annihilator
    of the subspace spanned by the vectors
    \[
        u'_\alpha - \sum_{i\in I_\alpha} u_{ij}
        \quad (\text{for all } \alpha\in[p],\ j\in[m]),
    \]
    and similarly $B'$ is the annihilator of the subspace spanned by
    \[
        v'_\beta - \sum_{j\in J_\beta} v_{ij}
        \quad (\text{for all } \beta\in[q],\ i\in[n]).
    \]
    In particular, in $T'$ the relations
    \[
        u'_\alpha = \sum_{i\in I_\alpha} u_{ij}\quad (\text{for all } \alpha\in[p],\ j\in[m])
        \qquad\text{and}\qquad
        v'_\beta = \sum_{j\in J_\beta} v_{ij}\quad (\text{for all } \beta\in[q],\ i\in[n])
    \]
    hold simultaneously.
    For each $\alpha$ choose a distinguished index $i_\alpha^*\in I_\alpha$ and for each $\beta$
    choose a distinguished index $j_\beta^*\in J_\beta$.
    These relations allow us to eliminate the variables $u_{i_\alpha^* j}$ and $v_{i j_\beta^*}$ via
    \[
        u_{i_\alpha^* j}
        = u'_\alpha - \sum_{i\in I_\alpha\setminus\{i_\alpha^*\}} u_{ij},
        \qquad
        v_{i j_\beta^*}
        = v'_\beta - \sum_{j\in J_\beta\setminus\{j_\beta^*\}} v_{ij}.
    \]

    We now impose a further restriction by applying suitable projections on the first two modes.
    Define linear maps $\pi_U$ and $\pi_V$ by
    \[
        \pi_U(u'_\alpha)=0,\qquad \pi_U(u_{i\,j_\beta^*})=0\ (\forall i,\beta),\qquad \pi_U(u_{ij})=u_{ij}\ (j\neq j_{\beta(j)}^*),
    \]
    and
    \[
        \pi_V(v'_\beta)=0,\qquad \pi_V(v_{i_\alpha^*\,j})=0\ (\forall \alpha,j),\qquad \pi_V(v_{ij})=v_{ij}\ (i\neq i_{\alpha(i)}^*).
    \]
    Applying $\pi_U$ and $\pi_V$ to the first two modes of $T'$ yields a further restriction
    \[ T^* \coloneq (\pi_U\otimes \pi_V\otimes \id)T' \le T'. \]

    By construction, $\pi_U$ kills all $u'_\alpha$ and all variables in the distinguished columns
    $\{u_{i\,j_\beta^*}\}$, and $\pi_V$ kills all $v'_\beta$ and all variables in the distinguished rows
    $\{v_{i_\alpha^*\,j}\}$.
    After eliminating $u_{i_\alpha^* j}$ and $v_{i j_\beta^*}$ using the identities above, every term
    that contains an eliminated variable involves either $u'_\alpha$ (or a distinguished column) or
    $v'_\beta$ (or a distinguished row), and hence is annihilated by $\pi_U\otimes\pi_V$.
    Therefore, the only surviving terms are those with
    $i\in I_\alpha\setminus\{i_\alpha^*\}$ and $j\in J_\beta\setminus\{j_\beta^*\}$.

    The resulting tensor is
    \[
        T^*
        = \sum_{\alpha=1}^p \sum_{\beta=1}^q
        \left(
            \sum_{i\in I_\alpha\setminus\{i_\alpha^*\}}
            \sum_{j\in J_\beta\setminus\{j_\beta^*\}}
            u_{ij} v_{ij}
        \right) w_{\alpha\beta}
        \ \cong\
        \bigoplus_{\alpha,\beta} \ang{1,(n_\alpha-1)(m_\beta-1),1}.
    \]
    Composing the restrictions and degenerations yields the claimed identity.
\end{proof}

\bibliography{main}

\appendix

\section{Degenerations between direct sums of matrix multiplication} \label{sec:omega}

The goal of this appendix is to clarify how the direct-sum degeneration used by
Sch\"onhage \cite[\S 8]{Sch81Tau},
\begin{equation}
    \ang{1,5,20} \oplus \ang{10,2,5} \oplus \ang{10,10,1}
    \degen \ang{132} \oplus 5 \odot \ang{1,1,2}, \label{eqn:schonhage_degen}
\end{equation}
can be analyzed (via Strassen calculus) to obtain the inequality
\begin{equation}
    3 \cdot 100^{\omega/3}
    \leq 132 + 5 \cdot 2^{\omega/3}. \label{eqn:schonhage_eqn}
\end{equation}
Moreover, this analysis is \emph{optimal} under a suitable formalization.

More generally, given a single degeneration between direct sums of matrix multiplication tensors,
one may ask for the best upper bound on $\omega$ that can be extracted from it:
\begin{equation}
    \bigoplus_{i=1}^t \ang{n_i^{(1)}, n_i^{(2)}, n_i^{(3)}} \degen \bigoplus_{j=1}^s \ang{m_j^{(1)}, m_j^{(2)}, m_j^{(3)}}.
    \label{eqn:disjoint_mm_degen}
\end{equation}

We begin with a Strassen-calculus proof of the classical asymptotic sum inequality, which yields an upper
bound in the special case where the right-hand side is diagonal, i.e., $m_j^{(1)}=m_j^{(2)}=m_j^{(3)}=1$.
This proof is due to Strassen \cite[\S 4]{Str88AsymSpec}.
\begin{theorem}[Asymptotic sum inequality {\cite[Theorem~7.1]{Sch81Tau}}]
    If there exists a degeneration,
    \[ \bigoplus_{i=1}^t \ang{n_i^{(1)}, n_i^{(2)}, n_i^{(3)}} \degen \ang{s}, \]
    let $\tau$ be the (unique) solution to the equation
    \[ \sum_{i=1}^t (n_i^{(1)} n_i^{(2)} n_i^{(3)})^\tau = s, \]
    then $\omega \leq 3\tau$.
\end{theorem}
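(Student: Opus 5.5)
We follow the Strassen calculus template of \Cref{prop:toy_bound}. Fix an arbitrary spectral point $\phi\in\calX$ with parameters $(\theta_1,\theta_2,\theta_3)\in\specMM$ from \Cref{prop:param-specmm}, and write $\varpi\coloneq\theta_1+\theta_2+\theta_3$. Applying $\phi$ to the degeneration, and using monotonicity, additivity and normalization, gives
\[ \sum_{i=1}^t (n_i^{(1)})^{\theta_1}(n_i^{(2)})^{\theta_2}(n_i^{(3)})^{\theta_3} \le s. \]
The difficulty is that the left-hand side depends on the individual $\theta_j$ rather than only on $\varpi$.

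To remove this dependence, we symmetrize over cyclic permutations of the modes. Let $\sigma$ denote the cyclic mode shift. Applying it to the degeneration yields $\bigoplus_i \ang{n_i^{(2)},n_i^{(3)},n_i^{(1)}}\degen\ang{s}$, since $\ang{s}$ is invariant under permuting modes, and similarly for the third rotation. Tensoring the three rotated degenerations gives
\[ \bigoplus_{i,j,k} \ang{n_i^{(1)}n_j^{(2)}n_k^{(3)},\; n_i^{(2)}n_j^{(3)}n_k^{(1)},\; n_i^{(3)}n_j^{(1)}n_k^{(2)}} \degen \ang{s^3}. \]
A cleaner route is to apply $\phi$ to each of the three rotated degenerations separately and then multiply. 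Writing $N_i\coloneq n_i^{(1)}n_i^{(2)}n_i^{(3)}$, we get
\[ s^3 \ge \prod_{c=0}^{2}\Bigl(\sum_i \prod_{j} (n_i^{(j)})^{\theta_{j+c}}\Bigr) \ge \Bigl(\sum_i N_i^{\varpi/3}\Bigr)^3. \]
The second inequality is Hölder's inequality with three exponents equal to $3$: the geometric mean of the three rotated monomials for index $i$ is exactly $N_i^{\varpi/3}$. Hence $\sum_i N_i^{\varpi/3}\le s$ for every spectral point.

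To conclude, choose $\phi$ with $\varpi=\omega$. Such a point exists by \Cref{prop:specmm-sum-theta}, which states that the maximum of $\varpi$ over $\specMM$ is $\omega$ and is attained, since it is a maximum via \Cref{thm:duality}. This gives $\sum_i N_i^{\omega/3}\le s$. The function $x\mapsto\sum_i N_i^{x}$ is strictly increasing as long as some $N_i>1$; this nontriviality assumption is implicit in the definition of $\tau$ as the unique solution. Since $\sum_i N_i^{\tau}=s$, monotonicity gives $\omega/3\le\tau$, that is, $\omega\le 3\tau$.

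The main obstacle is justifying the Hölder step cleanly. An alternative that avoids Hölder is to use the tensored symmetric degeneration directly: its summands are matrix multiplications whose three dimensions have product $N_iN_jN_k$. For this to help, one needs $\phi$ to be evaluable in a cyclically symmetric way, which requires replacing $\phi$ by its cyclic rotations (these are again spectral points) and using that $\phi\circ\sigma$ has rotated parameters. The Hölder argument does exactly this, so we keep it.
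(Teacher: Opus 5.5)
Your proposal is correct and follows essentially the paper's route: you pass to the spectrum, symmetrize over mode permutations, and use (log-)convexity of the exponential sum. Your H\"older step is the geometric-mean counterpart of the paper's Jensen step, and the cyclic subgroup of $\Sym_3$ already suffices to average $(\theta_1,\theta_2,\theta_3)$ to $(\varpi/3,\varpi/3,\varpi/3)$. Your explicit choice of a spectral point with $\varpi=\omega$, whose existence holds because the maximum in \Cref{thm:duality} is attained, is the right way to finish. It is also what the paper's closing phrase ``$\varpi\le\omega$ and $F$ is coordinatewise nondecreasing'' implicitly relies on.
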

\begin{proof}
    \emph{Step 1: Pass to the asymptotic spectrum.}
    Fix $\phi\in \calX$ with parameters $(\theta_1,\theta_2,\theta_3)\in\specMM$. Applying $\phi$ to the degeneration yields
    \[
        F(\theta_1, \theta_2, \theta_3)
        \coloneq
        \sum_{i=1}^t (n_i^{(1)})^{\theta_1}(n_i^{(2)})^{\theta_2}(n_i^{(3)})^{\theta_3}
        \le s.
    \]

    \emph{Step 2: Symmetry of tensors.} We can permute the modes
    and obtain degenerations that permute the parameters of matrix multiplication. For any $\sigma\in \Sym_3$, we also have a degeneration
    \[ \bigoplus_{i=1}^t \ang{n_i^{(\sigma^{-1}(1))}, n_i^{(\sigma^{-1}(2))}, n_i^{(\sigma^{-1}(3))}} \degen \ang{s}, \]
    hence $F(\theta_{\sigma(1)},\theta_{\sigma(2)},\theta_{\sigma(3)}) \leq s$.

    \emph{Step 3: Average by convexity.} The function $F$ is convex (it is a sum of exponentials), so by Jensen's inequality,
    \[
        F\!\Bigl(\bbE_\sigma[(\theta_{\sigma(1)},\theta_{\sigma(2)},\theta_{\sigma(3)})]\Bigr)
        \le \bbE_\sigma[F(\theta_{\sigma(1)},\theta_{\sigma(2)},\theta_{\sigma(3)})]
        \le s.
    \]
    Let $\varpi \coloneq \theta_1 + \theta_2 + \theta_3$. Averaging over $\sigma\in\Sym_3$ gives
    \[
        \bbE_\sigma[(\theta_{\sigma(1)},\theta_{\sigma(2)},\theta_{\sigma(3)})]
        = (\varpi/3,\varpi/3,\varpi/3).
    \]
    Since $\varpi\le \omega$ for all $(\theta_1,\theta_2,\theta_3)\in\specMM$ and $F$ is coordinatewise nondecreasing, we obtain
    $F(\omega/3, \omega/3, \omega/3) \leq s$, which implies the claim.
\end{proof}

We emphasize that the first two steps in the proof above amount to a simple geometric relaxation, which
extends verbatim to the case where the right-hand side is also a direct sum of matrix multiplication tensors:

\emph{Step 1: Pass to the asymptotic spectrum.} Every point $(\theta_1,\theta_2,\theta_3)\in \specMM$ must lie in the feasible region
\[ \specMM \subseteq \Delta_{\eqref{eqn:disjoint_mm_degen}} \coloneq \left\{ (\theta_1, \theta_2, \theta_3) : \sum_{i=1}^t (n_i^{(1)})^{\theta_1}
(n_i^{(2)})^{\theta_2}(n_i^{(3)})^{\theta_3}
\leq \sum_{j=1}^s
(m_j^{(1)})^{\theta_1}
(m_j^{(2)})^{\theta_2}
(m_j^{(3)})^{\theta_3} \right\}. \]

\emph{Step 2: Symmetry.} Permuting tensor modes yields the containment
\[ \specMM \subseteq \Delta_{\eqref{eqn:disjoint_mm_degen}}^\sigma
= \{ (\theta_1, \theta_2, \theta_3) : (\theta_{\sigma(1)}, \theta_{\sigma(2)}, \theta_{\sigma(3)}) \in \Delta_{\eqref{eqn:disjoint_mm_degen}} \} \]
for all $\sigma\in \Sym_3$.

\emph{Step 3: Optimize over the relaxation.} Intersecting all six permuted regions gives the strongest symmetric relaxation:
$\specMM \subseteq \Delta_{\eqref{eqn:disjoint_mm_degen}}^\Sym \coloneq \bigcap_{\sigma\in\Sym_3} \Delta_{\eqref{eqn:disjoint_mm_degen}}^\sigma$.
By Strassen duality, we then obtain an upper bound on~$\omega$:
\[ \omega = \max_{\specMM} (\theta_1+\theta_2+\theta_3)
\leq \max_{\Delta_{\eqref{eqn:disjoint_mm_degen}}^\Sym} (\theta_1+\theta_2+\theta_3). \]

From this viewpoint, the classical asymptotic sum inequality says that, in the diagonal case,
the region $\Delta_{\eqref{eqn:disjoint_mm_degen}}$ is convex, so the maximum of $\theta_1+\theta_2+\theta_3$ over
$\Delta_{\eqref{eqn:disjoint_mm_degen}}^\Sym$ is attained on the principal diagonal $\theta_1=\theta_2=\theta_3$, which simplifies the analysis.

Therefore, the analysis of the direct-sum identity \eqref{eqn:schonhage_degen} can be carried out via
computer-aided optimization. The inequality induced by the degeneration is
\[ \Delta_{\eqref{eqn:schonhage_degen}} \coloneq \left\{ (\theta_1,\theta_2,\theta_3) :
5^{\theta_2} \cdot 20^{\theta_3} + 10^{\theta_1} \cdot 2^{\theta_2} \cdot 5^{\theta_3}
+ 10^{\theta_1} \cdot 10^{\theta_2} \leq 132 + 5 \cdot 2^{\theta_3}\right\}. \]

It turns out (and can be verified in standard mathematical software packages) that the maximum of
$\theta_1+\theta_2+\theta_3$ over $\Delta_{\eqref{eqn:schonhage_degen}}^\Sym$ is attained at
$\theta_1=\theta_2=\theta_3$. This justifies \eqref{eqn:schonhage_eqn} as a valid upper bound on~$\omega$.

One may then ask whether an asymptotic-sum-inequality statement holds in \emph{full generality}. Namely, can one conclude
\begin{equation}
    \eqref{eqn:disjoint_mm_degen}\implies \sum_{i=1}^t (n_i^{(1)} n_i^{(2)} n_i^{(3)})^{\omega / 3} \leq \sum_{j=1}^s (m_j^{(1)} m_j^{(2)} m_j^{(3)})^{\omega / 3}.
    \label{eqn:general_asym_sum_ineq}
\end{equation}

The answer is nuanced: in a certain relativized sense the answer is \emph{no}, but it is also plausible that the answer is \emph{yes} for tensors.

We first explain the \emph{no} side. A natural first attempt is to let $\tau$ be the largest real satisfying
\[ \sum_{i=1}^t (n_i^{(1)} n_i^{(2)} n_i^{(3)})^{\tau} \leq \sum_{j=1}^s (m_j^{(1)} m_j^{(2)} m_j^{(3)})^{\tau}, \]
Then $(\tau,\tau,\tau)\in \Delta_{\eqref{eqn:disjoint_mm_degen}}^\Sym$, so $3\tau$ provides a lower bound on
the best upper bound on $\omega$ obtainable from the identity. However, the converse need not hold:
the maximum of $\theta_1+\theta_2+\theta_3$ over $\Delta_{\eqref{eqn:disjoint_mm_degen}}^\Sym$ may be attained away from
the principal diagonal $\theta_1=\theta_2=\theta_3$. In such cases, the optimal bound on $\omega$ extracted from the degeneration can be strictly larger than $3\tau$.

A cautious reader may notice that we have been deliberately vague about what it means to obtain ``the best bound on $\omega$ from a single identity''.
In particular, one should clarify what is meant by ``using only the identity'' as opposed to using additional tensor-specific facts.

The underlying intuition is that the classical asymptotic sum inequality---and the bound \eqref{eqn:schonhage_eqn} derived from \eqref{eqn:schonhage_degen}---is, in a sense, \emph{blind} to the arithmetic structure of tensors.
Once one abstracts the basic axioms governing rank and degeneration, the same argument goes through without using any additional tensor-specific structure.
From the perspective of complexity theory, this can be phrased as a form of \emph{relativization} \cite{BGS75Relativization}:
in any mathematical structure that satisfies the same axioms and ``knows'' only the single identity \eqref{eqn:disjoint_mm_degen}, an asymptotic-sum-inequality type bound still holds.
In particular, one can show that there exists a relativized world in which $\omega$ cannot be smaller than
$\max_{\Delta_{\eqref{eqn:disjoint_mm_degen}}^{\Sym}}(\theta_1+\theta_2+\theta_3)$.
One way to formalize this is to work in a \emph{universal commutative semiring} that encodes the axioms together with the given identity.
We give a rigorous treatment in the subsequent subsection.

We briefly indicate why the answer \emph{might} nevertheless be yes. The discussion above only shows that
\eqref{eqn:general_asym_sum_ineq} is too strong to hold in all relativized settings; additional structure specific to tensors might still imply it.
A natural candidate is \emph{Strassen's convexity conjecture}, which asserts that $\specMM$ is a convex subset of~$\bbR^3$.
If $\specMM$ were convex, then \eqref{eqn:general_asym_sum_ineq} would follow almost immediately. Moreover, the conjecture is not entirely speculative:
Strassen proved partial results showing that $\specMM$ is star-convex with respect to the points $(0,1,1)$, $(1,0,1)$, and $(1,1,0)$; see~\cite[\S 5]{Str88AsymSpec}.
For further discussion, see \cite[\S 4, 5]{Str88AsymSpec} or the more detailed exposition in \cite[Part II, III]{WZ25Spectra}.

\subsection{Strassen duality and asymptotic sum inequality in general rings}

To make the preceding discussion precise, we briefly recall the general notion of rank and Strassen duality in the abstract setting of commutative semirings.
This framework is explained in detail by Wigderson and Zuiddam \cite[Part I]{WZ25Spectra}.

\myparagraph{General framework of Strassen duality.}

Let $\calR$ be a commutative semiring with additive identity $0$ and multiplicative identity $1$.
We think of elements of $\calR$ as the ``objects'' of interest. For example, in the $\bbF$-tensor world,
$\calR = \calR_\bbF$ can be taken to be the set of all $\bbF$-tensors (up to isomorphism), with addition given by direct sum $\oplus$
and multiplication given by tensor product $\otimes$; the zero tensor plays the role of $0$, and the unit diagonal tensor $\ang{1}$ plays the role of $1$.

We also equip $\calR$ with a preorder $P$ (written $\le_P$), which encodes the basic comparison relation between objects (e.g., restriction or degeneration).
In this setting, the rank of an element $a\in\calR$ is defined as
\[
    \Rk_P(a)\coloneq \min\{n\in\bbN : a\le_P n\},
\]
where $n\in\calR$ denotes the sum of $n$ copies of the unit~$1$.
To ensure that this notion behaves as expected, one assumes the following axioms; a preorder satisfying them will be called a \emph{Strassen preorder}:
\begin{description}
    \axitem{sp-preorder}{(Preorder)} Reflexivity ($a\le_P a$) and transitivity ($a\le_P b, b\le_P c \implies a\le_P c$).
    \axitem{sp-compatibility}{(Compatibility)} $0\le_P a$, and the preorder respects addition and multiplication:
    $a\le_P b \implies a+c\le_P b+c$ and $ac\le_P bc$.
    \axitem{sp-faithfulness}{(Faithfulness on natural numbers)} For $n,m\in\bbN$, one has $n\le_P m$ if and only if $n\le m$ as natural numbers.
    \axitem{sp-archimedean}{(Strong Archimedean property)} For any $a\neq 0$, there exists $n\in\bbN$ such that $1\le_P a\le_P n$.
\end{description}
From now on, we omit $P$ when unambiguous.
Under these axioms, the rank function is subadditive and submultiplicative, and one can define the asymptotic rank $\AR(a)$ in the usual way.
For example, in the tensor setting, both restriction $\le$ and degeneration $\degen$ satisfy these axioms.

One can then define the asymptotic spectrum $\calX$ of $(\calR,\le)$: it consists of all functions $\phi\colon \calR\to \bbR_{\geq 0}$ that
\begin{description}
    \axitem{spectrum-hom}{(Semiring homomorphism)} $\phi(a+b) = \phi(a) + \phi(b)$ and $\phi(ab) = \phi(a)\phi(b)$.
    \axitem{spectrum-normalization}{(Normalization)} $\phi(n) = n$ for all $n\in\bbZ_{\ge 0}$.
    \axitem{spectrum-monotonicity}{(Monotonicity)} $a\le b \implies \phi(a)\le \phi(b)$.
\end{description}

Strassen duality (as discussed in \Cref{sec:str_calc}) remains valid in this abstract setting. For our purposes, we mainly use the identity
\[ \AR(a) = \max_{\phi\in\calX} \phi(a). \]

\myparagraph{Relativization of matrix multiplication.}
We now formulate an abstract semiring \emph{representing} matrix multiplication, in order to study the asymptotic sum inequality.
We define $\RMM$ to be the commutative semiring of finite formal $\bbN$-linear combinations of symbols $\ang{n,m,p}$:
\[ \RMM = \left\{ \sum_{n,m,p} a_{nmp} \ang{n,m,p} : a_{nmp} \in \bbN, \sum_{n,m,p} a_{nmp} < \infty \right\}. \]

Addition in $\RMM$ is defined by adding coefficients coordinatewise, and multiplication is defined on generators by
$\ang{n,m,p} \cdot \ang{n',m',p'} = \ang{nn', mm', pp'}$ and extended bilinearly:
\[ \left(\sum_{n,m,p} a_{nmp} \ang{n,m,p}\right)
   \left(\sum_{n',m',p'} b_{n'm'p'} \ang{n',m',p'} \right)
   = \sum_{\substack{n,m,p\\ n',m',p'}} a_{nmp}b_{n'm'p'} \ang{nn', mm', pp'}. \]
Note that $\ang{1,1,1}$ is the multiplicative identity, so we will identify $\ang{1,1,1}$ with~$1$.

One may view elements of $\RMM$ as formal analogues of direct sums of matrix multiplication tensors,
e.g., $\sum_{n,m,p} a_{nmp} \ang{n,m,p}$ corresponds heuristically to $\bigoplus_{n,m,p} a_{nmp} \odot \ang{n,m,p}$.
However, elements of $\RMM$ are purely formal: they do not ``know'' anything about tensors beyond the axioms we impose below.

We next define a preorder by specifying a set of axioms. Let $\Sigma$ be a collection of inequalities $a\le b$ with $a,b\in\RMM$,
encoding a minimal set of properties of matrix multiplication. Concretely, $\Sigma$ contains the following families of inequalities (for all $n\in\bbZ_+$):
\begin{description}
    \axitem{sigma-monotonicity}{(Monotonicity)} For any $n\in \bbZ_+$, we have $\ang{n,1,1} \leq \ang{n+1,1,1}$, $\ang{1,n,1} \leq \ang{1,n+1,1}$
    and $\ang{1,1,n} \leq \ang{1,1,n+1}$.
    \axitem{sigma-trivial}{(Trivial upper bound)} For any $n\in \bbZ_+$, we have $\ang{n,1,1}\leq n$, $\ang{1,n,1}\leq n$
    and $\ang{1,1,n}\leq n$.
\end{description}
Given a set of axioms $\Sigma$, we write $\Sigma \vdash a\le b$ if the inequality $a\le b$ can be \emph{derived} from $\Sigma$ in finitely many steps
using only \axref{sp-preorder}{(Preorder)} and \axref{sp-compatibility}{(Compatibility)}.

We define a preorder $P(\Sigma)$ on $\RMM$ by declaring $a\le_{P(\Sigma)} b$ if and only if $\Sigma \vdash a\le b$.

\begin{proposition}
    The preorder $P(\Sigma)$ is a Strassen preorder.
\end{proposition}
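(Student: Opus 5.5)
We check the four axioms of a Strassen preorder for $P(\Sigma)$ one at a time. Reflexivity, transitivity and compatibility with addition and multiplication hold by construction: $\Sigma \vdash$ is closed under exactly these rules. For $0 \le a$, we either add $0\le a$ as an implicit derivation rule, or derive it from $0 \le 1$. The latter comes from the trivial axiom applied to $\ang{1,1,1} = 1$, which only gives $1\le 1$. So we instead note that compatibility includes $0\le_P a$ as part of the generating rules used in derivations, and hence it holds in $P(\Sigma)$.

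\textbf{Strong Archimedean property.} Take a nonzero $a = \sum a_{nmp}\ang{n,m,p}$.

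For the upper bound, the trivial axioms give $\ang{n,1,1}\le n$, $\ang{1,m,1}\le m$ and $\ang{1,1,p}\le p$. Since $\ang{n,m,p}=\ang{n,1,1}\ang{1,m,1}\ang{1,1,p}$, multiplicative compatibility and transitivity yield $\ang{n,m,p}\le nmp$. Additive compatibility then gives $a \le \sum a_{nmp} nmp \in\bbN$.

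For the lower bound, the monotonicity axioms chain to $1=\ang{1,1,1}\le \ang{n,1,1}$, and similarly in the other two coordinates. Multiplying these gives $1\le \ang{n,m,p}$. Since $a\neq 0$, some coefficient is positive, so $a = \ang{n,m,p} + a'$ with $0\le a'$. Compatibility then gives $1 = 1 + 0 \le \ang{n,m,p}+a' = a$.

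\textbf{Faithfulness.} This is the main step. The direction $n\le m \Rightarrow n\le_P m$ follows from $0\le 1$ and additivity. For the converse we need a monotone semiring homomorphism $\psi\colon\RMM\to\bbR_{\ge0}$ that is the identity on $\bbN$ and respects every axiom in $\Sigma$. Every derived inequality is then respected, because the preorder and compatibility rules preserve inequalities between real values. Consequently $n\le_P m$ implies $n=\psi(n)\le\psi(m)=m$.

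We take $\psi(\ang{n,m,p}) = 1$, i.e. the point $\theta=(0,0,0)$, extended additively; it is multiplicative on generators. The monotonicity axioms become $1\le 1$, and the trivial axioms become $1\le n$ for $n\ge1$. Both hold, as does $0\le\psi(a)$. Alternatively $\theta=(1,1,1)$, i.e. $\psi(\ang{n,m,p})=nmp$, also works. The only subtlety is checking that the derivation system is exactly the set of rules that such homomorphisms respect, which is immediate by induction on the derivation length.
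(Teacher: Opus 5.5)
Your proof is correct. The preorder/compatibility and strong Archimedean parts match the paper's; you spell out the derivation of $1\le a\le \sum a_{nmp}\,nmp$ that the paper states in one line.

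For faithfulness you take a different route. The paper interprets every axiom of $\Sigma$ as a genuine restriction in the tensor semiring $\calR_\bbF$. Then $\Sigma\vdash n\le m$ yields $\ang{n}\le\ang{m}$ for actual tensors, and the standard flattening-rank fact forces $n\le m$. You instead exhibit an explicit monotone semiring homomorphism $\RMM\to\bbR_{\ge 0}$, namely evaluation at $\theta=(0,0,0)$ or $(1,1,1)$, and prove soundness of the derivation rules by induction.

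\textbf{What your route buys.} It is purely arithmetic and self-contained, and it uses the same mechanism as the paper's next proposition identifying $\specMM(\Gamma)$ with $[0,1]^3\cap\Delta_\Gamma$.

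\textbf{What the paper's route buys.} It carries over verbatim to $P(\Sigma\cup\Gamma)$ whenever $\Gamma$ consists of genuine tensor identities, which is exactly the case the paper invokes right afterwards. Your specific points do not extend this way: the genuine identities $2\le\ang{2,2,2}$ and $\ang{2,2,2}\le 7$ fail at $(0,0,0)$ and at $(1,1,1)$, respectively. To extend your argument you would have to evaluate at a point of the actual tensor spectrum $\specMM$, whose existence is a nontrivial input. That amounts to composing the paper's interpretation into $\calR_\bbF$ with a spectral point.

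\textbf{Minor cleanup.} Your opening detour on $0\le a$ can be cut to one observation: $0\le_P a$ is itself one of the compatibility rules allowed in derivations. In your last sentence you only need that every rule is \emph{respected} by $\psi$ (soundness), not that the rules are ``exactly'' those respected by such homomorphisms.
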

\begin{proof}
    By construction, $P(\Sigma)$ is a preorder and is compatible with addition and multiplication.
    We verify the remaining axioms.

    For \axref{sp-archimedean}{(Strong Archimedean property)}, let $a=\sum_{n,m,p} a_{nmp}\ang{n,m,p}\in\RMM$. From $\Sigma$ one can derive
    \[ \sum_{n,m,p} a_{nmp} \leq \sum_{n,m,p} a_{nmp} \ang{n,m,p} \leq \sum_{n,m,p} a_{nmp} \cdot nmp. \]
    If $a\neq 0$, then $\sum_{n,m,p} a_{nmp} \ge 1$, so in particular $1\le a \le N$ for some $N\in\bbN$.

    Finally, for \axref{sp-faithfulness}{(Faithfulness on natural numbers)}, observe that each inequality in $\Sigma$ holds when interpreting elements of $\RMM$ in the tensor semiring $\calR_\bbF$ (with restriction preorder).
    Hence, if $\Sigma \vdash n\le m$, then $\ang{n}\le \ang{m}$ in the tensor world, which forces $n\le m$ as natural numbers.
\end{proof}

Since $P(\Sigma)$ is a Strassen preorder, Strassen duality applies, and it becomes meaningful to describe its asymptotic spectrum.
In particular, the parametrization in \Cref{prop:param-specmm} still holds: re-running the proof shows that it relies only on \axref{sigma-monotonicity}{(Monotonicity)}, which is already contained in $\Sigma$.
Moreover, in this setting the asymptotic spectrum is naturally parametrized by triples $(\theta_1,\theta_2,\theta_3)$, since every element of $\RMM$ is built from generators $\ang{n,m,p}$.

Now fix a collection $\Gamma$ of additional identities of interest (i.e., inequalities $a\le b$ in $\RMM$). In principle, $\Gamma$ may be infinite---for instance, it could encode an abstracted family of degenerations arising from the laser method---but the reader may keep in mind the case where $\Gamma$ consists of a single identity, such as Sch\"onhage's.

We analogously define the preorder $P(\Sigma \cup \Gamma)$ by closing $\Sigma\cup\Gamma$ under derivations.
It is immediate that $P(\Sigma\cup \Gamma)$ satisfies the axioms of a Strassen preorder except, potentially, \axref{sp-faithfulness}{(Faithfulness on natural numbers)}:
for an arbitrary choice of $\Gamma$, one could force a contradiction (e.g., by including $1\le 0$).
In the applications we have in mind, $\Gamma$ comes from genuine tensor identities, and faithfulness holds.

From now on, we assume $P(\Sigma \cup \Gamma)$ is indeed a Strassen preorder, and now try to describe the structure of the corresponding $\specMM$
in this case, denoted as $\specMM(\Gamma)$.

For an element in $\RMM$:
\[ a = \sum_{n,m,p} a_{nmp} \ang{n,m,p}, \]
and for $(\theta_1,\theta_2,\theta_3)$ we write
\[ a(\theta_1, \theta_2, \theta_3) \coloneq \sum_{n,m,p} a_{nmp} n^{\theta_1} m^{\theta_2} p^{\theta_3}. \]

\begin{proposition}
    Let $\Delta_\Gamma$ be the feasible region cut out by the inequalities in $\Gamma$:
    \[
        \Delta_\Gamma \coloneq \Bigl\{(\theta_1, \theta_2, \theta_3) :
        a(\theta_1, \theta_2, \theta_3) \le b(\theta_1,\theta_2,\theta_3) \text{ for all } (a\le b)\in \Gamma \Bigr\}.
    \]
    If $P(\Sigma \cup \Gamma)$ is a Strassen preorder, then $\specMM(\Gamma) = [0,1]^3 \cap \Delta_\Gamma$.
\end{proposition}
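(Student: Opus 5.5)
We prove the two inclusions separately; here $\specMM(\Gamma)$ is the set of triples arising from spectral points of $(\RMM, P(\Sigma\cup\Gamma))$.

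\emph{Inclusion $\subseteq$.} Let $\phi$ be a spectral point and $(\theta_1,\theta_2,\theta_3)$ its parameters. The parametrization argument of \Cref{prop:param-specmm}, which only uses \axref{sigma-monotonicity}{(Monotonicity)}, gives $\phi(\ang{n,m,p}) = n^{\theta_1}m^{\theta_2}p^{\theta_3}$. Since $\phi$ is a semiring homomorphism, $\phi(a) = a(\theta_1,\theta_2,\theta_3)$ for every $a\in\RMM$.

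Membership in $\Delta_\Gamma$ follows from monotonicity of $\phi$ applied to each $(a\le b)\in\Gamma$. For $[0,1]^3$, use $\ang{1,1,1}\le\ang{2,1,1}$ (monotonicity in $\Sigma$) and $\ang{2,1,1}\le 2$ (trivial bound in $\Sigma$). These give $1\le 2^{\theta_1}\le 2$, so $\theta_1\in[0,1]$, and the other two coordinates are handled the same way.

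\emph{Inclusion $\supseteq$.} Fix $(\theta_1,\theta_2,\theta_3)\in[0,1]^3\cap\Delta_\Gamma$ and define $\phi(a) \coloneq a(\theta_1,\theta_2,\theta_3)$. We must show $\phi$ is a spectral point.

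\begin{itemize}
\item \emph{Homomorphism.} $\phi$ is additive by linearity. It is multiplicative because $(nn')^{\theta_1}(mm')^{\theta_2}(pp')^{\theta_3}$ factors as the product of the two monomials.
\item \emph{Normalization.} $\phi(n) = n$, since $n$ is $n$ copies of $\ang{1,1,1}$, each evaluating to $1$.
\item \emph{Monotonicity on generators.} It suffices to check that the generating inequalities hold, and then show that validity is preserved under derivations. The generators of $\Sigma$ hold because $\theta_i\ge 0$ gives $n^{\theta_i}\le (n+1)^{\theta_i}$, and $\theta_i\le 1$ gives $n^{\theta_i}\le n$. The generators in $\Gamma$ hold precisely because the point lies in $\Delta_\Gamma$.
\item \emph{Closure under derivations.} Consider the set of pairs $(a,b)$ with $\phi(a)\le\phi(b)$. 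It is closed under \axref{sp-preorder}{(Preorder)}. It is also closed under \axref{sp-compatibility}{(Compatibility)}: $0\le\phi(a)$ since all values are nonnegative, adding $\phi(c)$ to both sides preserves the inequality, and multiplying by $\phi(c)\ge 0$ preserves it as well.
\end{itemize}

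By induction on the length of a derivation, $\Sigma\cup\Gamma\vdash a\le b$ implies $\phi(a)\le\phi(b)$. Hence $\phi$ is a spectral point, and its parameters are exactly $(\theta_1,\theta_2,\theta_3)$.

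\emph{Main obstacle.} The only step needing care is this induction over derivations. We have to make sure the notion of derivation consists exactly of the closure rules listed, which matches the definition of $\vdash$ given above. The hypothesis that $P(\Sigma\cup\Gamma)$ is a Strassen preorder is used only so that the spectrum is defined in the usual framework. In particular, normalization is consistent with the preorder, as guaranteed by faithfulness.
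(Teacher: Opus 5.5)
Your proof is correct and follows the same route as the paper. Both arguments use the parametrization to show that every spectral point has the form $a\mapsto a(\theta_1,\theta_2,\theta_3)$; such a function is then monotone for $P(\Sigma\cup\Gamma)$ exactly when it respects the generating inequalities, with $\Sigma$ giving $[0,1]^3$ and $\Gamma$ giving $\Delta_\Gamma$. You simply spell out the two inclusions and the induction over derivations, which the paper compresses into the remark that the derivation rules are compatible with passing to nonnegative reals.
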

\begin{proof}
    The spectrum $\specMM(\Gamma)$, by definition, should be the set of $(\theta_1,\theta_2,\theta_3)$
    that the function defined by
    \[ \phi (a) = a(\theta_1, \theta_2, \theta_3) \]
    satisfies all the axioms of an asymptotic-spectrum point. It is straightforward to verify that $\phi$ is nonnegative and satisfies \axref{spectrum-hom}{(Semiring homomorphism)} and \axref{spectrum-normalization}{(Normalization)}.

    The only nontrivial condition is \axref{spectrum-monotonicity}{(Monotonicity)}. If $\Sigma \cup \Gamma \vdash a\le b$, then we must have
    $a(\theta_1, \theta_2, \theta_3) \le b(\theta_1, \theta_2, \theta_3)$. Since the derivation rules are compatible with passing to nonnegative real numbers,
    it suffices to enforce these inequalities for the generating set $\Sigma\cup\Gamma$.

    The inequalities in $\Gamma$ impose the region $\Delta_\Gamma$, while the axioms in $\Sigma$ impose:
    \begin{itemize}
        \item \axref{sigma-monotonicity}{(Monotonicity)} implies $\theta_i \geq 0$.
        \item \axref{sigma-trivial}{(Trivial upper bound)} implies $\theta_i \leq 1$.
    \end{itemize}
    Thus $\Sigma$ forces $(\theta_1,\theta_2,\theta_3)\in[0,1]^3$, and the claim follows.
\end{proof}

Define
\[
    \omega_\Gamma \coloneq \log_2 \bigl(\AR_{P(\Sigma\cup \Gamma)}(\ang{2,2,2})\bigr).
\]
Since \Cref{prop:specmm-sum-theta} can also be checked to relativize\footnote{Strictly speaking, something breaks down for asymptotic subrank $\AQ$, but we will not use it.}, we obtain the characterization
\[ \omega_\Gamma = \max_{\specMM(\Gamma)} (\theta_1+\theta_2+\theta_3). \]

On the one hand, $\omega_\Gamma$ is the optimum of $\theta_1+\theta_2+\theta_3$ over the feasible region $[0,1]^3 \cap \Delta_\Gamma$.
On the other hand, unwinding the definition yields the following ``derivation-only'' interpretation:
\begin{itemize}
    \item Start from the axioms $\Sigma$ together with the auxiliary identities $\Gamma$.
    \item Derive (in finitely many steps, using only \axref{sp-preorder}{(Preorder)} and \axref{sp-compatibility}{(Compatibility)}) an inequality of the form $\ang{n,n,n}\le r$.
    \item Conclude the bound $\omega \le \log_n r$.
    \item Take the infimum over all such bounds (using $\omega = \inf_{n\ge 1}\log_n \Rk(\ang{n,n,n})$ in this abstract setting).
\end{itemize}
This provides a precise meaning of ``using only the identities in $\Gamma$''.
Under this notion, Strassen calculus is \emph{complete}: the optimum over $[0, 1]^3\cap \Delta_\Gamma$ is not merely an upper bound on $\omega_\Gamma$; it equals~$\omega_\Gamma$.

\myparagraph{Asymptotic sum inequality.}

We are now ready to formulate a relativized version of the asymptotic sum inequality.
There is, however, one remaining ingredient that we have implicitly used but have not yet formalized in $\RMM$:
the symmetry under permuting tensor modes.

Let $a$ be an element
\[ a = \sum_{n^{(1)},n^{(2)},n^{(3)}} a_{n^{(1)}n^{(2)}n^{(3)}} \ang{n^{(1)},n^{(2)},n^{(3)}}, \]
and let $\sigma\in \Sym_3$ be a permutation. We write
\[ a^\sigma \coloneq \sum_{n^{(1)},n^{(2)},n^{(3)}}
a_{n^{(1)}n^{(2)}n^{(3)}} \ang{n^{(\sigma^{-1}(1))},n^{(\sigma^{-1}(2))},n^{(\sigma^{-1}(3))}}. \]

\begin{description}
    \axitem{rmm-sym}{(Symmetry)} For any inequality $a\le b$ and any $\sigma\in \Sym_3$, the permuted inequality $a^\sigma \le b^\sigma$ also holds.
\end{description}

At first glance, \axref{rmm-sym}{(Symmetry)} appears to be an additional derivation rule, which would force us to rebuild the preceding formalism.
This is unnecessary: we can instead close the axiom set under permutations once and for all.

\begin{lemma}
    Let $\Gamma^\sigma \coloneq \{ a^\sigma \leq b^\sigma : (a\leq b) \in \Gamma\}$,
    and let $\Gamma^{\Sym} \coloneq \bigcup_{\sigma\in \Sym_3} \Gamma^\sigma$.
    Then $P(\Sigma \cup \Gamma^{\Sym})$ coincides with the preorder obtained from $\Sigma\cup\Gamma$ by allowing derivations using
    \axref{rmm-sym}{(Symmetry)} in addition to \axref{sp-preorder}{(Preorder)} and \axref{sp-compatibility}{(Compatibility)}.
\end{lemma}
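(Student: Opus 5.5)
We prove the two preorders coincide by showing each is contained in the other. Write $\le_1$ for $P(\Sigma\cup\Gamma^{\Sym})$ and $\le_2$ for the preorder generated by $\Sigma\cup\Gamma$ under \axref{sp-preorder}{(Preorder)}, \axref{sp-compatibility}{(Compatibility)} and \axref{rmm-sym}{(Symmetry)}.

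\emph{The inclusion $\le_1\subseteq\le_2$.} Each axiom $a^\sigma\le b^\sigma$ in $\Gamma^\sigma$ is obtained in $\le_2$ from $(a\le b)\in\Gamma$ by one application of \axref{rmm-sym}{(Symmetry)}. The derivation rules of $\le_1$ are among those of $\le_2$, so every $\le_1$-derivation is also a $\le_2$-derivation.

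\emph{The inclusion $\le_2\subseteq\le_1$.} Here the key step is to show that $\le_1$ is already closed under symmetry: if $\Sigma\cup\Gamma^{\Sym}\vdash a\le b$, then $\Sigma\cup\Gamma^{\Sym}\vdash a^\sigma\le b^\sigma$ for every $\sigma\in\Sym_3$. Once this holds, an induction on $\le_2$-derivations shows that every symmetry step can be absorbed, so every $\le_2$-inequality lies in $\le_1$.

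We prove the closure claim by induction on the length of the derivation. First, observe that the map $a\mapsto a^\sigma$ is a semiring automorphism of $\RMM$: it is additive by definition, it sends $\ang{n,m,p}\cdot\ang{n',m',p'}$ to the product of the permuted generators, and it fixes $1=\ang{1,1,1}$, every $n\in\bbN$, and $0$.

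For the base cases, consider the axioms. If $(a\le b)\in\Gamma^\tau$, then $a^\sigma\le b^\sigma$ lies in $\Gamma^{\sigma\tau}\subseteq\Gamma^{\Sym}$, using $(a^\tau)^\sigma=a^{\sigma\tau}$ (to be checked with the paper's $\sigma^{-1}$ convention, but either composition order lands in $\Gamma^{\Sym}$). The axioms of $\Sigma$ are permuted among themselves: the three monotonicity families are permuted to one another, and likewise the three trivial-upper-bound families, since $n^\sigma=n$.

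For the inductive steps, consider each derivation rule. Reflexivity and transitivity are clearly preserved. The axiom $0\le a$ maps to $0\le a^\sigma$. If $a\le b$ yields $a+c\le b+c$ or $ac\le bc$, then applying the automorphism gives $a^\sigma+c^\sigma\le b^\sigma+c^\sigma$ and $a^\sigma c^\sigma\le b^\sigma c^\sigma$. These follow by the same rule from the inductive hypothesis $a^\sigma\le b^\sigma$.

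Finally, for $\le_2\subseteq\le_1$ we induct on $\le_2$-derivations. Axioms from $\Sigma\cup\Gamma$ are already in $\le_1$. The Preorder and Compatibility steps are inherited directly by $\le_1$. A Symmetry step is handled by the closure claim above. The only real obstacle is getting the composition convention for $a^\sigma$ right and confirming that permutation is a semiring automorphism; the rest is routine structural induction.
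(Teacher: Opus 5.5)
Your proof is correct and follows essentially the same route as the paper. Both arguments prove the two inclusions, with the key step being that $P(\Sigma\cup\Gamma^{\Sym})$ is already closed under (Symmetry), because permuting every line of a derivation yields a valid derivation once $\Sigma\cup\Gamma^{\Sym}$ is permutation-stable. Your explicit checks that $a\mapsto a^\sigma$ is a semiring automorphism, and that the axioms of $\Sigma$ are permuted among themselves, fill in details the paper leaves implicit.
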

\begin{proof}
    Let $P'$ denote the preorder obtained by closing $\Sigma\cup\Gamma$ under derivations using
    \axref{sp-preorder}{(Preorder)}, \axref{sp-compatibility}{(Compatibility)}, and \axref{rmm-sym}{(Symmetry)}.
    Since $P'$ is closed under \axref{rmm-sym}{(Symmetry)} by definition, it contains every permuted axiom in $\Gamma^{\Sym}$, and hence
    $\Sigma\cup\Gamma^{\Sym}\subseteq P'$. Therefore $P(\Sigma\cup\Gamma^{\Sym})\subseteq P'$.

    Conversely, we claim that $P(\Sigma\cup\Gamma^{\Sym})$ is closed under \axref{rmm-sym}{(Symmetry)}.
    Indeed, if $\Sigma\cup\Gamma^{\Sym}\vdash a\le b$, then there is a derivation of $a\le b$ from axioms in $\Sigma\cup\Gamma^{\Sym}$
    using only \axref{sp-preorder}{(Preorder)} and \axref{sp-compatibility}{(Compatibility)}. Applying a fixed permutation $\sigma\in\Sym_3$
    to each line of the derivation yields a valid derivation of $a^\sigma \le b^\sigma$, because $\Sigma\cup\Gamma^{\Sym}$ is stable under permutation
    and the preorder/compatibility rules commute with permuting modes.
    Thus $P(\Sigma\cup\Gamma^{\Sym})$ is closed under \axref{rmm-sym}{(Symmetry)}, so $P'\subseteq P(\Sigma\cup\Gamma^{\Sym})$.

    Combining the inclusions gives $P' = P(\Sigma\cup\Gamma^{\Sym})$.
\end{proof}

Putting the preceding formalization together, we obtain a clean ``complete'' statement for the generalized asymptotic sum inequality in this relativized setting.
\begin{proposition}[Generalized asymptotic sum inequality]
    Let $\Gamma$ be a collection of inequalities in $\RMM$.
    Allowing derivations using \axref{sp-preorder}{(Preorder)}, \axref{sp-compatibility}{(Compatibility)}, and \axref{rmm-sym}{(Symmetry)},
    the optimal bound on $\omega$ obtainable from $\Gamma$ equals $\omega_{\Gamma^{\Sym}}$. Equivalently,
    it is the maximum of $\theta_1+\theta_2+\theta_3$ over the feasible region $[0,1]^3 \cap \Delta_{\Gamma^{\Sym}}$.
\end{proposition}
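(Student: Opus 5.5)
The proof splits into two parts: first identify the asymptotic-rank-based optimum with a maximization over the spectrum, then identify that spectrum with the feasible region. By the preceding lemma, derivations from $\Gamma$ using \axref{sp-preorder}{(Preorder)}, \axref{sp-compatibility}{(Compatibility)} and \axref{rmm-sym}{(Symmetry)} give exactly the preorder $P(\Sigma\cup\Gamma^{\Sym})$. So ``the optimal bound on $\omega$ obtainable from $\Gamma$'' is, by the derivation-only interpretation, the infimum over $n$ of $\log_n$ of the least $r$ with $\Sigma\cup\Gamma^{\Sym}\vdash \ang{n,n,n}\le r$.

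\emph{Reduction to the spectrum.} We first show that this infimum equals $\log_2 \AR_{P(\Sigma\cup\Gamma^{\Sym})}(\ang{2,2,2}) = \omega_{\Gamma^{\Sym}}$.
\begin{itemize}
\item For ``$\le$'': $\ang{2,2,2}^{\otimes k}=\ang{2^k,2^k,2^k}$, and rank is computed with respect to this preorder, so the infimum is at most $\log_2\AR(\ang{2,2,2})$.
\item For ``$\ge$'': a derived bound $\ang{n,n,n}\le r$ gives $\phi(\ang{n,n,n})=n^{\theta_1+\theta_2+\theta_3}\le r$ for every spectral point $\phi$. Hence $\max(\theta_1+\theta_2+\theta_3)\le\log_n r$.
\end{itemize}
Strassen duality in the abstract semiring, together with the parametrization of \Cref{prop:param-specmm} (which relies only on \axref{sigma-monotonicity}{(Monotonicity)}), gives $\omega_{\Gamma^{\Sym}}=\max_{\specMM(\Gamma^{\Sym})}(\theta_1+\theta_2+\theta_3)$. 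Both directions therefore meet at this maximum.

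\emph{Identifying the spectrum.} Here I would invoke the earlier proposition: whenever $P(\Sigma\cup\Gamma^{\Sym})$ is a Strassen preorder, $\specMM(\Gamma^{\Sym})=[0,1]^3\cap\Delta_{\Gamma^{\Sym}}$. This yields the stated equivalence. I would also observe that $\Delta_{\Gamma^{\Sym}}=\bigcap_\sigma \Delta_\Gamma^\sigma$, since $a^\sigma(\theta)=a(\theta_{\sigma(1)},\theta_{\sigma(2)},\theta_{\sigma(3)})$ up to the indexing convention. This matches the Step~2 relaxation described earlier.

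\emph{Main obstacle.} The delicate point is justifying the relativized Strassen duality and the formula $\AR=\max\phi$. We must check that $P(\Sigma\cup\Gamma^{\Sym})$ remains a Strassen preorder, in particular faithfulness. I would argue that $\Gamma^{\Sym}$ holds in the tensor world whenever $\Gamma$ does, because mode permutation preserves degeneration. If faithfulness failed, the statement would be vacuous, so I would assume it as in the earlier standing assumption. The remaining subtlety is that the maximum of $\theta_1+\theta_2+\theta_3$ is attained. This follows from compactness of $[0,1]^3\cap\Delta_{\Gamma^{\Sym}}$: it is closed because each constraint $a(\theta)\le b(\theta)$ is continuous.
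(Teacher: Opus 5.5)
Your proposal is correct and follows the paper's route. The paper assembles the statement from three pieces: the symmetry-closure lemma, which reduces to $P(\Sigma\cup\Gamma^{\Sym})$; relativized Strassen duality with the parametrization, giving $\omega_{\Gamma^{\Sym}}=\max_{\specMM(\Gamma^{\Sym})}(\theta_1+\theta_2+\theta_3)$; and the identification $\specMM(\Gamma^{\Sym})=[0,1]^3\cap\Delta_{\Gamma^{\Sym}}$. Your two-sided comparison of the derivation-only infimum with $\log_2\AR(\ang{2,2,2})$, and your check that faithfulness survives passing from $\Gamma$ to $\Gamma^{\Sym}$, simply spell out steps the paper leaves implicit.
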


\section{Analysis facts}

\begin{lemma} \label{lem:varpi}
    Let $d\ge 3$ and $\varpi\ge 2$. Define
    \[
    f(\theta)=d^{\varpi+\theta}+d^{2\theta}-(d^3-d^2)^\theta,
    \qquad \theta\in[0,\varpi/3].
    \]
    Then $f$ is strictly increasing on $[0,\varpi/3]$, hence
    \[
    \max_{0\le \theta\le \varpi/3} f(\theta)=f(\varpi/3).
    \]
\end{lemma}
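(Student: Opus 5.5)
The plan is to show $f'(\theta)>0$ on $[0,\varpi/3]$ directly. Write $\kappa_d \coloneq \ln(d^3-d^2)/\ln d = 3+\ln(1-1/d)/\ln d$, so $\kappa_d<3$. Differentiating gives
\[
f'(\theta)=\ln d\cdot\Bigl(d^{\varpi+\theta}+2d^{2\theta}-\kappa_d\,(d^3-d^2)^{\theta}\Bigr).
\]
Since $(d^3-d^2)^\theta\le d^{3\theta}$ for $\theta\ge 0$, it suffices to prove $d^{\varpi+\theta}+2d^{2\theta}>\kappa_d\, d^{3\theta}$. Dividing by $d^{2\theta}$ and setting $y\coloneq d^\theta\ge 1$, this becomes
\[
d^{\varpi-\theta}+2>\kappa_d\, y. \tag{$*$}
\]

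The only place the constraint $\theta\le\varpi/3$ enters is in bounding $d^{\varpi-\theta}$ from below. It gives two useful estimates. First, $d^{\varpi-\theta}=d^{\varpi-3\theta}y^2\ge y^2$. Second, $d^{\varpi-\theta}=d^{\varpi-2\theta}\,y\ge d^{\varpi/3}y\ge d^{2/3}y$, using $\varpi\ge 2$.

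For large $d$ we use the second estimate. If $d\ge 5$, then $d^{2/3}\ge 5^{2/3}>2.92$, while $\kappa_d\le 3$ in general and $\kappa_5=\ln 100/\ln 5<2.87$. For $d\ge 6$ we even have $d^{2/3}>3>\kappa_d$. In either case $d^{\varpi-\theta}\ge d^{2/3}y>\kappa_d y$, and $(*)$ follows, with the $+2$ to spare.

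For small $d$ we use the first estimate, which reduces $(*)$ to the quadratic inequality $y^2-\kappa_d y+2>0$. This holds for all real $y$ whenever its discriminant is negative, i.e. $\kappa_d^2<8$. One checks $\kappa_3=\ln 18/\ln 3\approx 2.63$ and $\kappa_4=\ln 48/\ln 4\approx 2.79$, both below $2\sqrt2\approx 2.83$. This settles $d=3,4$.

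The main obstacle is exactly this small-$d$ regime. There the crude bound $\kappa_d<3$ does not suffice, so we must use the precise numerical values of $\kappa_3$ and $\kappa_4$ and exploit the $+2d^{2\theta}$ term via the quadratic inequality. I expect $d=4$ to be the tightest case, with margin $2.83$ versus $2.79$. If that margin proved too thin, the fallback is to keep the neglected factor $(1-1/d)^\theta$ from $(d^3-d^2)^\theta$ rather than discarding it.

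With $f'>0$ on $[0,\varpi/3]$ established, $f$ is strictly increasing there and its maximum is $f(\varpi/3)$.
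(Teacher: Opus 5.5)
Your proof is correct and follows the same route as the paper: differentiate, replace $(d^3-d^2)^\theta$ by $d^{3\theta}$, and show that $d^{\varpi-2\theta}+2d^{-\theta}$ exceeds $\kappa_d=\ln(d^3-d^2)/\ln d$. The only difference is the last estimate: the paper avoids your case split and the numerical values of $\kappa_3,\kappa_4,\kappa_5$ by noting that both terms decrease in $\theta$, so using $\varpi\ge 2$ together with the $+2d^{-\theta}$ term gives the uniform bound $d^{\varpi-2\theta}+2d^{-\theta}\ge d^{\varpi/3}+2d^{-\varpi/3}\ge d^{2/3}+2d^{-2/3}>3>\kappa_d$ for every $d\ge 3$.
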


\begin{proof}
    Differentiate:
    \[
    f'(\theta)=(\ln d)\,d^{\varpi+\theta}+2(\ln d)\,d^{2\theta}
    -\ln(d^3-d^2)\,(d^3-d^2)^\theta.
    \]
    Write $d^3-d^2=d^3(1-1/d)$ and set
    \[
    \delta_d\coloneq-\frac{\ln(1-1/d)}{\ln d}>0,
    \quad\text{so}\quad
    \ln(d^3-d^2)=(3-\delta_d)\ln d.
    \]
    Also $(d^3-d^2)^\theta=d^{3\theta}(1-1/d)^\theta\le d^{3\theta}$. Hence
    \[
    \frac{f'(\theta)}{\ln d}
    \ge d^{\varpi+\theta}+2d^{2\theta}-(3-\delta_d)d^{3\theta}
    = d^{3\theta}\Bigl(d^{\varpi-2\theta}+2d^{-\theta}-(3-\delta_d)\Bigr).
    \]
    For $\theta\in[0,\varpi/3]$,
    \[
    d^{\varpi-2\theta}+2d^{-\theta}\ge d^{\varpi/3}+2d^{-\varpi/3}\ge d^{2/3}+2d^{-2/3},
    \]
    where the last inequality uses that $x\mapsto d^x+2d^{-x}$ is increasing for
    $x\ge \frac{\ln\sqrt2}{\ln d}$ and $\frac{\ln\sqrt2}{\ln d}<\frac23\le \varpi/3$.
    Therefore,
    \[
    \frac{f'(\theta)}{\ln d}
    \ge d^{3\theta}\Bigl(d^{2/3}+2d^{-2/3}-3\Bigr).
    \]
    Let $g(d)=d^{2/3}+2d^{-2/3}$. Since
    \[
    g'(d)=\frac{2}{3}d^{-5/3}\bigl(d^{4/3}-2\bigr)>0\quad(d\ge 3)
    \]
    and $g(3)=3^{2/3}+2\cdot 3^{-2/3}>3$, we have $d^{2/3}+2d^{-2/3}-3>0$ for all $d\ge 3$.
    Thus $f'(\theta)>0$ on $[0,\varpi/3]$, so $f$ is strictly increasing there and the maximum
    is attained at $\theta=\varpi/3$.
\end{proof}

\begin{lemma} \label{lem:f_varpi_increasing}
    Let $d\ge 3$. For $\varpi\ge 2$, define
    \[
        F(\varpi) \coloneq d^{4\varpi/3} + d^{2\varpi/3} - (d^3-d^2)^{\varpi/3}.
    \]
    Then $F$ is strictly increasing on $[2,3]$.
\end{lemma}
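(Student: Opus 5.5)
The plan is to differentiate and compare term by term, reducing everything to base-$d$ powers. Write $x \coloneq \varpi/3 \in [2/3, 1]$ and set
\[
    F(\varpi) = d^{4x} + d^{2x} - (d^3-d^2)^{x},
\]
so it suffices to show that $G(x) \coloneq d^{4x} + d^{2x} - (d^3-d^2)^x$ is strictly increasing on $[2/3,1]$. As in \Cref{lem:varpi}, I will write $\ln(d^3-d^2) = (3-\delta_d)\ln d$ with $\delta_d>0$, and use $(d^3-d^2)^x \le d^{3x}$. Differentiating gives
\[
    \frac{G'(x)}{\ln d} = 4d^{4x} + 2d^{2x} - (3-\delta_d)(d^3-d^2)^x
    \ge 4d^{4x} + 2d^{2x} - 3d^{3x}.
\]

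The key step is to show the right-hand side is positive. Factoring out $d^{3x}$ leaves $4d^{x} + 2d^{-x} - 3$. For $x\ge 0$ and $d\ge 3$ we have $d^x\ge 1$, so $4d^x \ge 4 > 3$ and the expression is strictly positive. Hence $G'(x)>0$ on $[2/3,1]$, and in fact on all of $[0,\infty)$. Thus $F$ is strictly increasing on $[2,3]$, and more generally for all $\varpi\ge 0$.

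There is no real obstacle here: the dominant term $d^{4x}$ carries coefficient $4$ and exponent $4x \ge 3x$, so it alone beats the subtracted $3d^{3x}$. The only care needed is the chain rule factor $1/3$ from $x=\varpi/3$, which does not affect the sign. If one wants to avoid $\delta_d$ entirely, the cruder bound $\ln(d^3-d^2)\le 3\ln d$ together with $(d^3-d^2)^x\le d^{3x}$ already suffices, as used above.
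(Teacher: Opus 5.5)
Your proof is correct and follows essentially the same route as the paper: differentiate, bound $\ln(d^3-d^2)\,(d^3-d^2)^{\varpi/3}$ above by $3\ln d\cdot d^{\varpi}$, factor out a power of $d$, and check the remaining bracket is positive. The only difference is that you factor out $d^{3x}=d^{\varpi}$ instead of the paper's $d^{2\varpi/3}$. This makes positivity immediate from $4d^{x}\ge 4>3$, removes the need for the interval bounds on $d^{\varpi/3}$, and gives monotonicity for all $\varpi\ge 0$.
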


\begin{proof}
    Differentiate:
    \[
        \frac{\dif}{\dif\varpi} F(\varpi)
        = \frac{4}{3}(\ln d) d^{4\varpi/3}
        + \frac{2}{3}(\ln d) d^{2\varpi/3}
        - \frac{1}{3}\ln(d^3-d^2) (d^3-d^2)^{\varpi/3}.
    \]
    Write $d^3-d^2 = d^3(1-1/d)$ and set
    \[
        \delta_d \coloneq -\frac{\ln(1-1/d)}{\ln d} > 0,
        \qquad \text{so that} \qquad \ln(d^3-d^2) = (3-\delta_d)\ln d.
    \]
    Moreover, $(d^3-d^2)^{\varpi/3} = d^{\varpi}(1-1/d)^{\varpi/3} \le d^{\varpi}$.
    Hence
    \[
        \frac{3}{\ln d}\,\frac{\dif}{\dif\varpi} F(\varpi)
        \ge 4 d^{4\varpi/3} + 2 d^{2\varpi/3} - (3-\delta_d) d^{\varpi}.
    \]
    Factor out $d^{2\varpi/3}$ to obtain
    \[
        \frac{3}{\ln d}\,\frac{\dif}{\dif\varpi} F(\varpi)
        \ge d^{2\varpi/3}\Bigl(4 d^{2\varpi/3} + 2 - (3-\delta_d)d^{\varpi/3}\Bigr).
    \]
    For $\varpi\in[2,3]$, we have $d^{\varpi/3}\in[d^{2/3},d]$, hence
    \[
        4 d^{2\varpi/3} + 2 - (3-\delta_d)d^{\varpi/3}
        \ge 4 d^{4/3} + 2 - (3-\delta_d)d.
    \]
    Using $\delta_d>0$ and $d\ge 3$, the right-hand side satisfies
    \[
        4 d^{4/3} + 2 - (3-\delta_d)d
        > 4 d^{4/3} + 2 - 3d
        = d\bigl(4 d^{1/3} - 3\bigr) + 2 > 0.
    \]
    Therefore $\frac{\dif}{\dif\varpi} F(\varpi) > 0$ for all $\varpi\in[2,3]$, and $F$ is strictly
    increasing on this interval.
\end{proof}

\begin{lemma} \label{lem:h_increasing}
    Let $d\ge 3$, $p\in\bbR_{\ge 0}$, and $\varpi\ge 2$. Define
    \[
        h_\varpi(\theta) \coloneq d^{\varpi+\theta} - (2^\theta-1)p\,d^{2\theta},
        \qquad \theta\in[0,\varpi/3].
    \]
    If $p\le \lfloor d/3\rfloor$, then $h_\varpi$ is strictly increasing on $[0,\varpi/3]$.
\end{lemma}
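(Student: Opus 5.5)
We show $h_\varpi'(\theta)>0$ on $[0,\varpi/3]$ by differentiating and comparing the positive term against the negative one, in the same way as \Cref{lem:varpi}. Since $h_\varpi$ is decreasing in $p$ and its derivative is affine in $p$, we may take $p=\lfloor d/3\rfloor\le d/3$ in the worst case; if $p=0$ the claim is trivial. Differentiating gives
\[
    h_\varpi'(\theta) = (\ln d)\,d^{\varpi+\theta} - p\,d^{2\theta}\Bigl((\ln 2)2^\theta + 2(\ln d)(2^\theta-1)\Bigr).
\]
Dividing by $d^{2\theta}\ln d$, it suffices to show
\[
    d^{\varpi-\theta} > p\Bigl(\tfrac{\ln 2}{\ln d}2^\theta + 2(2^\theta-1)\Bigr).
\]

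Both sides are monotone in $\theta$, in opposite directions. The left side decreases in $\theta$, so its minimum on the interval is at $\theta=\varpi/3$, where it equals $d^{2\varpi/3}$. The right side increases in $\theta$, so its maximum is also at $\theta=\varpi/3$. It therefore suffices to check the inequality at $\theta=\varpi/3$. Write $x=\varpi/3\ge 2/3$ and use $p\le d/3$. The target is
\[
    d^{2x} > \tfrac{d}{3}\Bigl(\tfrac{\ln 2}{\ln 3}2^x + 2^{x+1}-2\Bigr),
\]
where $\ln 2/\ln d\le \ln 2/\ln 3<1$ was used. A sufficient condition is $d^{2x-1}\ge 2^{x+1}$.

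The main obstacle is that the statement places no upper limit on $\varpi$, while the crude bound $d^{2x-1}\ge 2^{x+1}$ at the lower endpoint is delicate. At $x=2/3$ with $d=3$ it reads $3^{1/3}\approx1.44$ against $2^{5/3}\approx 3.17$, which fails. So I would keep the exact constant and not the crude bound. At $x=2/3$ the needed inequality is
\[
    d^{1/3} > \tfrac13\Bigl(0.631\cdot 1.587 + 2(1.587-1)\Bigr)\approx 0.72,
\]
which holds comfortably since $d^{1/3}\ge 3^{1/3}$.

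For general $x\ge 2/3$ I would compare growth rates. The ratio of left to right behaves like $d^{2x-1}/2^{x}$ times a constant. Its logarithmic derivative in $x$ is $2\ln d-\ln 2>0$ for $d\ge 3$, so the ratio increases in $x$. One just needs the bracket to grow at most like $2^x$, which holds because $2^{x+1}-2\le 2\cdot 2^x$. Hence the endpoint check at $x=2/3$ suffices, and strict positivity of $h_\varpi'$ on the whole interval follows. The point to verify carefully is the monotonicity-of-the-ratio step. Concretely, I would bound the bracket by $c\,2^x$ with $c=\ln2/\ln3+2$, which gives the sufficient condition $3d^{2x-1}>c\,2^x$, and the left-over-right of this is increasing in $x$. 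At $x=2/3$ the condition reads $3\cdot 3^{1/3}\approx4.33>2.63\cdot1.59\approx4.18$, which holds, barely, for $d=3$. For larger $d$ it is even easier.
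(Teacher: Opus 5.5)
Your proof is correct. It also proves the lemma exactly as stated, which the paper's own argument does not quite do.

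\textbf{How the paper argues.} It computes the same derivative and factors out $d^{2\theta}$, then treats the two sides separately. It bounds $d^{\varpi-\theta}\ge d^{2\varpi/3}\ge d^{4/3}$. It bounds the bracket using $2^\theta\le 2$, justified by ``$\varpi\le\omega<3$ in our applications''. That restriction is not among the hypotheses, which assume only $\varpi\ge 2$. A single check at $d=3$, namely $(\ln 3)3^{1/3}>\tfrac{2}{3}(\ln 2+\ln 3)$, then finishes.

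\textbf{How you argue.}
\begin{enumerate}
\item You evaluate both sides at the common point $\theta=\varpi/3=x$. This is valid because the left side decreases and the right side increases in $\theta$.
\item You bound the bracket by $c\,2^x$ with $c=2+\ln 2/\ln 3$.
\item You use that $3d^{2x-1}/(c\,2^x)$ is increasing in $x$, since $d^2>2$.
\item The endpoint check $3\cdot 3^{1/3}\approx 4.33>c\,2^{2/3}\approx 4.18$ then covers every $\varpi\ge 2$ and every $d\ge 3$.
\end{enumerate}

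\textbf{Trade-off.} The paper's separated bound is shorter and has more numerical slack. However, it only works on the range $\varpi\le 3$ needed for its application. Your coupled growth-rate comparison is what handles the unbounded range the statement allows. Its margin is thinner, mostly because you discard the $-2$ in $2^{x+1}-2$.

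When you write it up, drop the abandoned crude condition $d^{2x-1}\ge 2^{x+1}$ and the heuristic middle paragraph, and present only the final concrete argument.
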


\begin{proof}
    Differentiate:
	    \[
	        h_\varpi'(\theta)
	        = (\ln d)\,d^{\varpi+\theta} - p\,\frac{\dif}{\dif\theta}\bigl((2^\theta-1)d^{2\theta}\bigr).
	    \]
	    Using
	    \[
	        \frac{\dif}{\dif\theta}\bigl((2^\theta-1)d^{2\theta}\bigr)
	        = d^{2\theta}\Bigl((\ln 2)2^\theta + 2(\ln d)(2^\theta-1)\Bigr),
	    \]
	    we obtain
    \[
        h_\varpi'(\theta)
        = d^{2\theta}\Bigl((\ln d)\,d^{\varpi-\theta}
        - p\bigl((\ln 2)2^\theta + 2(\ln d)(2^\theta-1)\bigr)\Bigr).
    \]
    For $\theta\in[0,\varpi/3]$ we have $d^{\varpi-\theta}\ge d^{2\varpi/3}$ and, since $\varpi\le \omega<3$ in our applications, also $\theta\le 1$ and hence $2^\theta\le 2$ and $2^\theta-1\le 1$.
    Therefore,
    \[
        h_\varpi'(\theta)
        \ge d^{2\theta}\Bigl((\ln d)\,d^{2\varpi/3} - p(2\ln 2 + 2\ln d)\Bigr).
    \]
    Using $p\le \lfloor d/3\rfloor\le d/3$ and $\varpi\ge 2$ (so $d^{2\varpi/3}\ge d^{4/3}$), the bracket is at least
    \[
        (\ln d)d^{4/3} - \frac{d}{3}(2\ln 2 + 2\ln d)
        = d\Bigl((\ln d)d^{1/3} - \tfrac{2}{3}(\ln 2 + \ln d)\Bigr).
    \]
    For $d\ge 3$, we have $d^{1/3}>1$ and thus
    \[
        (\ln d)d^{1/3} - \tfrac{2}{3}(\ln 2 + \ln d)
        \ge (\ln 3)3^{1/3} - \tfrac{2}{3}(\ln 2 + \ln 3) > 0.
    \]
    Hence $h_\varpi'(\theta)>0$ on $[0,\varpi/3]$, so $h_\varpi$ is strictly increasing there.
\end{proof}

\begin{lemma} \label{lem:g_increasing}
    Let $d\ge 3$, $p\in\bbR_{\ge 0}$, and define for $\varpi\ge 2$,
    \[
        g(\varpi) \coloneq d^{4\varpi/3} - (2^{\varpi/3}-1)p\,d^{2\varpi/3}.
    \]
    If $p\le \lfloor d/3\rfloor$, then $g$ is strictly increasing on $[2,3]$.
\end{lemma}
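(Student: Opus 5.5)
We argue by differentiating in $\varpi$ and showing the derivative is positive on $[2,3]$, following the template of \Cref{lem:f_varpi_increasing}. First compute
\[
    g'(\varpi) = \tfrac{4}{3}(\ln d)\,d^{4\varpi/3}
    - p\Bigl(\tfrac{\ln 2}{3}\,2^{\varpi/3} + \tfrac{2}{3}(\ln d)\,(2^{\varpi/3}-1)\Bigr)d^{2\varpi/3},
\]
and factor out $\tfrac13 d^{2\varpi/3}>0$. The positivity then reduces to showing that
\[
    4(\ln d)\,d^{2\varpi/3} - p\Bigl((\ln 2)\,2^{\varpi/3} + 2(\ln d)(2^{\varpi/3}-1)\Bigr) > 0 .
\]

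Next we bound each piece crudely in the worst case over $\varpi\in[2,3]$. The positive term is increasing in $\varpi$, so it is at least $4(\ln d)\,d^{4/3}$. The subtracted bracket is increasing in $\varpi$, and for $\varpi\le 3$ we have $2^{\varpi/3}\le 2$ and $2^{\varpi/3}-1\le 1$. Hence the bracket is at most $2\ln 2 + 2\ln d$. Using $p\le\lfloor d/3\rfloor\le d/3$ (the case $p=0$ being trivial), the expression is at least
\[
    4(\ln d)\,d^{4/3} - \tfrac{d}{3}\,(2\ln 2 + 2\ln d)
    = d\Bigl(4(\ln d)\,d^{1/3} - \tfrac23(\ln 2+\ln d)\Bigr).
\]

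Finally, check that this is positive for $d\ge 3$. Since $d^{1/3}>1$, it suffices that $4\ln d > \tfrac23(\ln 2+\ln d)$, i.e.\ $\tfrac{10}{3}\ln d > \tfrac23\ln 2$. This clearly holds for $d\ge 3$, and even with a large margin compared to the analogous step in \Cref{lem:h_increasing}. So $g'>0$ on $[2,3]$, and $g$ is strictly increasing there.

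The only mildly delicate point is getting the derivative of $(2^{\varpi/3}-1)d^{2\varpi/3}$ right, including the factors of $1/3$. Beyond that, the argument is loose enough that the crude bounds $2^{\varpi/3}\le 2$ and $p\le d/3$ suffice without any case analysis.
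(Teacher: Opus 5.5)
Your proof is correct and follows the paper's argument step for step: differentiate, factor out $\tfrac13 d^{2\varpi/3}$, use $2^{\varpi/3}\le 2$, $2^{\varpi/3}-1\le 1$, $p\le d/3$, and $d^{2\varpi/3}\ge d^{4/3}$, then check that $4(\ln d)d^{1/3}-\tfrac23(\ln 2+\ln d)>0$. Your last step uses $d^{1/3}>1$ to reduce this to $\tfrac{10}{3}\ln d>\tfrac23\ln 2$, which is a slightly more explicit version of the paper's remark that the bracket is ``positive at $d=3$ and increasing in $d$''.
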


\begin{proof}
	    Differentiate:
	    \[
	        g'(\varpi)=\frac{4}{3}(\ln d)\,d^{4\varpi/3}
	        - \frac{p}{3}\,\frac{\dif}{\dif\varpi}\Bigl((2^{\varpi/3}-1)d^{2\varpi/3}\Bigr).
	    \]
	    Since
	    \[
	        \frac{\dif}{\dif\varpi}\Bigl((2^{\varpi/3}-1)d^{2\varpi/3}\Bigr)
	        = d^{2\varpi/3}\Bigl(\tfrac{\ln 2}{3}2^{\varpi/3} + \tfrac{2\ln d}{3}(2^{\varpi/3}-1)\Bigr),
	    \]
	    we can write
    \[
        g'(\varpi)=\frac{1}{3}d^{2\varpi/3}\Bigl(4(\ln d)\,d^{2\varpi/3}
        - p\bigl((\ln 2)2^{\varpi/3} + 2(\ln d)(2^{\varpi/3}-1)\bigr)\Bigr).
    \]
    For $\varpi\in[2,3]$ we have $2^{\varpi/3}\le 2$ and $2^{\varpi/3}-1\le 1$, hence
    \[
        g'(\varpi)\ge \frac{1}{3}d^{2\varpi/3}\Bigl(4(\ln d)\,d^{2\varpi/3} - p(2\ln 2+2\ln d)\Bigr).
    \]
    Using $p\le d/3$ and $\varpi\ge 2$ (so $d^{2\varpi/3}\ge d^{4/3}$), it suffices to check
    \[
        4(\ln d)d^{4/3} - \frac{d}{3}(2\ln 2+2\ln d) > 0.
    \]
    This is equivalent to
    \[
        d\Bigl(4(\ln d)d^{1/3} - \tfrac{2}{3}(\ln 2+\ln d)\Bigr) > 0,
    \]
    which holds for all $d\ge 3$ (e.g., already at $d=3$ the bracket is positive, and it increases with $d$).
    Therefore $g'(\varpi)>0$ on $[2,3]$, and $g$ is strictly increasing there.
\end{proof}

\end{document}